\newtcolorbox{header}{
  fontupper = \it\color{black}, 
  boxrule = 0.5pt,
  colframe = black,
  colback = white,
  rounded corners,
  arc = 5pt   
}
\newcommand{\almosttextwidth}{0.95\textwidth}
\newcommand{\From}{~\textbf{from}~ }
\newcommand{\To}{~\textbf{to}~ }
\newcommand{\fun}[1]{\textsc{#1}} 
\newcommand{\arr}[1]{\texttt{#1}} 
\newcommand{\var}[1]{\texttt{#1}} 
\newtheorem{lemma}{Lemma}
\newtheorem{definition}{Definition}
\newtheorem{theorem}{Theorem}
\newtheorem{proposition}{Proposition}
\newtheorem{corollary}{Corollary}
\newtheorem{remark}{Remark}
\newtheorem*{notation}{Notation}
\newtheorem*{convention}{Convention}
\newtheorem{example}{Example}
\newtheorem{claim}{Claim}
\newtheorem{openpb}{Open problem}
\newtheorem{fact}{Fact}
\def\lc{\left\lceil}   
\def\rc{\right\rceil}
\def\lf{\left\lfloor}   
\def\rf{\right\rfloor}
\newcommand{\modop}{\;\mathtt{mod}\;}
\newcommand{\divop}{\;\mathtt{div}\;}
\newcommand{\xorop}{\mathtt{xor}}
\newcommand{\orop}{\mathtt{or}}
\newcommand{\andop}{\mathtt{and}}
\newcommand{\notop}{\mathtt{not}}
\newcommand{\subop}{\dot{-}}
\newcommand{\op}{\mathtt{op}}
\DeclareMathOperator{\linppop}{\ensuremath{\textsc{LinPP}\!}.\ensuremath{\!\mathtt{op}}}
\DeclareMathOperator{\cstpop}{\ensuremath{\textsc{CstP}\!}.\ensuremath{\!\mathtt{op}}}
\DeclareMathOperator{\cstpopprime}{\ensuremath{\textsc{CstP}2}.\ensuremath{\!\mathtt{op}}}
\DeclareMathOperator{\cdlin}{\ensuremath{\textsc{Cst}}\ensuremath{\textsc{Delay}_{\mathtt{lin}}}}
\DeclareMathOperator{\cpp}{\ensuremath{\fun{CstPP}}}
\newcommand{\OP}{\mathtt{OP}}
\begin{document}

\title{Constant Time with Minimal Preprocessing,\\ a Robust and Extensive Complexity Class\footnote{In Honour of Yuri Gurevich at the occasion of his 85th Birthday.}
}

\date{\today}

\author{Étienne Grandjean\footnote{Corresponding author, \texttt{etienne.grandjean@unicaen.fr}.} \\
\small{Univ. Caen Normandie, ENSICAEN, CNRS, Normandie Univ. GREYC UMR 6072, F-14000, France}\\ 
Louis Jachiet\footnote{\texttt{louis.jachiet@telecom-paris.fr}.} \\
\small{LTCI, Télécom Paris, Institut polytechnique de Paris, F-91120, France}.
}

\maketitle

%%%%%%%%%%%%%%
\tableofcontents
%%%%%%%%%%%%%%

\paragraph{Abstract:} 
Although the notion of computation with preprocessing is commonly used in computer science, its computational complexity has been little studied.  
In this paper, we study the class $\cpp$ of operations $\mathtt{op}: \mathbb{N}^k\to\mathbb{N}$, of any fixed arity $k\ge 1$, satisfying the following property: 
for each fixed integer $d\ge 1$, there exists an algorithm for a RAM machine which, for any input integer 
$N\ge 2$, 

• pre-computes some tables in $O(N)$ time,
 
• then reads $k$ operands $x_1,\ldots,x_k<N^d$ and computes $\op(x_1,\dots,x_k)$ in \emph{constant time}.  

\medskip
We show that the $\cpp$ class is robust and extensive and satisfies several closure properties.  
It is invariant depending on whether the set of primitive operations of the RAM is $\{+\}$, or $\{+,-,\times,\mathtt{div},\mathtt{mod}\}$, or any set of operations in $\cpp$ provided it includes $+$.  
We prove that the $\cpp$ class is closed under composition 
and, for fast growing functions, is closed under inverse.  
We also show that in the definition of $\cpp$ the constant-time procedure can be reduced to a single return instruction.
Finally, we establish that linear preprocessing time is not essential in the definition of
the $\cpp$ class: this class is not modified if the preprocessing time is increased to $O(N^c)$, for any fixed $c>1$, or conversely, is reduced to $N^{\varepsilon}$, for any positive $\varepsilon<1$ (provided the set of primitive operation includes $+$, $\mathtt{div}$ and $\mathtt{mod}$). 
To complete the picture, we demonstrate that the $\cpp$ class degenerates if the preprocessing time reduces to $N^{o(1)}$.  

\smallskip
It is significant that the robustness of $\cpp$ entails the robustness of more classical complexity classes on RAMs such as $\textsc{LinTime}$, and more generally $\textsc{Time}(T(N))$ for usual time functions~$T$, or even the enumeration complexity class $\cdlin$.

%%%%%%%%%%%%%%%%%%%%%%%%%%%%%

\section{Introduction}
The notion of computation with preprocessing is in common use in
computer science.  The goal is to answer questions asked by users as
quickly as possible.  The price to pay is a computation of auxiliary
data, an off-line phase which may be quite long because it is carried
out once beforehand. This has long been the case in database systems
where a set of indexes are preprocessed in order to instantly answer the most
frequently asked queries~\cite{Date81,UllmanW97}.  In
artificial intelligence, what we call knowledge compilation serves a
comparable purpose as Darwiche and Marquis~\cite{DarwicheM02,
  Marquis15} write: ``The key motivation behind knowledge compilation
is to push as much of the computational overhead into the off-line
phase, which is amortized over all on-line queries.''

\medskip
Since reading an input of size $N$ requires time $N$, it is well accepted that the minimum time complexity class of $\mathtt{Input}\mapsto\mathtt{Output}$ problems is linear time 
$O(N)$ which turns out to be a robust complexity class~\cite{Grandjean96,GrandjeanSchwentick02,Schwentick97}.  
For problems computable with preprocessing, it seems clear that the minimum preprocessing time is also $O(N)$ to read the input and the minimum computation time is $O(1)$ if the question asked by the user and its response are of constant size~\cite{BaganDGO08}.

\medskip
In the most basic version, the preprocessing consists of constructing
an array~$T$ of $O(N)$ integers in linear time and the on-line phase
consists of directly accessing an element~$T[a]$ for an index $a$
given by a user.  As an example, using addition as the only primitive
operation (``primitive" here means ``of unit time cost'') one can
construct, for each input integer $N>0$, an array
$\mathtt{SquareRoot}[0..N]$ such that $\mathtt{SquareRoot}[x]\coloneqq
\lf \sqrt{x} \rf$, in $O(N)$ time.  This allows a user to obtain $\lf
\sqrt{a} \rf$ by direct access, for any integer $a\le N$ that she
needs.

\medskip
It seems natural and useful to ask whether the following generalization is possible.
For any fixed integer $d\ge 1$, is there an algorithm, using addition as the only primitive operation, such that, for any input integer $N$, the algorithm computes some tables in time $O(N)$ and then, by using these tables and after reading any integer $x<N^d$, the algorithm computes~$\lf \sqrt{x} \rf$ in \emph{constant time}?
In~\cite{GrandjeanJachiet22} we answered this question positively.
The solution is not trivial because only tables of size $O(N)$, and not of size $O(N^d)$, are authorized.

\medskip
In this paper which complements paper~\cite{GrandjeanJachiet22} but can be read independently, we study the class of operations $\mathtt{op}: \mathbb{N}^k\to\mathbb{N}$, of any arity $k\ge 1$, satisfying the following property:
for each fixed integer $d\ge 1$, there exists an algorithm $\mathcal{A}$
such that, for all input integer $N\ge 2$, 
\begin{itemize}
\item \emph{Preprocessing (off-line phase):} algorithm $\mathcal{A}$ computes some tables in $O(N)$ time;
\item \emph{Operation (on-line phase):} algorithm $\mathcal{A}$ reads $k$ integers $x_1,\ldots,x_k<N^d$, then computes $\mathtt{op}(x_1,\ldots,x_k)$ in \emph{constant time}.  
\end{itemize}
For a set (or list) of operations $\mathtt{OP}$, let us denote $\cpp(\OP)$ 
(i.e. \emph{constant time with preprocessing} and operations in $\OP$, called $\mathcal{M}[\OP]$ 
in~\cite{GrandjeanJachiet22}) the class of operations computed by such an algorithm 
$\mathcal{A}$ if the primitive operations used by $\mathcal{A}$, in its off-line and on-line phases, belong to~$\mathtt{OP}$.

\medskip \noindent
In~\cite{GrandjeanJachiet22}, we proved that many operations belong to the class $\cpp(+)$:
\begin{itemize}
\item the four usual operations, addition, subtraction, multiplication and Euclidean division, but also
the logarithm function $(x,y)\mapsto \lf \log_x y\rf$ and the $c$th root $x \mapsto \lf x^{1/c} \rf$, for each fixed integer $c\ge 2$;
\item concatenation, shift and bitwise Boolean operations 
$\xorop,\orop,\andop,\notop$, etc., and more generally, all the operations that can be computed in linear time on Turing machines, or even more generally, on one-dimensional cellular automata~\cite{GrandjeanRT12,Terrier12}.
\end{itemize}

\noindent
The results of this paper are twofold: 
\begin{enumerate}

\item Most of the paper is devoted to establishing that the complexity class $\cpp(+)$, abbreviated as $\cpp$, is robust and extensive: 
\begin{itemize}

\item It is closed in the following sense: as demonstrated in~\cite{GrandjeanJachiet22} and in Sections~\ref{sec:closure} and~\ref{sec:timesDiv}, it is not modified if not only addition, but also all operations still proven to belong to this class (subtraction, multiplication, division, logarithm, etc.) are allowed as primitive operations in the algorithms; 
formally, if~$+\in \mathtt{OP}$ and $\mathtt{op}\in \cpp(\OP)$ then 
$\cpp(\mathtt{OP}\cup\{\mathtt{op}\})= \cpp(\mathtt{OP})$; 
in particular, this implies $\cpp =$ \linebreak
$\cpp(+,-,\times,\mathtt{div},\mathtt{mod})$;
\item It is not modified if the preprocessing time is reduced to $O(N^{1/c})$, 
for any fixed $c>1$, if the primitive operations are not only $+$ but also 
$\mathtt{div}$ and $\mathtt{mod}$, or increased to $O(N^c)$, for any fixed $c>1$, see Section~\ref{sec: N->N^1/c}
(moreover, this result is maximal because allowing a non-polynomial time preprocessing strictly expands the $\cpp$ class, see Appendix~\ref{app:NonPol-CstPP});
\item The constant-time procedure can be reduced to an expression (a return instruction) involving only additions and direct memory accesses (Section~\ref{sect:expression});
\item The $\cpp$ class is closed under composition (if the intermediate results are polynomial) and it is also closed under inverse for fast growing functions in the following sense: 
if $f:\mathbb{N}\to\mathbb{N}$ is an exponentially growing function with its inverse function $f^{-1}$ defined as $f^{-1}(x)\coloneqq \min\{y\mid f(y)\ge x\}$, then we have the equivalence $f\in\cpp \iff f^{-1}\in\cpp$ 
(Section~\ref{sec:logf}). 

\end{itemize}

\item Furthermore, Section~\ref{sec:minimality} proves that the class $\cpp$ is minimal in the following two senses: 
\begin{itemize}
\item If the preprocessing time is reduced to $N^{o(1)}$, instead of
  $N^{\varepsilon}$ for any $\varepsilon\in]0,1]$, then the class no
    longer even contains the multiplication operation (Subsection~\ref{subsec:N^o(1)});
\item If addition is replaced, as the only primitive operation, by a set of unary operations (e.g., the successor and predecessor functions), then the class no longer contains the addition operation, even though we allow preprocessing to use any space and time (Subsection~\ref{subsec:unary}).
\end{itemize}

\end{enumerate}

A crucial tool to establish the robustness of the $\cpp$ class is the ``Fundamental Lemma'' of~\cite{GrandjeanJachiet22} which we recall in this paper with its proof (Lemma~\ref{lemma:fund}). 
This technical lemma combines the notion of faithful (or lock-step) simulation, a very precise concept designed by Yuri Gurevich~\cite{DexterDG97,Gurevich93}, with that of linear preprocessing.
Furthermore, this lemma also allows us to establish the robustness of other complexity classes such as 
$\textsc{LinTime}$~\cite{Grandjean96,GrandjeanSchwentick02,Schwentick97} and 
$\cdlin$~\cite{Durand20,Segoufin14} (Theorems~\ref{th:transitivity2} and~\ref{th:+=+...mod}) and even nonlinear time complexity classes (Corollary~\ref{cor:invTimeN1/c} and Appendix~\ref{app:nonlinear}).

\medskip \noindent
\emph{Comparison with literature:} 
Since we study classical arithmetic operations, addition, multiplication, division, square root, etc., the reader may wonder how our algorithms compare to those of other communities, particularly in computer algebra, see for example~\cite{vonzurGathenG2013}. 
The \emph{setting} and the \emph{goal} of our results are \emph{completely different}.

While the algorithms in the literature deal with ``long integers'' whose reference is the binary length of the operands,  
our algorithms deal with ``short integers'', i.e. integers \emph{polynomial in the reference integer} $N$, which is often the size of the input. 
Equivalently, our operands have binary length $O(\log N)$ and are represented in base~$N$ with O(1) digits.
This is a necessary condition for obtaining \emph{constant-time} algorithms.

The main difference is the following: 
while the goal of arithmetic algorithms in the classical setting~\cite{AhoHU74,CormenLRS09,Knuth69,SandersMDD19} is to \emph{minimize the computation time} with respect to the length (or value) of the operands, our sole goal is to obtain \emph{constant-time algorithms modulo some preprocessing}\footnote{
However, the constant-time algorithm (with preprocessing) that we present in this paper for division is exactly a copy of  the tricky algorithm presented in Knuth's book~\cite{Knuth69}, Section~4.3.1, and attributed to Pope and Stein~\cite{PopeS60}. 
(Additionally, to achieve constant time, we first convert the operands to a suitable base and complete the algorithm by pre-computing some tables.)}.

%%%%%%%%%%%%%%%%%%%%%%%%%%%%%%%%%%%%%
\section{Preliminaries}\label{sec:preliminaries}

Our central object in this paper is the class of operations computable in constant time with linear-time preprocessing.

 \medskip \noindent
{\bf Polynomial and non-polynomial operations.} It will be useful to distinguish arithmetic operations according to their growth.

\begin{definition}[polynomial operation]\label{def:opN,d_pol}
An operation $\op:\mathbb{N}^k\to \mathbb{N}$ of arity $k$ is called 
\emph{polynomial} if there exists an integer $q$ such that for all integers $N\ge 2$ and 
$x_1,\dots,x_k<N$, we have $\op(x_1,\dots,x_k)<N^q$.
For all integers $d\ge 1$ and $N\ge 2$, we note $\op_{N,d}$ the restriction of $\op$ to operands $x_1,\dots,x_k$ less than $N^d$. 
Of course, we have $\op_{N,d}(x_1,\dots,x_k)=\op(x_1,\dots,x_k)<N^{qd}$.
\end{definition}

\noindent
Many usual arithmetic operations $+,-,\times,\mathtt{div}$, etc. are polynomial, but not the exponential function 
$(x,y)\mapsto x^y$.
For a non-polynomial operation, we consider its \emph{polynomial restrictions}.
  
\begin{definition}[polynomial restriction of a non-polynomial operation]\label{def:opN,d_nonpol}
Let $\op$ be a non-polynomial operation of arity $k$.
For all integers $d\ge 1$ and $N\ge 2$, we note $\op_{N,d}$ the function 
from $[0,N^d[^k$ to $[0,N^{d}[\cup\{\mathtt{overflow}\}$ defined by
\begin{equation*}
\op_{N,d}(x_1,\dots,x_k) \coloneqq
\begin{cases}
\op(x_1,\dots,x_k) & \mathtt{if}\; \op(x_1,\dots,x_k)<N^{d}\\
\mathtt{overflow} & \mathtt{otherwise}.
\end{cases}
\end{equation*}
\end{definition}

%%%%%%%%%%%%%%%%%
\subsection{The RAM model}\label{subsec:RAMmodel}
The reference computer model for sequential algorithms is the random access machine (RAM), apparently invented in 1963 by Sherpherson and Sturgis~\cite{ShepherdsonS63} and studied by Cook and Reckhow~\cite{CookR73}. 
Here, we mainly adopt the terminology of the well-known books on algorithms and 
complexity~\cite{AhoHU74,Papadimitriou94}, see also~\cite{CormenLRS09,SandersMDD19}.

\medskip
A RAM with $\OP$ operations, also called RAM($\OP$), consists of a program operating on an infinite sequence of registers (or cells) $R[0],R[1],\dots$ called the \emph{memory} of the RAM, with an additional register~$N$, called the \emph{reference register}, which contains an integer~$N\ge 2$, called the \emph{reference integer}, never modified during a computation.
Each memory register contains a non-negative integer\footnote{From now on, we will simply write ``integer'' to mean ``non-negative integer''.} 
which must be less than~$cN$, for a fixed integer~$c\ge 2$, and is 0 at the initial instant.

\medskip
The \emph{program} is a sequence $I_0, \dots, I_{r}$ of numbered/labeled instructions, also called $R$-instructions, of the following forms with given intuitive meaning, where $i,j$ are explicit integers, $\ell_0, \ell_1$ are valid indexes of instructions, $0\le\ell_0,\ell_1\le r$, and 
$\op$ is an operation from $\OP$ of arity $k$:

\begin{table}[H]
  \centering
\begin{tabular}{|clr|}
  \hline
   \bf{Instruction} &  \bf{Meaning} & \\ 
  \hline
  $\mathtt{Cst} \; i \; j$ & $R[i] \gets j$ & \\
  $\mathtt{Move} \; i \; j$ & $R[i] \gets R[j]$  & \\ 
  $\mathtt{Store}\; i \; j$ & $R[R[i]] \gets R[j]$  & \\ 
  $\mathtt{Load}\; i \; j$  & $R[i] \gets R[R[j]]$  & \\ 
  $\mathtt{Jzero} \; i \; {\ell_0} \; {\ell_1}$  & $\mathtt{if}\; R[i]=0\;
  \mathtt{then}\;\mathtt{goto}\;\ell_0\;\mathtt{else}\;\mathtt{goto}\;\ell_1$
  & \\ 
  $\op$ & $R[0]\gets \op (R[0],\ldots, R[k-1])$  & \\
  $\mathtt{GetN}\; i$ & $R[i]\gets N$  & \\ 
  $\mathtt{Input}\; i $ & Read $R[i]$  & \\ 
  $\mathtt{Output} \; i$ & Output $R[i]$ & \\ 
  $\mathtt{Halt}$ & & \\ 
  \hline
\end{tabular}
\caption{Set of $R$-instructions for a RAM($\OP$)}
\label{table:instRAM2}
\end{table}

\noindent
The $I_r$ instruction is $\mathtt{Halt}$.
At the initial instant of a computation of the RAM, the memory registers are at 0 and the reference register contains $N$, the reference integer, never modified, of the computation.
Then, the computation proceeds as one would expect.
The RAM first executes the $I_0$ instruction. 
After the RAM executes an $I_{i}$ instruction, it executes $I_{i+1}$, except if $I_{i}$ is a (branch) 
$\mathtt{Jzero}$ instruction or the $\mathtt{Halt}$ instruction.

 \medskip \noindent
{\bf Remarks about our RAM model.}
Our input and output instructions can be executed \emph{at any time}, a non-standard convention.
For example, the instruction $\mathtt{Input}\; i $ takes the current input value and stores it in the register $R[i]$.
This flexibility allows the implementation of many interesting algorithms, including those using preprocessing, such as enumeration algorithms.

For correct execution, each instruction $\mathtt{Input}\; i $ must read an integer less than $cN$ and the result of each 
$\op$ instruction must also be less than~$cN$.
This does not prevent a computation from handling much larger integers, for example integers less than $N^d$ (resp. less than $N^N$ in the general case), because  they can be represented by their $d$ (resp.~$N$) digits in base $N$ contained in $d$ (resp.~$N$) consecutive registers.

In the literature, there are other conventions regarding the contents of registers.
Some authors, e.g. Andersson and al.~\cite{AnderssonMRT96, AnderssonMT99,Thorup00}, require that each register contains an integer/binary word of length~$w$, that is an integer less than $2^w$: 
the role of our reference integer $N$ is comparable\footnote{More precisely, we allow integers less than $cN$ as register contents, for a fixed integer $c\ge 2$. 
This means \\ 
$2^{w-1}< cN \le 2^w$ or equivalently $w=\lc\log_2(cN)\rc$.} 
to that of $2^w$. 

Many other papers, e.g.~\cite{BerkholzKS17, CarmeliK18}, allow each register to hold an integer of length 
$O(\log N)$ for an input of size $N$, as Cormen and al. recommend in Subsection 2.2 of their book~\cite{CormenLRS09}, that is to say a polynomial integer $N^{O(1)}$: 
this is equivalent to our RAM model in the case where we also require the addresses used to be $O(N)$ as we prove in Appendix~\ref{app:O(log N)} of this paper; 
otherwise, this means that the RAM can use polynomial memory like in~\cite{BerkholzKS17, CarmeliK18}, see Section 2.5 in~\cite{GrandjeanJachiet22} for a comparison of the two models.

 \medskip \noindent
{\bf Why this RAM model?} Our RAM model is simple and flexible. 
Its main constraint is to use only integers in $O(N)$ as contents and therefore as register addresses, which forces us to use only linear memory.
This is a realistic condition and does not prevent the manipulation of much greater integers.
This also allows us to read and represent any input (graph, hypergraph, formula, etc.). 
As an example, an input graph with $N$ vertices $1,\dots,N$ and $m$ edges can be read by $2m$ $\mathtt{Input}$ instructions.

Our RAM model allows us to represent many types of sequential algorithms: 
the classical $\mathtt{Input}\mapsto\mathtt{Output}$ algorithms, but also algorithms decomposed into phases punctuated by input/output instructions, such as compilation algorithms~\cite{DarwicheM02, Marquis15}, enumeration algorithms~\cite{DurandG07, Durand20, Segoufin14, Uno15}, etc.

\medskip
To program our RAMs, it will be convenient to also define an extended version of the RAM using variables and arrays, which we call a \emph{multi-memory} RAM.

 \medskip \noindent
{\bf Multi-memory RAM.}
 A \emph{multi-memory} RAM with $\OP$ operations uses a fixed number of one-dimensional \emph{arrays of integers} $A_j[\;]$, 
a fixed number of \emph{integer variables} $v_i$,
and the \emph{reference integer}~$N$.  
Its \emph{program} uses \emph{$\OP$-expressions}
defined recursively as follows.

\begin{definition}[$\OP$-\emph{expressions}]
\begin{itemize}
\item Any fixed integer $i$, the reference integer $N$, and any integer
  variable $v_j$ are $\OP$-expressions.
\item If $\alpha$ is an $\OP$-expression, then an array element $A_j[\alpha]$ is an $\OP$-expression.
\item If $\alpha_1, \dots, \alpha_k$ are $\OP$-expressions and $\op$ is a $k$-ary operation in $\OP$, then $\op(\alpha_1, \dots, \alpha_k)$ is an $\OP$-expression.
\end{itemize}
\end{definition}
\noindent
\emph{Example:} $A_1[A_3[2]+N]\times A_2[v_1]$ is a $\{+,\times\}$-expression.

\medskip
The \emph{program} of a multi-memory RAM with $\OP$ operations is
a sequence $I_0, \dots, I_{r} $ of numbered instructions, called
\emph{array-instructions}, of the following forms with intuitive meaning, where 
$\alpha$,~$\beta$ are $\OP$-expressions and $\ell_0, \ell_1$ are valid indexes of instructions.

\begin{table}[H]
  \centering
\begin{tabular}{|lr|}
  \hline
   \bf{Instruction} &  \bf{Meaning} \\
  \hline
  $A_j[\alpha] \gets \beta$  & \\
  $v_j \gets \alpha$ & \\
  $\mathtt{Read}\; A_j[\alpha]$  & \\
  $\mathtt{Read}\; v_j $ & \\
  $\mathtt{Output} \; \alpha$ & \\
  $\mathtt{Jzero} \; \alpha \; \ell_0 \; \ell_1$  & $\mathtt{if}\; \alpha=0\;\mathtt{then}\;\mathtt{goto}\;\ell_0\;\mathtt{else}\;\mathtt{goto}\;\ell_1$ \\
  $\mathtt{Halt}$ & \\
  \hline
\end{tabular}
\caption{Set of array-instructions for a RAM($\OP$)}
\label{table:instRAM3}
\end{table}

We will use a strong notion of simulation, called \emph{lock-step simulation}, introduced by Gurevich in his paper~\cite{Gurevich93} initiating \emph{Abstract State Machines} (ASM, initially called ``Evolving Algebras'') and defined in detail in~\cite{DexterDG97}, page~280. 
What follows is essentially the same definition. 
For the sake of notational simplicity, we assimilate a (finite) computation $\mathcal{C}$ to the list of instructions $(\mathcal{I}_0,\mathcal{I}_1,\dots,\mathcal{I}_t = \mathtt{Halt})$ that it executes, but we consider that a computation $\mathcal{C}$ also includes its input list and its output list even if our notation does not mention them.

\begin{definition}
A machine/program $P'$ \emph{simulates a machine/program $P$ in lock-step with lag 
factor}~$\lambda$ if, for every computation $\mathcal{C}=(\mathcal{I}_0,\mathcal{I}_1,\dots,\mathcal{I}_t = \mathtt{Halt})$ of $P$, there exists a function \linebreak
$f:[0,t]\to\mathbb{N}$ such that $f(0)=0$ and $1\le f(i)-f(i-1)\le \lambda$, for each $i\in[1,t]$, and a (unique) computation 
$\mathcal{C}'=(\mathcal{I'}_0,\mathcal{I'}_1,\dots,\mathcal{I}'_{f(t)}=\mathtt{Halt})$ of~$P'$,
called the computation simulating $\mathcal{C}$, which satisfies conditions 1-3: 
\begin{enumerate}
\item for every $i\in[0,t]$, if $\mathcal{I}_{i}$ is an input (resp. output) instruction of 
$\mathcal{C}$, then $\mathcal{I}'_{f(i)}$ is an input \linebreak
(resp. output) instruction of $\mathcal{C}'$;
\item the only input/output instructions of $\mathcal{C}'$ are given by item~1;
\item the sequence of inputs (resp. outputs) is the same for $\mathcal{C}'$ as for $\mathcal{C}$.
\end{enumerate}
Then we also say that $P'$ \emph{simulates $P$ faithfully}.
\end{definition}

This notion is used in the following lemma:

\begin{lemma}\label{fact:array->R}
Let $\OP$ be a set of operations with $+\in\OP$. 
Then,
\begin{enumerate}
\item each array-instruction $\mathcal{I}$ of a \emph{RAM($\OP$)} can be simulated by a fixed finite sequence $\mathcal{S}(\mathcal{I})$ of \linebreak
$R$-instructions of a \emph{RAM($\OP$)};
\item if $\mathcal{I}$ is an input (resp. output) instruction, then $\mathcal{S}(\mathcal{I})$ contains exactly one input (resp. output) instruction; if $\mathcal{I}$ is not an input/output instruction, then $\mathcal{S}(\mathcal{I})$ contains no input/output instruction;
\item therefore, a multi-memory \emph{RAM($\OP$)} can be simulated faithfully by a \emph{RAM($\OP$)} with \linebreak
$R$-instructions.
\end{enumerate}
\end{lemma}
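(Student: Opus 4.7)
The plan is to define a compiler $\mathcal{S}$ from array-instructions to sequences of $R$-instructions by induction on the structure of $\OP$-expressions, and then verify the lock-step property by counting. First I would fix a register layout in the simulating RAM: if the multi-memory program uses $n$ variables $v_1,\dots,v_n$ and $m$ arrays $A_1,\dots,A_m$, I dedicate $R[1],\dots,R[n]$ to hold the values of the variables, reserve a fixed block $R[n+1],\dots,R[n+s]$ of scratch registers (with $s$ depending only on the program), and interleave the arrays so that $A_j[i]$ is stored in $R[n+s+j+m\cdot i]$. Multiplication of an index by the program-constant $m$ unfolds into a fixed number of additions, so the address computation for any array access uses only a constant number of $R$-instructions, and all resulting addresses remain of order $O(mN)$, which fits the content bound of the simulating RAM once its constant is chosen large enough.

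Next, by structural induction on $\OP$-expressions I define, for each expression $\alpha$ and target scratch register $R[r]$, a sequence $\mathcal{E}(\alpha,r)$ of $R$-instructions of length bounded by a constant depending only on $\alpha$, which leaves the value of $\alpha$ in $R[r]$ without disturbing variable registers. Base cases use $\mathtt{Cst}$ (integer constants), $\mathtt{GetN}$ (the reference $N$), or $\mathtt{Move}$ (variables). For $A_j[\alpha']$, $\mathcal{E}$ evaluates $\alpha'$ recursively into a fresh scratch register $R[r']$, computes the address $n+s+j+m\cdot R[r']$ with a fixed number of additions into another scratch register, and emits a $\mathtt{Load}$. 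For $\op(\alpha_1,\dots,\alpha_k)$, $\mathcal{E}$ evaluates each $\alpha_i$ into a distinct scratch slot, moves the results with $\mathtt{Move}$ into the operand positions $R[0],\dots,R[k-1]$, executes the primitive $\op$ $R$-instruction (which writes its result into $R[0]$), and copies $R[0]$ to $R[r]$. A stack discipline in which each recursive call uses scratch registers disjoint from its ancestors keeps the total scratch consumption bounded by the depth of the expression, i.e.\ by a program-dependent constant.

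With $\mathcal{E}$ in hand, each array-instruction $\mathcal{I}$ is compiled into a fixed sequence $\mathcal{S}(\mathcal{I})$: $v_j\gets\alpha$ becomes $\mathcal{E}(\alpha,j)$; $A_j[\alpha]\gets\beta$ evaluates $\alpha$ and $\beta$ into scratch registers, computes the target address, and finishes with a $\mathtt{Store}$; $\mathtt{Jzero}\;\alpha\;\ell_0\;\ell_1$ evaluates $\alpha$ into a scratch register and emits a $\mathtt{Jzero}$ whose labels are relocated to the first $R$-instruction of $\mathcal{S}(\mathcal{I}_{\ell_0})$ and $\mathcal{S}(\mathcal{I}_{\ell_1})$; $\mathtt{Output}\;\alpha$ becomes $\mathcal{E}(\alpha,0)$ followed by a single $\mathtt{Output}\;0$; $\mathtt{Read}\;v_j$ compiles to a single $\mathtt{Input}\;j$; $\mathtt{Read}\;A_j[\alpha]$ evaluates $\alpha$, computes the target address into a scratch register, reads the input into a second scratch register with a single $\mathtt{Input}$, and writes it back with a $\mathtt{Store}$; and $\mathtt{Halt}$ compiles to itself. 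By inspection, an array-instruction produces an $\mathtt{Input}$ (resp.\ $\mathtt{Output}$) $R$-instruction in its compilation if and only if it is itself a $\mathtt{Read}$ (resp.\ $\mathtt{Output}$) instruction, and in that case exactly one such $R$-instruction appears, yielding items~1 and~2 of the lock-step definition; item~3 is immediate because the same integer value is read or written at the corresponding step.

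Finally, setting $\lambda\coloneqq\max_{\mathcal{I}}|\mathcal{S}(\mathcal{I})|$ over the finitely many array-instructions in the program and $f(i)\coloneqq\sum_{j<i}|\mathcal{S}(\mathcal{I}_j)|$ yields $1\le f(i)-f(i-1)\le\lambda$, completing the lock-step simulation. The main technical difficulty I anticipate is the scratch-register bookkeeping inside $\mathcal{E}$: since the primitive $\op$ $R$-instruction is hard-wired to read $R[0],\dots,R[k-1]$ and to overwrite $R[0]$, nested operations must save partial results to disjoint scratch slots before loading fresh operands. The stack discipline sketched above solves this cleanly, but verifying that the number of scratch registers it uses is bounded by a program-dependent constant (independent of $N$) is the delicate part, and relies on the fact that each $\OP$-expression appearing syntactically in the program has fixed depth.
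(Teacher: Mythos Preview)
Your approach is correct and essentially the same as the paper's: interleave the variables and arrays into the single register file via a fixed offset formula computed with repeated addition, compile each array-instruction into a bounded block of $R$-instructions, and check the input/output correspondence by inspection. The paper organises the expression step slightly differently---it first flattens every $\OP$-expression into a sequence of \emph{atomic} array-instructions by introducing fresh variables for subexpressions, and only then maps each atomic instruction to $R$-instructions---whereas you compile expressions directly by structural recursion with a scratch pool; the two organisations are interchangeable and yield the same bounds.

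One small slip in your layout: you place the variables in $R[1],\dots,R[n]$, but the primitive $\op$ $R$-instruction is hard-wired to read $R[0],\dots,R[k-1]$, so for any operation of arity $k\ge 2$ your $\mathcal{E}$ will overwrite $v_1,\dots,v_{k-1}$ when it moves the operands into position, violating your own ``without disturbing variable registers'' invariant. The fix is immediate---start the variable block at an index beyond the maximum arity occurring in the program, or save and restore those low registers around each $\op$ call---and the paper's layout (variables at $R[(s+1)i]$ with $R[0],\dots,R[s]$ reserved as buffers) handles this in the same spirit.
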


\begin{proof}[Sketch of proof] We only justify item~1 (item~2 is obvious and item~3 is a straightforward consequence of items~1 and~2).
First, we add new variables for storing the values of the subexpressions of the 
$\OP$-expressions. This allows us to simulate each array-instruction by a fixed number (equal to the number of its subexpressions) of ``atomic'' array-instructions
of the following forms: 
\begin{center}
$v_i\gets j$ ; $v_i \gets N$ ; $v_i\gets \op(v_{j_1},\dots,v_{j_k})$ ;
$v_i\gets A_j[v_h]$ ; $A_i[v_j] \gets v_h$ ; \\
$\mathtt{Jzero}\; v_i\; \ell_0\; \ell_1$ ;  
$\mathtt{Read}\; v_i$ ; $\mathtt{Output}\; v_i$ ; $\mathtt{Halt}$.
\end{center}
Then, we simulate each variable~$v_i$, where $i\ge 1$, by the register $R[(s+1)i]$, 
where $s$ is the number of arrays $A_1,\dots,A_s$ and $(s+1)i$ is written for $i+\cdots+i$ ($s+1$ times), which is greater than~$s$, 
and we simulate each array cell $A_j[i]$ by the register $R[(s+1)(i+1)+j]$, whose index is not less than $s+2$ and is congruent to $j$ modulo $s+1$.
Note that the registers $R[0],\dots,R[s]$ are available and can therefore be used as buffers. 
The reader can now easily verify that using buffers, each atomic array-instruction can be simulated by a fixed finite sequence of $R$-instructions, including addition instructions, meaning $R[0]\gets R[0]+R[1]$ (see Table~\ref{table:instRAM2}), to implement the above-mentioned sums $i+\cdots+i$, etc.
\end{proof}

\smallskip \noindent
{\bf Our algorithmic conventions.} 
From now on, we will place ourselves in an even broader framework corresponding to the usual presentation of algorithms, which we also call RAMs or programs interchangeably.
We freely use ``if...then...else...'' statements, ``while" and ``for'' loops, and procedures with ``return'' instructions, also called functions.
Additionally, we allow elaborate test conditions in conditional and while instructions. 
For example, using equalities $\alpha=\beta$ as test conditions, for $\alpha,\beta\in O(N)$, instead of the usual zero test, is justified by the fact that the (extended) array-instruction
$\mathtt{Jequal}\;\alpha\;\beta\;\ell_0\;\ell_1$, 
meaning $\mathtt{if}\; \alpha=\beta\;\mathtt{then}\;\mathtt{goto}\;\ell_0\;\mathtt{else}\;\mathtt{goto}\;\ell_1$,
is equivalent to the following sequence of array-instructions (see Table~\ref{table:instRAM3}) using an $\mathtt{Equal}$ array:
\begin{center}
$\mathtt{Equal}[\alpha]\gets 1$ ;
$\mathtt{Equal}[\beta]\gets 0$ ;
$\mathtt{Jzero}\;\mathtt{Equal}[\alpha]\;\ell_0\;\ell_1$.
\end{center}
Also, we allow a parameter or variable $x$ to represent any integer less than $N^d$, for a fixed integer~$d$, by its $d$ digits $x_i$ in base~$N$ contained in $d$ consecutive RAM registers. 
It will be convenient to confuse such a variable $x$ with the list of variables $(x_{d-1},\dots,x_0)$ for its $d$ digits.
We can easily verify that Lemma~\ref{fact:array->R} is still valid for the programming language thus extended.

%%%%%%%%%%%%%%%%%
\subsection{Complexity measures and complexity classes}\label{subsec:ComplexClasses}
We adopt the usual complexity measures: the \emph{time} of a program or procedure is the number of instructions it executes; its \emph{space} is the number of distinct registers, or array cells (and variables) that it modifies. 
A computation or part of a computation -- for example, preprocessing or the part between two consecutive input/output instructions -- is said to use \emph{linear time} (resp. \emph{constant time}) if it executes $O(N)$ (resp. $O(1)$) instructions.

We note that if a program $P'$ simulates another program $P$ in
lock-step with lag factor~$\lambda$, then by
Lemma~\ref{fact:array->R}, for every computation $\mathcal{C}$ of $P$,
if the time between any two consecutive input/output instructions
$\mathcal{I}_i,\mathcal{I}_{i'}$, $i<i'$, of $\mathcal{C}$ is $\tau$,
then the time between the corresponding input/output instructions
$\mathcal{I}'_{f(i)},\mathcal{I}'_{f(i')}$ in the computation
$\mathcal{C}'$ (of $P'$) simulating $\mathcal{C}$ is at most
$\lambda\tau$.  The same is true for the time before the first
input/output instruction and for the time after the last input/output
instruction.

The notion of faithful simulation, which preserves time up to a constant factor, will make it possible to analyze the tight complexity of problems involving successive computation phases separated by input/output instructions, mainly enumeration problems~\cite{AmarilliBJM17, BaganDG07, Durand20, DurandG07, Segoufin14, Strozecki19, Uno15}, but also many others studied recently, e.g.~\cite{BaganDGO08, BerkholzKS17, BerkholzKS18, BringmannCM22, BringmannCM22arxiv, CarmeliTGKR23, EldarCK24}. 

\medskip
Unless explicitly stated otherwise, a RAM is a RAM($+$), that is, it uses addition as its only primitive operation.
Let us define the main complexity classes that will interest us.

\begin{itemize}
\item {\bf Classes of $\mathtt{Input}\mapsto\mathtt{Output}$ problems:}
Among these classes of problems, we are mainly interested in this paper by the linear-time class defined in~\cite{Grandjean96,GrandjeanSchwentick02,Schwentick97}.
We denote by $\textsc{Time}(N)$ or $\textsc{LinTime}$ the class of problems\footnote{In Section~\ref{subsec:reducN^epsilon} (resp. Appendix~\ref{app:nonlinear}), we define more generally, for any function $T:\mathbb{N}\to \mathbb{N}$, $T(N)=N^{1/c}$, for a fixed integer $c\ge 1$ (resp. $T(N)\ge N$), the class $\textsc{Time}(T(N))$, and Corollary~\ref{cor:invTimeN1/c} (resp. Theorem~\ref{th:robustT}) establishes its robustness.}
  $\Pi: X \mapsto \Pi(X)$, with input $X=(x_1,\dots,x_N)$ of size $N$,
  and $0\le x_i<cN$ (for a fixed integer $c\ge 2$), which are computed by a RAM in time $O(N)$. 
  
\item {\bf Classes of enumeration problems:} Among many 
variants~\cite{AmarilliBJM17, Durand20, DurandG07, JohnsonPY88, Segoufin14, Strozecki19, Uno15}, let us recall here what we consider to be the class of minimal complexity: the class $\cdlin$. 
A problem $\Pi: X \mapsto \Pi(X)$, with input $X=(x_1,\dots,x_N)$ of size $N$, for 
$0\le x_i<cN$, and as output a finite set $\Pi(X)$ of elements (called solutions), each of \emph{constant size}, belongs to $\cdlin$ if there is a RAM which, after reading an input $X$ of size $N$, 
\begin{enumerate}
\item \emph{Preprocessing:} computes some tables in time $O(N)$, then,
\item \emph{Enumeration:} lists (= produces) without repetition all the solutions of $\Pi(X)$ with a \emph{constant delay} (= time between two successive solutions) and a \emph{constant space}, and stops immediately after the last solution.
\end{enumerate}
\item {\bf Classes of constant time operations:} Again, we focus on
  the complexity classes that we believe are minimal and robust, as
  established in the following sections.  For $\OP$ a set of arbitrary
  arity operations on $\mathbb{N}$, we define $\cpp(\OP)$ as the class
  of operations $\op: \mathbb{N}^k \to \mathbb{N}$ of arbitrary arity
  $k$ such that, for any fixed integer $d\ge 1$, there exists a
  RAM($\OP$) such that, for each reference integer $N\ge 2$,
\begin{enumerate}
\item \emph{Preprocessing:} 
the RAM computes some tables in time $O(N)$, then,
\item \emph{Operation:} 
the RAM reads $k$ integers $x_1,\dots,x_k<N^d$ and computes the value 
$\op_{N,d}(x_1,\dots,x_k)$ in \emph{constant time}.
\end{enumerate}
We say that $\cpp(\OP)$ is the class of operations computable by a RAM($\OP$) \emph{in constant time with linear initialization (= linear preprocessing)}.
\end{itemize}

\begin{remark}
If $\op$ is a \emph{polynomial} operation, then by Definition~\ref{def:opN,d_pol}, $\op_{N,d}(x_1,\dots,x_k)$ is none other than $\op(x_1,\dots,x_k)$.
By Definition~\ref{def:opN,d_nonpol}, it is also the case if $\op$ is non-polynomial and $\op(x_1,\dots,x_k)<N^d$.
Note that the condition that the value of $\op_{N,d}(x_1,\dots,x_k)$ is polynomial in $N$, i.e. has a constant size, is a necessary condition for it to be computable in constant time.
\end{remark}

One can easily verify that the condition for an operation $\op$ to be in $\cpp(\OP)$ is the same whether it is polynomial or non-polynomial, 
even if the definition of its restriction $\op_{N,d}$ is slightly different depending on the case (Definitions~\ref{def:opN,d_pol} and~\ref{def:opN,d_nonpol}).
Here is a unifying definition: 

\begin{definition}
An operation $\op$ of arity $k$ is in $\cpp(\OP)$ if, for all integers $d,d'\ge 1$ and 
$N\ge 2$, the function $\op_{N,d,d'}: [0,N^d[^k \;\to [0,N^{d'}[\cup\{\mathtt{overflow}\}$ defined 
for $\overline{x}\coloneqq (x_1,\dots,x_k)\in [0,N^d[^k$, by 
$\op_{N,d,d'}(\overline{x}) \coloneqq \op(\overline{x})$ if $\op(\overline{x})<N^{d'}$
and $\op_{N,d,d'}(\overline{x}) \coloneqq \mathtt{overflow}$ otherwise, 
is computable in constant time by a \emph{RAM($\OP$)} with linear preprocessing. 
\end{definition}
Those definitions are equivalent because $d$ and $d'$ are universally quantified in them.

\begin{convention}
Each operand $x_i<N^d$ of $\op_{N,d,d'}$ is represented by the $d$ digits of its $N$-notation in $d$ consecutive registers. 
The same is true for the result $\op_{N,d,d'}(\overline{x})<N^{d'}$ in $d'$ registers.
\end{convention}

\begin{example}\label{ex:pred}
The predecessor operation $\mathtt{pred}: \mathbb{N}\to \mathbb{N}$, defined as
$\mathtt{pred}(x)\coloneqq \max(0,x-1)$, \linebreak
belongs to $\cpp(+)$.
This is justified by the following linear-time preprocessing procedure $\textsc{LinPP.pred}()$, 
which computes the array $\textsc{pred}[1..N]$, of obvious meaning,
and the constant-time procedure $\textsc{CstP.pred}_d(x_{d-1},\dots,x_0)$, which computes
the function $x\mapsto \mathtt{pred}(x)$ for $x<N^d$. 
(Recall that $x$ is represented by the list of its $d$ digits $(x_{d-1},\dots,x_0)$ in base $N$.)
This procedure is given below for $d=2$ but it can be easily adapted to any fixed~$d\ge 1$.
\end{example}

 \begin{minipage}{0.05\textwidth}
        ~
      \end{minipage}
      \begin{minipage}{0.85\textwidth}
        \begin{algorithm}[H]\label{algo:pred}
          \caption{Pre-computation of \fun{pred} and computation of \fun{CstP.pred}$_2$}
      \begin{minipage}[t]{0.50\textwidth}
      \vspace{0.7em}
      \begin{algorithmic}[1]
        \Procedure{LinPP.pred}{\mbox{}}
  \State $\var{x} \gets 0$
  \While{$\var{x} \neq \var{N}$}
    \State $\arr{PRED}[\var{x}+1] \gets \var{x}$
    \State $\var{x} \gets \var{x}+1$
  \EndWhile
\EndProcedure
      \end{algorithmic}
      \end{minipage}
      \begin{minipage}[t]{0.55\textwidth}
      \begin{algorithmic}[1]
\Procedure{CstP.pred}{$\var{x}_{1},\var{x}_{0}$}
\If{$\var{x}_{0}=0$}
  \If{$\var{x}_{1}=0$}
  \State \Return $0$
  \Else
     \State \Return $(\arr{PRED}[\var{x}_1],\arr{PRED}[\var{N}])$
  \EndIf
  \Else
     \State \Return $(\var{x}_1,\arr{PRED}[\var{x}_{0}])$
\EndIf
\EndProcedure
      \end{algorithmic}
      \end{minipage}
    \end{algorithm}
      \end{minipage}

\medskip 
\begin{notation}
From now on, we will write $x-1$ instead of $\textsc{pred}[x]$ or $\textsc{CstP.pred}_d(x)$.
\end{notation}

\noindent 
The operation $x\mapsto \mathtt{pred}(x)$, for $x<N^d$, can now be considered as a primitive operation.

%%%%%%%%%%%%%%%%%%%%%%%%%%%%%%%%%%%%%
\section{Robustness of complexity classes through linear preprocessing}\label{sec:closure}

For the sake of completeness, we recall with its proof the following general lemma from~\cite{GrandjeanJachiet22} which establishes a ``transitivity'' property.

\begin{lemma}[Fundamental Lemma\label{lemma:fund}~\cite{GrandjeanJachiet22}]
Let $\OP$ be a set of operations which includes $+$.   
Let~$P$ be a program of \emph{RAM($\OP\cup\{\op\}$)}, where $\op$ is an operation in $\cpp(\OP)$.

Then $P$ can be faithfully simulated by a program $P'$ of \emph{RAM($\OP$)} after a linear-time initialization, also using only the operations of $\OP$. 

\end{lemma}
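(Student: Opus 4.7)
The plan is to substitute every invocation of $\op$ in $P$ by the constant-size block of $\OP$-instructions supplied by the definition of $\cpp(\OP)$, after running the associated preprocessing once at the start of $P'$.

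First I would fix $c \geq 2$ so that every RAM register holds an integer less than $cN$, and pick an integer $d \geq 1$ with $cN \leq N^d$ for all $N \geq 2$ (for instance $d = 2$ whenever $N \geq c$). Since $\op \in \cpp(\OP)$, applied with this $d$, we obtain two RAM($\OP$) routines: a preprocessing routine $\textsc{LinPP.op}$ running in $O(N)$ time that builds auxiliary tables, and a procedure $\textsc{CstP.op}_d$ that, given $k$ operands less than $N^d$, computes $\op_{N,d}$ in a constant number of $\OP$-instructions using those tables. Crucially, neither routine contains an input or output instruction -- they take operands as procedure arguments and communicate with the tables only via memory accesses.

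Next I would build $P'$ first as a multi-memory RAM($\OP$), composed of two phases. The initialization phase is a verbatim copy of $\textsc{LinPP.op}$ placed at the start, using arrays and integer variables disjoint from those of $P$; it takes $O(N)$ time and performs no I/O. The simulation phase then walks through $P$: each non-$\op$ instruction of $P$ already belongs to RAM($\OP$) and is copied as is, while each instruction $R[0] \gets \op(R[0], \dots, R[k-1])$ is replaced by a fixed block that feeds $R[0], \dots, R[k-1]$ to $\textsc{CstP.op}_d$ and writes its return value back into $R[0]$. That this result always fits into a register is guaranteed by $P$ being a valid RAM($\OP \cup \{\op\}$) program, so its $\op$-instructions by hypothesis produce values less than $cN$. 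Applying Lemma~\ref{fact:array->R} then converts the multi-memory program into a standard RAM($\OP$) with only a constant blow-up in time and an exact preservation of the sequence of I/O instructions.

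For the faithful simulation, after the $O(N)$ initialization prefix each step of a computation $\mathcal{C}$ of $P$ is mirrored by at most $\lambda$ consecutive steps of the corresponding $\mathcal{C}'$, where $\lambda$ is the maximum of the length of the replacement block for an $\op$-instruction and the constant overhead introduced by Lemma~\ref{fact:array->R}; both are independent of $N$. The input/output instructions of $\mathcal{C}$ correspond one-to-one and in order with those of $\mathcal{C}'$ since neither the initialization nor the replacement blocks contain any. The only delicate point I anticipate is bookkeeping, namely keeping the preprocessing tables in a memory region disjoint from the one used by $P$; this is handled cleanly by working at the multi-memory level and then invoking Lemma~\ref{fact:array->R}, rather than trying to interleave addresses by hand.
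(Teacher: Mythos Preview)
Your high-level plan matches what the paper calls the ``essence of our proof,'' but you have overlooked the main technical obstacle that the paper identifies and spends the bulk of its argument on. The definition of $\cpp(\OP)$ guarantees only that the \emph{first} call to $\textsc{CstP.op}_d$ after running $\textsc{LinPP.op}$ returns the correct value; it does \emph{not} guarantee that $\textsc{CstP.op}_d$ leaves its own auxiliary tables (or any other memory it touches) intact. A second call may therefore read corrupted tables and return garbage. Your ``only delicate point'' -- separating the preprocessing tables from $P$'s memory via the multi-memory abstraction -- addresses a different issue (collision between $P$ and the tables) and does nothing about $\textsc{CstP.op}_d$ being potentially self-destructive. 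The paper flags this explicitly in a footnote: one \emph{could} add a non-destructiveness clause to the definition of $\cpp(\OP)$, but the authors deliberately do not, and the lemma must be proved for the definition as stated.

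The missing ingredient is the paper's log-and-restore mechanism: wrap $\textsc{CstP.op}_d$ in a modified procedure $\textsc{CstP}2.\op$ that records, in auxiliary arrays $\mathtt{Old}[\;]$, $\mathtt{Dest}[\;]$, $\mathtt{Index}[\;]$ indexed by a counter $\mathtt{NbWrite}$, the prior value of every cell or variable it overwrites; after the return value is stashed, the log is replayed in reverse to restore memory to its pre-call state. Because $\textsc{CstP.op}_d$ runs in constant time, it performs only $O(1)$ writes, so the log has constant size and the restore loop is itself constant-time. Once you have this, your simulation goes through; without it, the argument fails as soon as $P$ invokes $\op$ twice.
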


\begin{proof}
Let $P$ be a program of RAM($\OP\cup\{\op\}$).  
Let $\linppop()$ and $\cstpop(\mathbf{x})$ be respectively the initialization procedure and the computation procedure of $\op$.

\medskip \noindent
\emph{Intuition:}
The idea of the proof is to create a program $P'$ that starts by calling 
the initialization procedure $\linppop()$, 
and then calls the program~$P$ where each occurrence of a term 
$\op(\mathbf{x})$ in~$P$ is replaced by a call to $\cstpop(\mathbf{x})$. 

\medskip \noindent
This is the essence of our proof but we also need to take care of the following  problem: 
the definition of the $\cpp(\OP)$ class \emph{does not guarantee} that it is possible to make two (or more) calls to $\cstpop$. 
For instance, we might imagine that $\cstpop$ is destructive and prevents us from executing a second call to 
$\cstpop$ or that the answer might be wrong\footnote{This problem could be avoided by imposing, in the definition of the class $\cpp(\OP)$ given in Subsection 2.2, that performing an operation of this class does not modify the contents of memory, i.e., restores its initial contents.
However, adding this technical condition to the definition may seem artificial (even though it is satisfied by all our examples). 
We prefer our definition because it is simpler and seems more general. 
Fortunately, a consequence of the proof we now present is that the two definitions are equivalent.}.

\medskip
To overcome this problem, we can modify the procedure $\cstpop$ to ensure
that several successive calls are possible.  For that, we use a
modified procedure $\cstpopprime$ where we store the value of each
array cell or integer variable that is overwritten and at the end of
the call to $\cstpopprime(\mathbf{x})$ we restore the memory back to
its state before the call.  This is possible by maintaining a complete
log (a history) of the successive modifications (assignments/writes)
of the memory.  At the end of the call to $\cstpopprime(\mathbf{x})$,
before its return instruction, the initial state of the memory is
restored, step by step, by replaying the log in reverse chronological
order.

\medskip
For that, we introduce three new arrays called
 $\arr{Old}[\;]$, $\arr{Index}[\;]$ and $\arr{Dest}[\;]$, as well as two new variables called 
  $\var{NbWrite}$ and $\var{ReturnValue}$. 
  The modified procedure $\cstpopprime$ starts by setting $\mathtt{NbWrite}=0$,
  and then, for each assignment 
  $\var{v}_i\gets u$ or $\arr{A}_i[t]\gets u$ of $\cstpop$, 
  we write into  $\arr{Old}[\var{NbWrite}]$
  the current value $v$ of $\var{v}_i$ or $\arr{A}_i[t]$ that will be overwritten;
  afterwards, we write into $\mathtt{\arr{Dest}}[\var{NbWrite}]$ the integer $i$
  that indicates\footnote{It is convenient to assume that the set of integers $i\leq q$ which identify the individual variables $\var{v}_i$ is disjoint from the set of integers $i>q$ that number the arrays $\arr{A}_i$. Note that the test $i\leq q$ does not actually use the order symbol $\leq$ since it is equivalent to the disjunction of $q$ equalities $\bigvee_{j=1}^q i = j$.}
  which variable $\var{v}_i$ or array $\arr{A}_i$ is being written into, and, if the write was a write into an array $\arr{A}_i$ (i.e.\ if the integer $i$ is greater than~$q$), we also note the value of the index $t$ of the cell $\arr{A}_i[t]$ overwritten into $\arr{Index}[\var{NbWrite}]$; 
 finally, we write the new value $u$ into $\var{v}_i$ or $\arr{A}_i[t ]$ and we increment 
 $\var{NbWrite}$ by one.
 At the end of the procedure, instead of returning directly the expected value, we store it in the
 $\var{ReturnValue}$ variable. 
 
 \medskip
 Then, we restore the memory by replaying the log (write history) in reverse. 
 To do this, we decrease the counter variable $\var{NbWrite}$ by one as long as it is strictly positive and, each time, we restore the value $v=\arr{Old}[\var{NbWrite}]$ into the array cell (resp. variable) indicated by $i=\arr{Dest}[\var{NbWrite}]$, at index 
  $\var{ind}=\arr{Index}[\var{NbWrite}]$ for an array cell. 
  This means that we execute the assignment $\arr{A}_i[\var{ind}]\gets v$ if $i>q$ 
  (resp.\ $\var{v}_i\gets v$ if $i\leq q$).
  Thus, all the memory (the variables $\var{v}_1,\dots,\var{v}_q$ and 
  arrays $\arr{A}_{q+1},\dots,\arr{A}_{r}$)
  returns to the state in which it was before the call to $\cstpop(\mathbf{x})$, now transformed into 
  $\cstpopprime(\mathbf{x})$.
  Once done, we return the value stored in the $\var{ReturnValue}$ variable.

\medskip
Formally, the code of the $\cstpopprime$ 
procedure is the concatenation of the following calls using the procedures of Algorithm~2 below:
\begin{itemize}
\item $\fun{InitCounter}(\mbox{})$, which initializes $\var{NbWrite}$;
\item $\fun{CstPwithLOG}(\mathbf{x})$, which is $\cstpop(\mathbf{x})$ where each instruction of the form $\var{v}_{\var{i}} \gets u$ (resp.\ $\arr{A}_{\var{i}}[t] \gets u$, $\mathtt{Return}\;u$)
is replaced by the call $\fun{AssignVar}(\var{i},u)$ (resp.\ $\fun{AssignCell}(\var{i},t, u)$, 
$\fun{StoreReturn}(u)$);
\item $\fun{Restore}(\mbox{})$, which restores memory in the reverse chronological order, using the arrays $\arr{Dest}[\;]$, $\arr{Index}[\;]$ and $\arr{Old}[\;]$, 
and finally returns $\var{ReturnValue}$.
\end{itemize}
%%%%%%%%%%%%%%%%%
 \begin{minipage}{\almosttextwidth} % (0
  \begin{algorithm}[H] %(1
  \caption{Procedures for constructing $\cstpopprime$ from the code of $\cstpop$ whose list of   variables
is $\var{v}_1,\dots,\var{v}_q$ and the list of arrays is $\arr{A}_{q+1},\dots,\arr{A}_{r}$
  }
      \begin{minipage}[t]{0.53\textwidth} %(2 COLONNE 1
      
      \vspace{0.5em}
      
       \begin{algorithmic}[1]
       \Procedure{InitCounter}{$\mbox{}$}
       \State $\var{NbWrite}\gets 0$
       \EndProcedure
       \end{algorithmic}
       
      \vspace{1em}
      
      \begin{algorithmic}[1] %(3
     \Procedure{AssignVar}{\var{i},\var{u}} 
     \\ \Comment{called to replace 
     $\var{v}_{\var{i}}\gets \var{u}$} 
      \State $\arr{Old}[\var{NbWrite}]\gets \var{v}_{\var{i}}$
      \State $\arr{Dest}[\var{NbWrite}]\gets \var{i}$
      \State $\var{v}_{\var{i}} \gets \var{u}$
      \State $\var{NbWrite}\gets \var{NbWrite}+1$
    \EndProcedure
      \end{algorithmic} %)3
  
      \vspace{1em}
      
      \begin{algorithmic}[1] %(4
     \Procedure{AssignCell}{\var{i},\var{t},\var{u}} 
     \\ \Comment{called to replace $\arr{A}_{\var{i}}[\var{t}]\gets  \var{u}$} 
      \State $\arr{Old}[\var{NbWrite}]\gets \arr{A}_{\var{i}}[\var{t}]$
      \State $\arr{Dest}[\var{NbWrite}]\gets \var{i}$
      \State $\arr{Index}[\var{NbWrite}]\gets \var{t}$
      \State $\arr{A}_{\var{i}}[\var{t}] \gets \var{u}$
      \State $\var{NbWrite}\gets \var{NbWrite}+1$
      \EndProcedure
      \end{algorithmic} %)4
      
      \vspace{1em}
      
      \end{minipage} %)2 COLONNE1
%%%%%%%%%%%%%%% second colon      
\begin{minipage}[t]{0.45\textwidth} %(5 COLONNE 2
 
 \vspace{1.5em}
 
 \begin{algorithmic}[1] %(6
\Procedure{StoreReturn}{$\var{u}$} 
\\ \Comment{called to replace $\mathtt{Return}\;\var{u}$}
\State $\var{ReturnValue}\gets \var{u}$
\EndProcedure
\end{algorithmic} %)6

      \vspace{1em}
      
%%%%%%%%%% 
\begin{algorithmic}[1] %(7
\Procedure{Restore}{$\mbox{}$}
%%%%%%%
\While {$\var{NbWrite}>0$}
  \State $\var{NbWrite}\gets \var{NbWrite}-1$
  \State $\var{i}\gets \arr{Dest}[\var{NbWrite}]$
  \If{$\var{i} > q$}
     \State $\var{ind}\gets \arr{Index}[\var{NbWrite}]$
     \State $\arr{A}_{\var{i}}[\var{ind}] \gets \arr{Old}[\var{NbWrite}]$
  \Else
      \State $\var{v}_{\var{i}} \gets \arr{Old}[\var{NbWrite}]$
  \EndIf
\EndWhile 
\\ \Comment{the initial state of the memory}
\\ \Comment{is now restored}
    \State \Return $\var{ReturnValue}$
\EndProcedure
\end{algorithmic} %)7

\end{minipage} %)5 COLONNE 2
      
\end{algorithm} % )1
 \end{minipage} %)0

\bigskip \noindent 
Note that the time overhead
(resp. the size of arrays $\arr{Old}[\;]$, $\arr{Index}[\;]$ and $\arr{Dest}[\;]$)
of the $\cstpopprime$ procedure
is proportional to the number of assignments performed by the original
constant-time procedure $\cstpop$.
Therefore, the $\cstpopprime$ procedure is also constant-time.
\end{proof}

The following result from~\cite{GrandjeanJachiet22} is a direct consequence of Lemma~\ref{lemma:fund}.

\begin{theorem}\label{th:transitivity_base}
Let $\OP$ be a set of operations including $+$ and 
let $\op$ be an operation in $\cpp(\OP)$. 

Then, each operation in $\cpp(\OP\cup\{\op\})$ is also in $\cpp(\OP)$.

This implies $\cpp(\OP\cup\{\op\}) = \cpp(\OP)$.
\end{theorem}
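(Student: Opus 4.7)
The plan is to apply the Fundamental Lemma (Lemma~\ref{lemma:fund}) directly. Fix an operation $\op' \in \cpp(\OP\cup\{\op\})$ and a target integer $d\ge 1$. By the definition of the class, there is a program $P$ of \emph{RAM($\OP\cup\{\op\}$)} consisting of a preprocessing phase of time $O(N)$ and a constant-time operation phase which reads $k$ integers $x_1,\dots,x_k<N^d$ and outputs $\op'_{N,d}(x_1,\dots,x_k)$. This $P$ is exactly the sort of program to which Lemma~\ref{lemma:fund} applies.

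Applying that lemma, $P$ can be faithfully simulated by a program $P'$ of \emph{RAM($\OP$)} after an additional linear-time initialization (which, in essence, performs a single call to $\linppop()$, the preprocessing procedure for the operation $\op$, plus some $O(N)$ bookkeeping). The faithfulness of the simulation means that the input/output instructions of $P'$ occur at positions corresponding to those of $P$, carry the same data, and the time between any two consecutive input/output instructions is multiplied by at most the constant lag factor $\lambda$. Consequently $P'$ still splits into the two phases required by the definition of $\cpp(\OP)$: an off-line part, executed before the first input instruction, consisting of the new $O(N)$ initialization for $\op$ together with the lock-step simulation of $P$'s $O(N)$ preprocessing --- in total still $O(N)$ --- and an on-line part that simulates $P$'s constant-time operation phase in time $O(\lambda)=O(1)$. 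Since $P'$ reads the very same operands $x_1,\dots,x_k<N^d$ and outputs the very same value $\op'_{N,d}(x_1,\dots,x_k)$, we obtain $\op'\in\cpp(\OP)$, hence $\cpp(\OP\cup\{\op\})\subseteq\cpp(\OP)$. The reverse inclusion is immediate because every \emph{RAM($\OP$)} program is a fortiori a \emph{RAM($\OP\cup\{\op\}$)} program, so the announced equality follows.

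Essentially all of the technical work has already been carried out in the proof of Lemma~\ref{lemma:fund}; the only thing remaining to check here is that the faithful simulation respects the two-phase, time-bounded structure imposed by the definition of $\cpp(\OP)$, which is immediate from the fact that lock-step simulation preserves time between input/output instructions up to the constant factor $\lambda$, together with the fact that $O(N)+O(\lambda N)=O(N)$ and $O(\lambda\cdot 1)=O(1)$. The main obstacle --- guaranteeing that calls to $\op$ can be invoked repeatedly without destructive side effects on the simulating memory --- has already been overcome by the non-destructive construction $\cstpopprime$ used inside Lemma~\ref{lemma:fund}, so no further difficulty arises.
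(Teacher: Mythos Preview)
Your proof is correct and follows exactly the paper's approach: the paper simply states that the theorem is ``a direct consequence of Lemma~\ref{lemma:fund}'' without further detail, and your argument spells out precisely why the faithful simulation preserves the two-phase linear-preprocessing/constant-time structure.
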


More generally, we deduce from Lemma~\ref{lemma:fund} similar results for all complexity classes defined by RAMs and which are closed by the addition of linear-time preprocessings.
This is the case for the classes $\textsc{LinTime}(\OP)$ and $\cdlin(\OP)$ which are 
the $\textsc{LinTime}$ and $\cdlin$ classes defined for RAM($\OP$).

\begin{theorem}\label{th:transitivity2}
Let $\OP$ be a set of operations including $+$ and 
let $\op$ be an operation in $\cpp(\OP)$. 
Then, the classes $\textsc{LinTime}$ and $\cdlin$ are invariant whether the primitive RAM operations belong to~$\OP$ or to $\OP\cup\{\op\}$. 
Formally, $\textsc{LinTime}(\OP\cup\{\op\}) = \textsc{LinTime}(\OP)$ and $\cdlin(\OP\cup\{\op\}) = \cdlin(\OP)$.
\end{theorem}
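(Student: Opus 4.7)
The plan is to derive both equalities from the Fundamental Lemma (Lemma~\ref{lemma:fund}), exploiting the fact that faithful simulation preserves the structural features defining each class. The inclusions $\textsc{LinTime}(\OP) \subseteq \textsc{LinTime}(\OP \cup \{\op\})$ and $\cdlin(\OP) \subseteq \cdlin(\OP \cup \{\op\})$ are immediate, since every RAM($\OP$) is a fortiori a RAM($\OP \cup \{\op\})$. For the converse inclusions, I would start from an arbitrary witness program $P$ and apply Lemma~\ref{lemma:fund} to obtain a RAM($\OP$) program $P'$ that faithfully simulates $P$ after a linear-time initialization, which I will call the ``$\op$-preprocessing''.

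For $\textsc{LinTime}(\OP \cup \{\op\}) \subseteq \textsc{LinTime}(\OP)$, the program $P$ executes at most $O(N)$ instructions; with the constant lag factor $\lambda$ of the lock-step simulation, the simulation phase of $P'$ performs at most $\lambda \cdot O(N) = O(N)$ instructions. Adding the linear $\op$-preprocessing still yields a RAM($\OP$) running in $O(N)$ total, so $P'$ witnesses membership in $\textsc{LinTime}(\OP)$.

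For $\cdlin(\OP \cup \{\op\}) \subseteq \cdlin(\OP)$, two features of faithful simulation do the real work. First, by conditions~1 and~2 of the simulation definition, the input/output instructions of $P'$ are exactly the images of those of $P$, in the same order and producing the same values; in particular the solutions are enumerated without repetition. Second, the observation stated just after Lemma~\ref{fact:array->R} ensures that the elapsed time between any two consecutive input/output instructions, as well as the time before the first output and after the last one, is multiplied by at most~$\lambda$; hence the original constant delay is preserved, and the ``stop immediately after the last solution'' requirement still holds. The preprocessing phase of $P'$ is obtained by concatenating the $\op$-preprocessing of Lemma~\ref{lemma:fund} with the (simulated) preprocessing of $P$, both of which are linear in $N$, so the total preprocessing is still linear; hence $P'$ witnesses membership in $\cdlin(\OP)$.

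The essentially only delicate point is this preservation of \emph{per-delay} constants in the $\cdlin$ case: a weaker, purely asymptotic time simulation would not control individual inter-output delays and the argument for $\cdlin$ would break down. This is precisely why the strong lock-step notion, which Lemma~\ref{lemma:fund} delivers, is indispensable here. Once this is taken for granted, both equalities reduce to a straightforward bookkeeping of linear and constant time bounds.
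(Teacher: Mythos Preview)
Your proposal is correct and takes exactly the approach the paper intends: the paper gives no separate proof for Theorem~\ref{th:transitivity2}, merely stating that it is deduced from Lemma~\ref{lemma:fund} for any class ``closed by the addition of linear-time preprocessings''. Your argument is a careful elaboration of precisely this deduction, and your emphasis on why the lock-step (rather than merely asymptotic) nature of the simulation is essential for $\cdlin$ is well placed.

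One small omission: the definition of $\cdlin$ in Subsection~\ref{subsec:ComplexClasses} also requires \emph{constant space} during the enumeration phase, which you do not address. This is easily handled within your framework: the procedure $\cstpopprime$ constructed in the proof of Lemma~\ref{lemma:fund} uses auxiliary arrays $\arr{Old}$, $\arr{Index}$, $\arr{Dest}$ whose size is bounded by the number of writes performed by the constant-time procedure $\cstpop$, hence by a constant independent of~$N$; moreover, the memory is restored after each call. Thus the simulated enumeration phase writes to at most a constant number of additional cells, and the constant-space requirement is preserved.
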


It follows from Theorem~\ref{th:transitivity_base} that the $\cpp(\OP)$ class is closed under composition.
\begin{theorem}\label{th:closedCompose}
Let $\OP$ be a set of operations including $+$. 
Let $g$ be an $m$-ary operation in $\cpp(\OP)$ and let $h_1,\dots,h_m$ 
be $m$ polynomial~\footnote{To define a composition of operations in such a way that it always makes sense in our framework, it seems necessary to assume that the intermediate results are polynomial in $N$.} 
$k$-ary operations in $\cpp(\OP)$. 
Then the composed $k$-ary operation $f$  
defined as $f(x_1,\dots,x_k) \coloneqq g(h_1(x_1,\dots,x_k),\dots,h_m(x_1,\dots,x_k))$
also belongs to $\cpp(\OP)$.
\end{theorem}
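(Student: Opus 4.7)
The plan is to reduce the statement to the already-established transitivity of the class. Since $h_1, \dots, h_m, g$ all belong to $\cpp(\OP)$, I would iterate Theorem~\ref{th:transitivity_base}, adding $h_1, h_2, \dots, h_m, g$ one by one to $\OP$ (at each step the newly added operation lies in $\cpp$ of the current set, because $\cpp(\OP) \subseteq \cpp(\OP')$ whenever $\OP \subseteq \OP'$). This yields
\[
\cpp(\OP) \;=\; \cpp\bigl(\OP \cup \{h_1, \dots, h_m, g\}\bigr),
\]
so it suffices to exhibit $f$ as an element of the right-hand side.

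In this enlarged setting, the algorithm for $f_{N,d,d'}$ is essentially trivial: no preprocessing of its own is needed, and the constant-time procedure, on input $(x_1, \dots, x_k) \in [0,N^d[^k$, issues $m$ primitive calls $y_i \gets h_i(x_1, \dots, x_k)$ and then returns the primitive call $g(y_1, \dots, y_m)$. The polynomiality hypothesis on the $h_i$ is exactly what makes this well-typed: it provides integers $q_1, \dots, q_m$ with $h_i(x_1,\dots,x_k) < N^{q_i d}$, so that setting $D \coloneqq \max_i q_i d$, each $h_i$ is invoked in its regime $h_{i,N,d,q_i d}$ and $g$ is invoked in its regime $g_{N,D,d'}$ — both covered by the corresponding $\cpp$ primitive and both returning either a genuine value or $\mathtt{overflow}$ in constant time.

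The heavy lifting is concealed inside Theorem~\ref{th:transitivity_base} (and ultimately Lemma~\ref{lemma:fund}): it unfolds each primitive call of $h_i$ or $g$ into its linear-time preprocessing plus its constant-time computation through a faithful simulation, so that concatenating a constant number of linear-time preprocessings remains $O(N)$, and a constant number of constant-time calls remains constant time overall. The only genuine obstacle — and the reason the theorem restricts to \emph{polynomial} inner functions $h_i$ — is the matching of polynomial bounds: a super-polynomial $h_i$ would produce an intermediate value $y_i$ whose mere representation would require super-constant space, precluding any constant-time invocation of $g$ on it.
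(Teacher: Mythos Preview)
Your proposal is correct and follows essentially the same approach as the paper: iterate Theorem~\ref{th:transitivity_base} to obtain $\cpp(\OP)=\cpp(\OP\cup\{g,h_1,\dots,h_m\})$, then observe that $f$ trivially belongs to the enlarged class. Your version is in fact more detailed than the paper's, spelling out explicitly how the polynomiality of the $h_i$ ensures the intermediate values fit within a fixed polynomial bound $N^D$ so that the call to $g$ is well-typed.
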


\begin{proof} An immediate consequence of Theorem~\ref{th:transitivity_base} is the following generalization: 
if $\op_1,\dots,\op_q$ are operations in $\cpp(\OP)$ with $+\in\OP$, then 
$\cpp(\OP\cup\{\op_1,\dots,\op_q\}) = \cpp(\OP)$. 
In particular, since $g,h_1,\dots,h_m \in \cpp(\OP)$, we obtain
$\cpp(\OP\cup\{g,h_1,\dots,h_m\}) = \cpp(\OP)$.
The theorem follows from the fact that $f$ obviously belongs to $\cpp(\OP\cup\{g,h_1,\dots,h_m\})$.
\end{proof}

\noindent
\emph{How to prove that an arithmetic operation is in $\cpp$?}
The basic principle is simple: 
use a ``mini-table'' of the operation, e.g. the multiplication table $A_{\times}$ in the next Section~\ref{sec:timesDiv}. 

If the operation is definable by simple arithmetic induction, which is the case for the usual arithmetic operations, addition, subtraction, multiplication, division, etc., and if the base $B$ is chosen judiciously to represent the operands, often $B\coloneqq \lc N^{1/2}\rc$ (sometimes $B\coloneqq \lc N^{1/c}\rc$ for a fixed integer $c>2$), we can obtain a mini-table of size $O(N)$ and compute it recursively in linear time; 
for example, the multiplication table $A_{\times}[0..B-1][0..B-1]$ of size $B^2=O(N)$, which is defined by 
$A_{\times}[x][y]\coloneqq x\times y$ for $x,y<B$.

There is a technical subtlety. 
In reality, our mini-table will be the one-dimensional table encoding the two-dimensional array above according to the usual ``sequential representation of multidimensional arrays'', see for example Section 2.4 of \cite{HorowitzS84}.
In particular, the multiplication table that we will actually compute and use will be the one-dimensional array 
$A_{\times}[0..B^2-1]$ defined by\footnote{Note that we are using the product $B \times x$ to define multiplication here! 
This seems like a vicious circle but it can be avoided: %if we consider that we can use an array $\textsc{Mult}B[0..B-1]$ defined by $\textsc{Mult}B[x]\coloneqq B\times x$. 
replace $B \times x$ by the element $\textsc{Mult}B[x]$ of an array $\textsc{Mult}B[0..B-1]$ defined by 
$\textsc{Mult}B[x]\coloneqq B\times x$. 
This array can be pre-computed inductively \emph{using only addition}.}
$A_{\times}[B \times x+y]\coloneqq x\times y$ for $x,y<B$.

%%%%%%%%%%%%%%%%%%%%%%%%%%
\section{Multiplication and division are constant-time computable}\label{sec:timesDiv}

In this section, we sketch proofs that the multiplication and division
operations, $(x,y)\mapsto x \times y$ and $(x,y)\mapsto \lf x/y \rf$,
are in $\cpp$.  Recall that we assume that each input operand is
polynomial, \emph{i.e.} less than $N^d$, for a fixed integer $d$, and
is given (read) as the list of its $d$ digits in base $N$.  
To implement in constant time an arithmetic binary operation
$\mathtt{op}$ such as multiplication on operands less than $N$, or
less than $N^d$, it is convenient to convert these operands from base $N$ to 
base $B\coloneqq \lc N^{1/2} \rc$, which can be pre-computed by a RAM(+) in time $O(N)$ as established in  Appendix~\ref{app:B}.
This has two advantages:
\begin{itemize}
\item it will allow us to pre-compute a ``mini-table'' of the values $\op(x,y)$
  for all integers $x,y<B$, in time $O(B^2)=O(N)$;
\item any integer $x<N$ (and therefore less than $B^2$) is represented
  in base $B$ by the two-digit number $(x_1,x_0)_B$, with $x_1=x \divop B$
  and $x_0=x\modop B$; this principle can be extended to integers
  $x<cN$, for any fixed integer $c\ge 2$.
\end{itemize}
The conversion of integers from base $N$ to base $B$ will be done using the ``modulo $B$'' and ``div $B$'' tables that we now present.
The arrays $\textsc{Mod}B[0..cN]$ and $\textsc{Div}B[0..cN]$ defined by
$\textsc{Mod}B[x]\coloneqq x\modop B$ and
$\textsc{Div}B[x]\coloneqq x\divop B$ are (simultaneously) pre-computed in time $O(N)$ by the inductive formula
\begin{equation*}
(\textsc{Div}B[x], \textsc{Mod}B[x]) \coloneqq
\begin{cases}
(0,0) & \mathtt{if}\;x=0\\
(\textsc{Div}B[x-1]+1,0) & \mathtt{if}\;x>0 \;\land\; 
\textsc{Mod}B[x-1]=B-1\\
(\textsc{Div}B[x-1], \textsc{Mod}B[x-1]+1)& \mathtt{otherwise}.
\end{cases}
\end{equation*}

\noindent
The pre-computed mini-table of a binary operation $\op$, such
as subtraction or multiplication, on integers less than $B$, will be a one-dimensional array $A_{\op}[0..B^2-1]$ defined for all $x,y<B$ by 
\[ A_{\op}[Bx+y] \coloneqq \op(x,y).
\]
To manipulate such a table, we will use the additional array $\textsc{Mult}B[0..B-1]$ defined by
$\textsc{Mult}B[x]\coloneqq Bx$ and pre-computed in time $O(B)$ by the following induction: 

$\textsc{Mult}B[0]=0$ and $\textsc{Mult}B[x]=\textsc{Mult}B[x-1]+B$ when $x\neq 0$.

\smallskip \noindent
From now on, for the sake of simplicity, we write $Bx$ instead of $\textsc{Mult}B[x]$.

\smallskip
In particular, we have for multiplication $A_{\times}[Bx+y] \coloneqq x \times y$, for $x,y<B$.
Therefore, the induction $x\times 0=0$ and $x\times y=x\times (y-1) +x$, for $x,y<B$, is translated into the following inductive pre-computation of the array $A_{\times}[0..B^2-1]$, which can be implemented in time $O(B^2)=O(N)$ using two nested loops on $x$ and $y$:
\begin{equation}\label{eq:tablemult}
A_{\times}[Bx+y] \coloneqq
\begin{cases}
\;0 & \mathtt{if}\;\;y=0 \\ 
A_{\times}[Bx+(y-1)] + x & \mathtt{otherwise}.
\end{cases}
\end{equation}

\noindent
Thus, the array $A_{\times}[0..B^2-1]$ is a multiplication table  
for integers less than $B$.
 
\begin{notation}[``two-dimensional'' array]
For an array $A[0..B^2-1]$ and integers $x,y<B$, it is intuitive to write $A[x][y]$ to denote $A[Bx+y]$.
With this notation, the previous result is rewritten $A_{\times}[x][y]=x\times y$, for $x,y<B$.
\end{notation}

\begin{notation}[radix-$b$ notation]
 We write $x=(x_{k-1},\dots,x_1,x_0)_b$ to denote the representation of an integer $x$ in a base $b\ge 2$ --  \emph{radix-$b$ notation} or \emph{$b$-notation} -- with $k$ digits: $x=\Sigma_{i=0}^{k-1} x_i\times b^i$.
 \end{notation}

To compute in constant time the product of two integers $x,y<B^d$ given by their $B$-notations 
$(x_{d-1},\dots,x_1,x_0)_B$ and $(y_{d-1},\dots,y_1,y_0)_B$, we proceed in two stages:

\medskip
1) We compute the ``pseudo-representation'' in base $B$ of the product $z=x\times y=$ \linebreak
$z'_{2d-1}B^{2d-1}+\cdots+z'_1B+z'_0$ by the ``for'' instruction 
 
for each $i=0,1,\dots,2d-1$ set
$z'_j \gets \sum_{i=0}^j A_{\times}[x_i] [y_{j-i}]$

\noindent
where $x_i,y_i$ are zero for $i\ge d$. 
Note that each $z'_j$ is less than $dB^2$, which is $O(N)$, and can therefore be entirely contained in a single register.
 
\medskip
2) We normalize the representation $(z'_{2d-1},\dots,z'_1,z'_0)$ of $z$ to its $B$-notation 
$(z_{2d-1},\dots,z_1,z_0)_B$, 
 by executing the following 
 loop which uses the new variables $r_0=0,r_1,\dots,r_{2d-1}$:
 \begin{center}
 for $i$ from 0 to $2d-1$ set
 $(r_{i+1}, z_i) \gets ( \textsc{Div}B[z'_i + r_i], \textsc{Mod}B[z'_i + r_i])$.
 \end{center}
 This is possible because $z'_i + r_i=O(N)$, for all $i$.
 
  \smallskip
 Note that it is even easier to add two integers $x,y<B^d$ represented in base $B$ using a similar normalization loop\footnote{Also note that to compute a difference 
 $x-y$ for $0\le y \le x <B^d$, taking into account the carries, we can pre-compute an array $A_{\mathtt{diff}}[0..4B^2-1]$ such that $A_{\mathtt{diff}}[2Bx+y]\coloneqq \max(0,x-y)$ 
for $0 \le x,y < 2B$.
This also allows you to use tests of the form ``if $x<y$ then...'', etc. computable in constant time.
See~\cite{GrandjeanJachiet22} for more details.}. 

\medskip \noindent
{\bf Converting integers from base $N$ to base $B$.}
We are now ready to convert an input integer $X=(x_{d-1},\dots,x_1,x_0)_N$ from base $N$ to base $B$.
For this, we use Horner's method. 
Let $X_i$ be the integer whose radix-$N$ notation is $(x_{d-1},\dots,x_i)_N$, for $i=0,\dots,d-1$, 
and let $X_d\coloneqq 0$. 
We convert to base $B$ the integers $X_{d-1},\dots,X_1$ and $X_0=X$, in this order, by a decreasing recurrence.
Suppose that, for a certain $i<d$, the integers $X_{i+1}$, $N$ and $x_i$ are given by their $B$-notations (for example, it is $(N_1,N_0)_B$ for $N$, where 
$N_1=\textsc{Div}B[N]$ and $N_0=\textsc{Mod}B[N]$; for $X_{i+1}$, it is given by the recurrence hypothesis).
From Horner's formula $X_{i}=X_{i+1}\times N +x_i$, we obtain the $B$-notation of $X_i$ from that of $X_{i+1}$, by one multiplication and one addition, using the multiplication and addition procedures in base $B$ described above.

Note that the converse conversion from base $B$ to base $N$, used e.g. to present the outputs in the same base $N$ as inputs, can be carried out in a similar way by Horner's method. 
That is left to the reader.

\medskip
Of course, the most difficult of the four usual operations is division. 
However, when the divisor has only one digit, the quotient can be easily calculated in constant time using the classical method which takes into account the two most significant digits of the dividend to compute the leading digit of the quotient.

\medskip \noindent
{\bf Dividing by a one-digit integer.} 
Let $x=(x_{d-1},\dots,x_0)_B$ be an integer and $y<B$ be a positive integer.
The classical method computes the $B$-notation $(z_{d-1},\dots,z_0)_B$ of the quotient $z\coloneqq \lf x/y \rf$ by executing the following loop which uses the variables $r_d \coloneqq 0,r_{d-1},\dots,r_1,r_0$:

\smallskip
for $i$ from $d-1$ downto 0 set $z_i \gets (Br_{i+1}+x_i) \divop y\;$ and
$\; r_i \gets (Br_{i+1}+x_i) \modop y$.

\smallskip \noindent
Note that we have to compute in constant time expressions of the form $(Br+x) \divop y$ and
$(Br+x) \modop y$ for $r,x<B$. 
This can be done using the tables $A_{\mathtt{div}}[B^2-1]$, 
$A_{\mathtt{mod}}[B^2-1]$, $A_{B\times\mathtt{div}}[B^2-1]$
and $A_{B\times\mathtt{mod}}[B^2-1]$ defined by
$A_{\mathtt{div}}[x][y]\coloneqq x\divop y$, $A_{\mathtt{mod}}[x][y]\coloneqq x\modop y$,
$A_{B\times\mathtt{div}}[x][y] \coloneqq (Bx) \divop y$ and 
$A_{B\times\mathtt{mod}}[x][y] \coloneqq (Bx) \modop y$, for $x,y<B$, which can be pre-computed in time $O(B^2)=O(N)$. 
We leave it to the reader to fill in the details.

\medskip \noindent
{\bf Dividing by a large integer.} 
We present the tricky solution presented in Knuth's book~\cite{Knuth69}, Section~4.3.1, and attributed to Pope and Stein~\cite{PopeS60}. 
Let $x,y$ be positive integers, given by their $B$-notations, with $y \le x <B^d$. 
We want to compute the $B$-notation of the quotient $\lf x/y \rf$. 

\smallskip
-- First, we will apply the classical strategy which reduces the division problem to the problem of computing, for the integer $u$ whose $B$-notation is the prefix of the $B$-notation of $x$ such that $y\le u < B y$ and has exactly one more digit as that of $y$, the one-digit quotient $q\coloneqq \lf u/y \rf$. 

\smallskip
-- Second, by Pope and Stein's method~\cite{PopeS60} described in~\cite{Knuth69}, we will justify an approximation $\tilde{q}-2 \le q \le \tilde{q}$ of the quotient $q$, where the approximated quotient $\tilde{q}$ can be computed in constant time from the two leading digits of $u$ and the leading digit of $y$. 

This will allow us to compute successively, in constant time, $\tilde{q}$ and then $q$, and finally $\lf x/y \rf$.

\medskip
1) \emph{Classic strategy\footnote{E.g., if in decimal notation
we have $x=23768$ and $y=65$ then we obtain $x=237\times 10^2+68$, $y=(6,5)_{10}$ and $u=(2,3,7)_{10}$;
now if we take $x=36524$ and $y=14$ then we obtain $x=36\times 10^3+524$, $y=(1,4)_{10}$ and $u=(0,3,6)_{10}$, note the zero completion; 
in both cases, the number of digits used to represent $y$ (resp. $u$) is $m=2$ (resp. $m+1=3$).
}:}
We find (in constant time) the unique integers~$u,k,x'$ and $m$ such that 
$x=u\times B^k+x'$, $x'<B^k$ and $y\le u < B\times y$, 
with $y=(y_{m-1},\dots,y_0)_B$, $\;y_{m-1}\ge 1$, and $u=(u_m,u_{m-1},\dots,u_0)_B$.

\smallskip
If $q$ and $r$ denote the quotient and the remainder of the division of $u$ by $y$, we obtain 
\begin{center}
$x=(qy+r)\times B^k +x'=qy B^k + rB^k+x'$ with $1\le q <B$ and $0\le r <y$.
\end{center}
Note that we have $rB^k+x'<(y-1)B^k+B^k=y\times B^k.$
If we set $x''\coloneqq rB^k+x'$, we obtain $x/y=q\times B^k + x''/y.$
This gives the ``recursive'' formula
\begin{center}
$\lf x/y\rf=q B^k + \lf x''/y\rf$ with $x''<B^k.$
\end{center}
In particular, the positive quotient $q=\lf u/y\rf<B$ is the \emph{leading} digit of the quotient $\lf x/y \rf$.
This \emph{justifies} the classic division strategy that we have just recalled.
Thus, to prove that the quotient $\lf x/y\rf$ is computable in constant time, it \emph{suffices} to prove that the quotient $q=\lf u/y\rf$ can be computed in constant time (because the integers $u$ and $x''= rB^k+x'$ themselves are computed in constant time). 

\medskip
2) \emph{Computing the leading digit $q$ of the quotient:} The following surprising (non intuitive) lemma shows that there are at most three possible values for $q$ if the most significant digit of the divisor $y$ is at least 
$\lf B/2 \rf$.

\smallskip
Basically, this lemma expresses that the computation of the leading digit $q$ of the quotient $\lf u/y \rf$ reduces  to that of the quotient $\lf u_{(2)}/v\rf$  
of the integer $u_{(2)}\coloneqq (u_{m},u_{m-1})_B$ which is the ``prefix'' of length 2 (the two leading digits) of 
$u=(u_{m},\dots,u_0)_B$ by the leading digit $v$ of $y$.

\begin{lemma}[Approximation Lemma: Theorems A and B in~\cite{Knuth69}]\label{lemma:tildeq}
In the notation above, if we have $v\ge \lf B/2 \rf$ and if we set 
$\tilde{q} \coloneqq \min \left(\lf u_{(2)} / v \rf, \;B-1 \right)$,
then the leading digit $q$ of $\lf x/y \rf$ satisfies $\tilde{q}-2 \le q \le \tilde{q}$.
\end{lemma}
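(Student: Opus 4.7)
The plan is to prove the two bounds $q \le \tilde{q}$ and $q \ge \tilde{q} - 2$ separately, working from the two elementary size estimates
\[
u_{(2)} B^{m-1} \le u < (u_{(2)} + 1) B^{m-1}, \qquad v B^{m-1} \le y < (v+1) B^{m-1},
\]
which follow directly from the $B$-notations of $u$ and $y$ and the fact that the trailing $m-1$ digits contribute strictly less than $B^{m-1}$.

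For the upper bound $q \le \tilde{q}$, I will argue by contradiction. In the clipped case $\tilde{q} = B-1$, assuming $q \ge B$ would force $qy \ge By > u$, contradicting $u \ge qy$. Otherwise $\tilde{q} = \lf u_{(2)}/v \rf$, and $q \ge \tilde{q}+1 > u_{(2)}/v$ yields the integer inequality $qv \ge u_{(2)}+1$; multiplying by $B^{m-1}$ and using $y \ge v B^{m-1}$ gives $qy \ge (u_{(2)}+1) B^{m-1} > u$, again a contradiction.

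The harder direction is $q \ge \tilde{q} - 2$, and this is where the normalization hypothesis $v \ge \lf B/2 \rf$ will play its decisive role. I would again proceed by contradiction, assuming $q \le \tilde{q} - 3$. First I establish $\tilde{q} v \le u_{(2)}$, which holds in both cases (by definition of the floor when $\tilde{q} = \lf u_{(2)}/v \rf$, and because $\lf u_{(2)}/v \rf \ge B-1$ when $\tilde{q}$ is clipped). Combined with $u \ge u_{(2)} B^{m-1}$, this gives the lower estimate $u \ge \tilde{q} v B^{m-1}$. On the other hand, the assumption together with $u < (q+1)y$ and $y < (v+1) B^{m-1}$ gives the upper estimate $u < (\tilde{q}-2)(v+1) B^{m-1}$. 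Chaining the two produces $\tilde{q} v < (\tilde{q}-2)(v+1)$, which simplifies to $\tilde{q} > 2v+2$. The contradiction is then immediate: $v \ge \lf B/2 \rf$ implies $2v+2 \ge B+1$, while $\tilde{q} \le B-1$ by construction, so $\tilde{q} > 2v+2 \ge B+1 > \tilde{q}$.

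The main obstacle I anticipate is the bookkeeping around the clipped case $\tilde{q} = B-1$, both in the upper bound (where one must invoke $u < By$ directly) and in the key inequality $\tilde{q} v \le u_{(2)}$ used for the lower bound. Once this is handled, the proof is a clean chain of elementary estimates, with the normalization $v \ge \lf B/2 \rf$ intervening only at the very last step, which transparently shows why this hypothesis is essentially sharp: weakening it to $v \ge \lf B/2 \rf - 1$ would already make the final inequality $2v+2 > \tilde{q}$ fail.
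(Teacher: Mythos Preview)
Your proof is correct. The upper bound $q\le\tilde q$ is handled in essentially the same way as the paper (contradiction from $qy>u$, using $y\ge vB^{m-1}$ and $u<(u_{(2)}+1)B^{m-1}$; the paper phrases it as a direct proof of $u<(\tilde q+1)y$, but the key inequality is identical).

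For the lower bound $q\ge\tilde q-2$ your route genuinely differs from the paper's. The paper follows Knuth's original argument: from $\tilde q\le u_{(2)}/v\le u/(vB^{m-1})<u/(y-B^{m-1})$ and $q>u/y-1$ it bounds the difference
\[
3\le \tilde q-q < \frac{u}{y-B^{m-1}}-\frac{u}{y}+1=\frac{u}{y}\cdot\frac{B^{m-1}}{y-B^{m-1}}+1,
\]
deduces $u/y>2(v-1)$, whence $q\ge 2v-2$, and combines this with $q\le\tilde q-3\le B-4$ to force $v\le B/2-1$. Your argument is more elementary: you sandwich $u$ directly between $\tilde q\,vB^{m-1}$ and $(\tilde q-2)(v+1)B^{m-1}$, extract the integer inequality $\tilde q v<(\tilde q-2)(v+1)$, and simplify to $\tilde q>2v+2$, which collides with $\tilde q\le B-1$ and $2v+2\ge 2\lfloor B/2\rfloor+2\ge B+1$. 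This avoids rational expressions entirely and, in particular, sidesteps the implicit requirement $y>B^{m-1}$ needed for the paper's division by $y-B^{m-1}$. The paper's version has the advantage of being the classical, citable proof; yours has the advantage of being a cleaner self-contained chain of integer estimates.
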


\begin{proof}
See the proof in Appendix~\ref{subsec:ApproxLemma} (or that of Theorems A and B in Section~4.3.1 of~\cite{Knuth69}).
\end{proof}

As Knuth~\cite{Knuth69} writes, ``the most important part of this result is that the conclusion is independent of $B$: no matter how large $B$ is, the trial quotient $\tilde{q}$ will never be more than 2 in~error''.

\medskip
Note that the term $\lf u_{(2)} / v \rf$ mentioned in the statement of Lemma~\ref{lemma:tildeq} can be computed in constant time by the procedure given in the paragraph ``Dividing by a one-digit integer'' because $v<B$. 
Thus, $\tilde{q}$ can be computed in constant time and so can $q$ itself as follows: 

\smallskip
if $x\ge \tilde{q}\times y$ then $q\gets \tilde{q}$ 
else if $x\ge (\tilde{q}-1)\times y$ then $q\gets \tilde{q}-1$
else $q\gets \tilde{q}-2$.

\medskip
It remains to explain how, without loss of generality, the condition $v\ge \lf B/2 \rf$ of 
Lemma~\ref{lemma:tildeq}, called by Knuth~\cite{Knuth69} a ``normalization condition'', can be assumed. 
Indeed, if $v<\lf B/2 \rf$ then Fact~\ref{fact:mu} below states that the normalization condition can be obtained if we multiply both $u$ and $y$ by $\mu \coloneqq \lf B/(v+1) \rf<B$,
which does not change the value of~$u/y$.

\smallskip
From the $B$-notation $(y_{m-1},\dots,y_0)_B$ of $y$, the following classical procedure computes in constant time the $B$-notation $(y'_m,y'_{m-1},\dots,y'_0)_B$ of the product 
$y'\coloneqq y\times\mu$ using the variables $c_0\coloneqq 0,\;c_1,\dots,c_m$ for successive carries and $z_0,\dots,z_{m-1}$ for intermediate results:

\smallskip \noindent
for $i$ from $0$ to $m-1$ set $z_i\gets y_i\times \mu+c_i$ and set 
$(c_{i+1},y'_i)\gets (z_i \divop B,\;z_i \modop B)$ ; $y'_m\gets c_m$.

\begin{fact}\label{fact:mu} 
\begin{itemize}
\item In this product procedure,
the carry $c_i$ is at most $(B-1)/(v+1)$, for each $i<m$;
\item we have $y'_m=c_m=0$ so that $y\times\mu$ is $y'=(y'_{m-1},\dots,y'_0)_B$, i.e. $y'$ has $m$ digits like $y$;
\item if $v=y_{m-1}<\lf B/2 \rf$, 
then $y'_{m-1} \ge \lf B/2 \rf$ ($y'$ satisfies the normalization condition).
\end{itemize}
\end{fact}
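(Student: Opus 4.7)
The plan is to prove the three bullets in turn, relying throughout on the two defining properties of the floor division: $(v+1)\mu \le B$, and, writing $r \coloneqq B - (v+1)\mu$, also $(v+1)\mu \ge B - v$ since $0 \le r \le v$.

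First, I would establish the carry bound $c_i \le (B-1)/(v+1)$ by induction on $i$. The base case $c_0 = 0$ is immediate. For the inductive step, from $y_i \le B-1$, the inductive hypothesis on $c_i$, and $\mu \le B/(v+1)$, multiplying $z_i = y_i\mu + c_i$ through by $v+1$ yields $(v+1) z_i \le B(B-1) + (B-1) = B^2 - 1$. Since $c_{i+1} B \le z_i$, it follows that $(v+1) B c_{i+1} \le B^2 - 1 < B^2$, hence $(v+1)c_{i+1} \le B-1$, as required.

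For the second bullet, the relation $y_i\mu + c_i = y'_i + c_{i+1} B$, multiplied by $B^i$ and telescoped over $i=0,\dots,m-1$, gives the identity $\mu y = \sum_{i=0}^{m-1} y'_i B^i + c_m B^m$ (using $c_0 = 0$). Since $y < (v+1) B^{m-1}$ (its leading digit being $v$), I get $\mu y < (v+1)\mu B^{m-1} \le B^m$, forcing $c_m = 0$; and $c_m = y'_m$ by construction of the procedure.

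The third bullet is the substantive one. From $y \ge v B^{m-1}$ and the identity of the previous paragraph, $y'_{m-1} = \lfloor y\mu/B^{m-1}\rfloor \ge \lfloor B/2\rfloor$ reduces to the purely numerical inequality $v\mu \ge \lfloor B/2\rfloor$ under the assumption $1 \le v \le \lfloor B/2\rfloor - 1$. I would split on $v$: if $v = 1$, then $\mu = \lfloor B/2\rfloor$ and equality is immediate. If $v \ge 2$, I would exploit the exact identity $v\mu = v(B - r)/(v+1)$ and reduce the claim to $v(B - r) \ge (v+1)\lfloor B/2\rfloor$. The main obstacle is that natural relaxations (such as replacing $r$ by its worst case $v$, or using only the lower bound $\mu \ge (B-v)/(v+1)$) yield rational bounds that are just barely too weak at the extreme values $v = 1$ and $v = \lfloor B/2\rfloor - 1$; so the argument will either proceed by a short case analysis on the parity of $B$, or exploit the fact that $v\mu$ is an integer to round up a slightly weaker rational bound, using the hypothesis $v + 1 \le B/2$ (equivalently $B \ge 2v + 2$) to close the calculation.
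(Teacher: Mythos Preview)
Your first two bullets are fine. The induction for the carry bound is the paper's argument, just scaled through by $v+1$. For the second bullet you take a different route: the paper argues locally, using the carry bound at $i=m-1$ together with $y_{m-1}=v$ to get $z_{m-1}=v\mu+c_{m-1}\le \frac{Bv}{v+1}+\frac{B-1}{v+1}<B$ directly, whereas you telescope globally to $\mu y=\sum_i y'_iB^i+c_mB^m$ and bound $\mu y<(v+1)\mu B^{m-1}\le B^m$. Both are correct; yours is arguably more conceptual, the paper's gives the exact value of $z_{m-1}$ which it then reuses for the third bullet.

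The third bullet is where your proposal stops short: you correctly reduce to the numerical inequality $v\mu\ge\lfloor B/2\rfloor$ and then describe two possible strategies without carrying either out. The paper's argument is exactly your second option (``round up a slightly weaker rational bound using integrality''), and the step you are missing is a one-line factoring: the difference
\[
v\Bigl(\frac{B}{v+1}-1\Bigr)-\Bigl(\frac{B}{2}-1\Bigr)\;=\;\frac{(v-1)\,(B/2-v-1)}{v+1}
\]
is nonnegative because $v\ge 1$ and $v+1\le\lfloor B/2\rfloor\le B/2$. Combined with the strict $\mu>\frac{B}{v+1}-1$, this yields $v\mu>B/2-1\ge\lfloor B/2\rfloor-1$, and integrality of $v\mu$ (hence of $y'_{m-1}\ge v\mu$) finishes. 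Your instinct that the hypothesis $v+1\le B/2$ is the hinge was exactly right; the factoring is what turns it into a clean two-line proof and makes your proposed parity split and the separate treatment of $v=1$ unnecessary.
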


\begin{proof}
The first point of Fact~\ref{fact:mu} is proven by induction on the index $i=0,1,\dots,m-1$. 
We have $c_0\coloneqq 0$. 
Suppose that $c_i\le (B-1)/(v+1)$ for an index $i<m-1$.
We obtain at index $i$ the integer $z_i = y_i\times \mu + c_i$,
which gives $y'_i = z_i \modop B$ and $c_{i+1} = z_i \divop B$. 
We therefore get 
\begin{center}
$z_i \le B(B-1)/(v+1) +(B-1)/(v+1)$,
\end{center}
from which we deduce
$c_{i+1}= \lf z_i/B \rf \le (B-1)/(v+1)$ because $(B-1)/(v+1)<B$. 
This concludes the proof by induction of the first point of Fact~\ref{fact:mu}.
In particular, for $i=m-1$, we have 
$c_{m-1} \le (B-1)/(v+1)$, hence
$z_{m-1}=v\times \mu + c_{m-1}\le \frac{Bv}{v+1}+\frac{B-1}{v+1}=\frac{B(v+1)-1}{v+1}<B$.

\smallskip \noindent
We therefore obtain $y'_{m-1}=z_{m-1}<B$ and $y'_m=c_m=0$
so that the product $y\times \mu$ can be written with $m$ digits like $y$.

\medskip
It remains to prove $y'_{m-1}\ge \lf B/2 \rf$. 
We will use the following inequality.

\begin{claim}\label{claim:ineq}
$v\left(\frac{B}{v+1}-1\right)\ge B/2 -1$.
\end{claim}

\begin{proof}[Proof of the claim]
$v\left(\frac{B}{v+1}-1\right)-(B/2 -1)= (vB-v^2-v-vB/2-B/2+v+1)/(v+1)=$

\smallskip \noindent
$(vB/2-v^2-B/2+1)/(v+1)= ((v-1)B/2-(v-1)(v+1))/(v+1)=(v-1)(B/2-v-1)/(v+1)$.

\smallskip \noindent
This last expression is $\ge 0$  
because we have $v\ge 1$ and $v<\lf B/2 \rf \le B/2$ so that
$v+1\le \lf B/2 \rf \le B/2$.
Claim~\ref{claim:ineq} is proven.
\end{proof}

\noindent
\emph{End of the proof of Fact~\ref{fact:mu}.}
We have  $y'_{m-1}= z_{m-1} 
\ge y_{m-1}\times \mu = v \times \lf \frac{B}{v+1}\rf>v \left(\frac{B}{v+1}-1\right)$.
From Claim~\ref{claim:ineq}, we deduce
$y'_{m-1}> B/2 -1\ge \lf B/2 \rf -1$.
Hence, $y'_{m-1}\ge  \lf B/2 \rf$.
\end{proof}

Thus, we have established that, for each fixed $d$, the division operation 
$(x,y)\mapsto \lf x/y \rf$, for $0<y\le x <N^d$, is, like $+,-,\times$, computable in constant time with linear preprocessing.

\begin{corollary}\label{cor:divmod}
The $\mathtt{div}$ and $\;\mathtt{mod}$ operations belong to $\cpp$.
\end{corollary}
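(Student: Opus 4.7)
The proof is essentially immediate from the development that precedes the corollary in Section~\ref{sec:timesDiv}. To obtain $\mathtt{div}\in\cpp$, I would package that construction as follows. During preprocessing I build the $O(1)$ many tables of size $O(N)$ used throughout the section --- $\textsc{Mod}B$, $\textsc{Div}B$, $\textsc{Mult}B$, the multiplication mini-table $A_{\times}$, the one-digit division and modulo mini-tables $A_{\mathtt{div}}$, $A_{\mathtt{mod}}$, $A_{B\times\mathtt{div}}$, $A_{B\times\mathtt{mod}}$, and $A_{\mathtt{diff}}$ for comparisons --- in total time $O(N)$. The on-line procedure, given $x,y<N^d$ read in base $N$, first converts both operands to base $B$ by Horner's method in constant time (since $d$ is fixed), then splits in constant time: if $y=0$ return $\mathtt{overflow}$; if $y>x$ return $0$; otherwise run the Pope--Stein routine, i.e.\ isolate the prefix $u$ of $x$ with $y\le u<By$, normalise by $\mu\coloneqq\lf B/(v+1)\rf$ (Fact~\ref{fact:mu}), compute the trial digit $\tilde q$ by a one-digit division of the two-digit prefix $u_{(2)}$ of the normalised dividend by the leading digit of $y\cdot\mu$, correct $\tilde q$ down to the true leading quotient digit $q$ by at most two subtract-and-test steps (Lemma~\ref{lemma:tildeq}), and finally read off the remaining digits of the quotient by the recursive identity $\lf x/y\rf=qB^k+\lf x''/y\rf$ with $x''<B^k$.

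For $\mathtt{mod}$, the cleanest route is closure under composition. The identity
\[ x\modop y \;=\; x \;-\; y\cdot(x\divop y) \]
writes $\mathtt{mod}$ as a composition of $-$, $\times$, and $\mathtt{div}$, all of which belong to $\cpp$, and the intermediate values are bounded by $x<N^d$, hence polynomial in $N$. Theorem~\ref{th:closedCompose} therefore yields $\mathtt{mod}\in\cpp$. One could equally well thread the Pope--Stein routine so as to propagate the remainder $x''=rB^k+x'$ of the top digit all the way down, producing $\lf x/y\rf$ and $x\modop y$ in parallel.

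The only genuine subtleties are already disposed of in the text preceding the corollary: the bound $\tilde q-2\le q\le \tilde q$ of Lemma~\ref{lemma:tildeq} and the verification, via Fact~\ref{fact:mu}, that the normalising multiplier $\mu$ does not increase the number of base-$B$ digits of $y$ while forcing its leading digit to be at least $\lf B/2\rf$. Once these two facts are granted, the $\mathtt{div}$ algorithm reduces to a fixed-length sequence of table lookups and base-$B$ arithmetic on a constant number of digits, so the main obstacle is not algorithmic but bookkeeping: namely, checking that each of the constantly many sub-procedures invoked on the on-line side (base conversion, one-digit division, normalisation, correction of $\tilde q$, digit-by-digit recursion) can be realised with only the primitive $+$ and the precomputed tables, so that the overall running time is genuinely $O(1)$ under a RAM($+$).
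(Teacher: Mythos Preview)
Your proposal is correct and follows exactly the approach the paper takes: the corollary is stated as an immediate consequence of the Pope--Stein division algorithm developed in Section~\ref{sec:timesDiv}, and your derivation of $\mathtt{mod}$ from $\mathtt{div}$ via the identity $x\modop y = x - y\cdot(x\divop y)$ together with closure under composition (Theorem~\ref{th:closedCompose}) is precisely the natural completion the paper leaves implicit.
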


In our previous paper~\cite{GrandjeanJachiet22}, we gave two other constant-time algorithms computing division with linear preprocessing.

\medskip
From Corollary~\ref{cor:divmod} and Theorems~\ref{th:transitivity_base} and~\ref{th:transitivity2}, we deduce the following theorem which states that the complexity classes $\cpp$, $\textsc{LinTime}$ and $\cdlin$ are invariant whether the set of the allowed operations is $\{+\}$ or $\{+,-,\times,\mathtt{div},\mathtt{mod}\}$ or any set of $\cpp$ operations provided it includes $+$.

\begin{theorem}\label{th:+=+...mod}
For any set $\OP$ of $\cpp$ operations, we have the equalities: 

$\bullet$ $\cpp=\cpp(+)=\cpp(+,-,\times,\mathtt{div},\mathtt{mod})=\cpp(\OP\cup\{+\})$; 

$\bullet$ $\textsc{LinTime}(+)=\textsc{LinTime}(+,-,\times,\mathtt{div},\mathtt{mod})
=\textsc{LinTime}(\OP\cup\{+\})$; 

$\bullet$ $\cdlin(+)=\cdlin(+,-,\times,\mathtt{div},\mathtt{mod})=\cdlin(\OP\cup\{+\})$.
\end{theorem}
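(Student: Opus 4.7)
The plan is to derive the three chains of equalities as direct applications of the transitivity results already established, noting that the heavy lifting has been done by the Fundamental Lemma~\ref{lemma:fund} together with the constant-time implementations of the basic arithmetic operations.

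First, the equality $\cpp=\cpp(+)$ is simply the naming convention introduced at the beginning of Subsection~\ref{subsec:ComplexClasses}, so nothing has to be proved. For the equality $\cpp(+)=\cpp(+,-,\times,\mathtt{div},\mathtt{mod})$, the inclusion from left to right is immediate since adding primitive operations can only enlarge the class. For the reverse inclusion, I would recall that subtraction, multiplication and the Euclidean operations all lie in $\cpp(+)=\cpp$: the predecessor/subtraction case is Example~\ref{ex:pred} (and the accompanying footnote on the $A_{\mathtt{diff}}$ table), multiplication is handled by the table $A_\times$ of Equation~\eqref{eq:tablemult} together with the digit-by-digit procedure described in Section~\ref{sec:timesDiv}, and $\mathtt{div},\mathtt{mod}$ are given by Corollary~\ref{cor:divmod}. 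Then I would apply Theorem~\ref{th:transitivity_base} four times, adding the operations $-,\times,\mathtt{div},\mathtt{mod}$ one at a time to the base $\{+\}$: at each step the added operation already belongs to the current class, so the class is preserved.

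For the third equality $\cpp(+)=\cpp(\OP\cup\{+\})$ where $\OP$ is an arbitrary (possibly infinite) set of $\cpp$ operations, the inclusion from left to right is again trivial. For the reverse inclusion, the key observation is that any RAM program in the definition of $\cpp(\OP\cup\{+\})$ is a finite syntactic object and therefore uses only finitely many operations $\op_1,\dots,\op_q$ from $\OP\cup\{+\}$. Since each $\op_i$ belongs to $\cpp=\cpp(+)$, iterating Theorem~\ref{th:transitivity_base} $q$ times yields
\[
\cpp(\{+,\op_1,\dots,\op_q\})=\cpp(+),
\]
which gives the required inclusion. This argument also handles the general statement $\cpp(\OP\cup\{+\})=\cpp(+)$ whenever every element of $\OP$ is in $\cpp$.

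Finally, the two remaining chains for $\textsc{LinTime}$ and $\cdlin$ are obtained by exactly the same scheme, using Theorem~\ref{th:transitivity2} in place of Theorem~\ref{th:transitivity_base}: each added operation is in $\cpp(+)$, so the $\textsc{LinTime}$ class (resp. the $\cdlin$ class) is unchanged under its addition, and a finite program invokes only finitely many operations from $\OP$. I do not expect a genuine obstacle here; the only mildly delicate point is the finiteness argument for arbitrary $\OP$, and it is resolved by the trivial remark that a RAM program has finite description length. Everything else reduces to chaining the transitivity theorems with the already-proved memberships of $-,\times,\mathtt{div},\mathtt{mod}$ in $\cpp$.
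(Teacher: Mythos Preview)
Your proposal is correct and follows exactly the approach the paper intends: the paper itself does not spell out a proof but simply states that the theorem is deduced ``from Corollary~\ref{cor:divmod} and Theorems~\ref{th:transitivity_base} and~\ref{th:transitivity2},'' and your argument is a faithful unpacking of that deduction. Your explicit handling of the possibly infinite $\OP$ via the finiteness of any single RAM program is a useful clarification that the paper leaves implicit (cf.\ the iteration remark in the proof of Theorem~\ref{th:closedCompose}).
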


\medskip \noindent
{\bf More base conversions.}
The following technical lemma will be used in the proofs of Theorems~\ref{th: N->N^1/c} and~\ref{th:N^c->N} in the next Section~\ref{sec: N->N^1/c}.

\begin{lemma}\label{lemma: N to cRootN}
Let $c,d$ be any fixed positive integers. 
\begin{enumerate}
\item There exists a $\mathrm{RAM}(+,\mathtt{div},\mathtt{mod})$ 
which, for any reference integer $N$, with $N_1\coloneqq \lc N^{1/c} \rc$ known, 
and any input integer $X<N^d$ given in base $N$ in $d$ registers  $X=(x_{d-1},\dots,x_0)_N$, computes in constant time the radix-$N_1$ notation of~$X$  in $cd$ registers $(x'_{cd-1},\dots,x'_0)_{N_1}$. 
\item Conversely, there exists a $\mathrm{RAM}(+)$ 
which,
for any reference integer of the form  \linebreak 
$N_1\coloneqq \lc N^{1/c} \rc$, for an integer $N\ge 2^{cd}$, 
\begin{itemize}
\item \emph{Preprocessing:} computes in time $O(N_1)$ the tables necessary for the constant-time computation of operations $\mathtt{div}$ and $\mathtt{mod}$ on operands $<(N_1)^{cd}$, and
\item \emph{Conversion:} by using these tables and after reading an integer $Y<(N_1)^{cd}$, given in base $N_1$ in~$cd$ registers, $Y=(y_{cd-1},\dots,y_0)_{N_1}$, computes in constant time the $N$-notation\footnote{An integer $Y<(N_1)^{cd}$, for $N_1\coloneqq \lc N^{1/c} \rc$ and $N\ge 2^{cd}$, is less than $N^{d+1}$ and has therefore a notation in base $N$ with $d+1$ digits.
This is due to the following series of inequalities:
$Y< \lc N^{1/c} \rc^{cd}< (N^{1/c}+1)^{cd}\le (2N^{1/c})^{cd}=2^{cd}N^d\le N^{d+1}$.}
of~$Y$, $(y'_{d},\dots,y'_0)_{N}$, 
in the form of the list of integers 
$z_0,\dots,z_{cd+c-1}<N_1$ 
contained in $cd+c$ registers, so that
$(z_{ic+c-1},\dots,z_{ic})_{N_1}$ is the $N_1$-radix notation of $y'_i$, for each $i=0,\dots,d$,  
in other words so that $y'_i = \sum_{j=0}^{c-1}(N_1)^j \times z_{ic+j}$.
\end{itemize}
\end{enumerate}
\end{lemma}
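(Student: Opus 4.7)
The plan in both parts is to apply Horner's method on base-$N_1$ representations, the key quantitative observation being that any base-$N_1$ digit is less than $N_1\le N^{1/c}+1$, so a product of two such digits is $O(N^{2/c})=O(N)$, which fits in a single register of a RAM with reference $N$ (and in a bounded number of registers of a RAM with reference $N_1$).

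For Part~1, no preprocessing is needed. I first compute the base-$N_1$ representation of $N$ (at most $c+1$ digits) and of each input digit $x_i<N$ ($c$ digits each) by repeated primitive $\mathtt{div}$ and $\mathtt{mod}$ by $N_1$: every operand here fits in one register, so the primitives apply directly. I then run Horner in base $N_1$: start from $Y\coloneqq 0$ and for $i$ from $d-1$ down to $0$ set $Y\leftarrow Y\cdot N+x_i$. The multiplication is schoolbook applied to multi-digit representations of total length $O(cd)$, and each column sum is a bounded number of base-$N_1$ digit products, hence of size $O(N)$ and register-sized; carry propagation is done by primitive $\mathtt{div}$ and $\mathtt{mod}$ by $N_1$. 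The total count of primitive operations is bounded by a constant depending only on $c$ and $d$.

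For Part~2, the reference integer is now $N_1$, so only $O(N_1)$ preprocessing is available. I apply Section~\ref{sec:timesDiv} verbatim with the substitution $N\mapsto N_1$ and $d\mapsto cd$: this pre-computes, in time $O(N_1)$, all mini-tables (over the inner base $\lc N_1^{1/2}\rc$) and base-conversion tables needed to perform $\mathtt{div}$ and $\mathtt{mod}$ in constant time on multi-digit base-$N_1$ operands of length $cd$. Given $N$ in its $c$-digit base-$N_1$ form (valid since $N\le N_1^c<N_1^{cd}$), I extract the base-$N$ digits of $Y$ by repeated division: set $Y_0\coloneqq Y$ and for $i=0,\dots,d$ let $y'_i\coloneqq Y_i\modop N$ and $Y_{i+1}\coloneqq Y_i\divop N$. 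Each step is a single constant-time multi-digit $\mathtt{div}/\mathtt{mod}$ call; since $y'_i<N\le N_1^c$, it is itself a $c$-digit base-$N_1$ number whose digits are placed in $z_{ic},\dots,z_{ic+c-1}$ via $c$ further primitive $\mathtt{div}/\mathtt{mod}$ by $N_1$. The hypothesis $N\ge 2^{cd}$ ensures $Y<N_1^{cd}\le 2^{cd}N^d\le N^{d+1}$, so $d+1$ base-$N$ digits really do suffice.

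The main technical hurdle, common to both directions, is checking that every intermediate column sum in the schoolbook arithmetic stays within the allowed register bound; this reduces to the estimate $N_1^2\le (N^{1/c}+1)^2=O(N)$ in Part~1 and is subsumed by the linear preprocessing of Section~\ref{sec:timesDiv} (invoked with reference $N_1$) in Part~2. Everything else is routine bookkeeping on base changes and carry propagation.
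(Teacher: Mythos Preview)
Your proof is correct, and for Part~2 it coincides with the paper's: both extract the base-$N$ digits by iterating $y'_i \gets Y_i \modop N$, $Y_{i+1}\gets Y_i \divop N$ on the multi-register value $Y$, invoking the $O(N_1)$-time preprocessing of Section~\ref{sec:timesDiv} (applied with reference integer $N_1$ and degree $cd$) to make these multi-digit operations constant-time, and noting that each $y'_i<N\le N_1^c$ must itself occupy $c$ base-$N_1$ registers.

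For Part~1 your route differs from the paper's. The paper runs the \emph{same} digit-extraction loop, now dividing by $N_1$ instead of $N$: $x'_i\gets X_i\modop N_1$, $X_{i+1}\gets X_i\divop N_1$, and says nothing further about how the multi-register $\divop/\modop$ by $N_1$ is realized from single-register primitives. You instead convert each base-$N$ digit $x_i$ (and $N$ itself) to base $N_1$ via primitive single-register $\divop,\modop$, then rebuild $X$ in base $N_1$ by Horner's scheme with schoolbook multiplication. The two are dual---yours builds the representation up, the paper's peels it off from the bottom---and both ultimately rest on the bound you make explicit: a product of two base-$N_1$ digits is $<N_1^2=O(N^{2/c})=O(N)$ and hence register-sized (for $c\ge 2$; the case $c=1$ is trivial). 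Your write-up is more explicit about why intermediate quantities fit; the paper's four-line proof leaves this to the reader. One wording slip: in Part~2 you call the final $\divop/\modop$ by $N_1$ ``primitive,'' but the RAM there is a RAM($+$), so you mean the table-based constant-time versions supplied by the preprocessing.
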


\begin{proof}
Item~1 is justified as follows.
The $N_1$-digits $x'_0,\dots,x'_{cd-1}<N_1$ of $X<N^d\le (N_1)^{cd}$
are computed by the following program which uses the 
$cd+1$ variables $X_0,X_1,\dots,X_{cd}$:

\smallskip
$X_0 \gets X$;
for $i$ from $0$ to $cd-1\;$ set $\;x'_i \gets X_i \modop N_1\;$ and 
$\; X_{i+1}\gets X_i \divop N_1$.

\smallskip \noindent
The proof of Item~2 is similar. 
The $N$-digits $y'_0,\dots,y'_{d}$ are computed as follows:

\smallskip
$Y_0 \gets Y$;
for $i$ from $0$ to $d\;$ set \;$y'_i \gets Y_i \modop N\;$ and 
$\; Y_{i+1}\gets Y_i \divop N$.

\smallskip \noindent Note the technical point that since $N_1$ is the reference integer, each $N$-digit $y'_i<N \le (N_1)^c$ \linebreak
\emph{cannot} be contained in a \emph{single} register but must be represented by its $N_1$-digits \linebreak
$z_{ic},\dots,z_{ic+c-1}$ contained in $c$ registers.
\end{proof}

%%%%%%%%%%%%%%%%%%%%%%%%%%%%%%%%%%%%%
\section{Robustness of the $\cpp$ class for preprocessing time}\label{sec: N->N^1/c}

The reader may legitimately wonder why we required the preprocessing time in the definition of the $\cpp$ class to be $O(N)$.
Is requiring a linear-time preprocessing essential?
Could we instead take $N^{\varepsilon}$ with $0<\varepsilon<1$, or $N^c$ for any integer~$c>1$, or $N^{o(1)}$?
This section and Subsection~\ref{subsec:N^o(1)} answer these questions.

\medskip \noindent
\emph{Intuition of why preprocessing time $O(N)$ can be reduced to $N^{\varepsilon}$:} A $k$-ary operation $\op$ is in $\cpp$ if \emph{for any} fixed integer $d\ge 1$ \emph{and any reference integer} $N\ge 2$, the restriction $\op_{N,d}: [0,N^d[^k \to \mathbb{N}$ of $\op$ is computable in constant time after an $O(N)$ time preprocessing. 
In this definition of $\cpp$, observe the universal quantifications $\forall d$ and $\forall N$.
This allows us to replace the reference integer $N$ by its root $N_1\coloneqq \lc N^{1/c}\rc$, for any fixed integer $c$, as a new reference integer.
Thus, any operand or result $x<N^d$ of the $\op$ operation is less than $(N_1)^{cd}$: the degree $d$ is replaced by the degree $cd$.
Therefore, starting from any reference integer $N$ we obtain the following preprocessing broken down into two steps:

\smallskip
1. compute $N_1\coloneqq \lc N^{1/c}\rc$;

2. execute the preprocessing procedure of $\op$ on the reference integer $N_1$.

\smallskip \noindent
The point is that this new preprocessing runs in $O(N^{1/c})$ time because it is the time to compute the function $N\mapsto \lc N^{1/c}\rc$ (step~1) and the time of step 2 is also $O(N_1)$.
This is the principle of the proof that the preprocessing time can be reduced to $O(N^{1/c})$ for any $c>1$, therefore to $N^{\varepsilon}$ for any positive $\varepsilon>0$ (Theorem~\ref{th: N->N^1/c} and Corollary~\ref{cor:N->N1/c} below).

Using the same idea, we will show that we can also reduce a preprocessing time $N^c$ for a fixed integer $c>1$, to $O(N)$, see Theorem~\ref{th:N^c->N} below.

\medskip 
The following general definition which extends 
that of the complexity class $\cpp(\OP)$ by mentioning the preprocessing time will allow us to formalize precisely the results of this section.
\begin{definition}\label{def:PP T(N)}
For a set (or a list) $\OP$ of operations and a function $T:\mathbb{N}\to \mathbb{N}$, we denote by
$\cpp_{T(N)}(\OP)$ the class of operations computable in constant time by $\mathrm{RAM}(\OP)$ with preprocessing of time $O(T(N))$, 
using register contents $O(\max(T(N),N))$ and addresses $O(T(N))$. 
\end{definition}

\begin{remark}
Note that this definition involves an extension of our RAM model in the case where the time function $T(N)$ is not $O(N)$ but is larger (see Theorem~\ref{th:N^c->N} below): 
then register contents are $O(T(N))$ instead of $O(N)$. (See also Appendix~\ref{app:nonlinear}.)
\end{remark}

\noindent
In the notation introduced in Definition~\ref{def:PP T(N)}, Theorem~\ref{th:+=+...mod} is reformulated as follows:

$\cpp=\cpp_N(+)=\cpp_N(+,-,\times,\mathtt{div},\mathtt{mod})$.

\subsection{Reduced preprocessing time}\label{subsec:reducN^epsilon}
The following results state that the linear preprocessing time can be reduced to 
$N^{\varepsilon}$, for any positive $\varepsilon<1$.

\begin{theorem}\label{th: N->N^1/c} 
Let $\mathtt{op}$ be any $k$-ary operation. 
The following Assertions 1 and 2 are equivalent.
\begin{enumerate}
\item For each fixed integer $d\ge 1$, there is a $\mathrm{RAM}(+,\mathtt{div}, \mathtt{mod})$
$M_d$ such that, for any reference integer~$N$:
\begin{itemize}
\item \emph{Preprocessing:} $M_d$ computes some tables in time $O(N)$;
\item \emph{Operation:} by using these tables and 
after reading any $k$ integers $x_1,\ldots,x_k<N^d$, $M_d$ computes
$\mathtt{op}(x_1,\ldots,x_k)$ in \emph{constant time}. 
\end{itemize}
\item For all fixed integers $c,d\ge 1$, there is a $\mathrm{RAM}(+,\mathtt{div}, \mathtt{mod})$ $M_{c,d}$ such that, for any reference integer~$N$:
\begin{itemize}
\item \emph{Preprocessing:} $M_{c,d}$ computes some tables in time $O(N^{1/c})$;
\item \emph{Operation:} by using these tables and 
after reading any $k$ integers $x_1,\ldots,x_k<N^d$, $M_{c,d}$ computes 
$\mathtt{op}(x_1,\ldots,x_k)$ in \emph{constant time}.  
\end{itemize}
\end{enumerate}
This implies $\cpp_N(+,\mathtt{div},\mathtt{mod})=\cpp_{N^{1/c}}(+,\mathtt{div},\mathtt{mod})$.
\end{theorem}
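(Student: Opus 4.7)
The plan is to prove the equivalence $(1) \Leftrightarrow (2)$; the equality $\cpp_N(+,\mathtt{div},\mathtt{mod}) = \cpp_{N^{1/c}}(+,\mathtt{div},\mathtt{mod})$ then follows directly from the definitions.

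The direction $(2) \Rightarrow (1)$ is immediate: specializing to $c = 1$ in Assertion~2 yields a preprocessing of time $O(N^{1/1}) = O(N)$, so that $M_d \coloneqq M_{1,d}$ witnesses Assertion~1.

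For the nontrivial direction $(1) \Rightarrow (2)$, fix $c, d \geq 1$ and let $M_{cd}$ be the machine provided by Assertion~1 for the exponent $cd$. The idea, as sketched just before the theorem, is to use $M_{cd}$ with the smaller integer $N_1 \coloneqq \lceil N^{1/c} \rceil$ playing the role of its reference integer: every operand $x < N^d$ lies below $(N_1)^{cd}$, so $M_{cd}$ on reference $N_1$ can handle it, and its linear-in-reference preprocessing then costs only $O(N_1) = O(N^{1/c})$ with respect to the original $N$. Concretely, I would build $M_{c,d}$ so that its preprocessing on reference $N$ consists of two stages, both running in time $O(N^{1/c})$: (a) compute $N_1$ by iterating $m = 0, 1, 2, \dots$ while maintaining $m, m^2, \dots, m^c$ via their constant-depth binomial recurrences (requiring only additions with the fixed coefficients $\binom{c}{i}$), halting at the first $m$ with $m^c \geq N$; (b) execute the preprocessing procedure of $M_{cd}$ but treating $N_1$ as the reference integer, which by hypothesis costs $O(N_1)$ and in particular prepares the tables needed to apply Lemma~\ref{lemma: N to cRootN} in base $N_1$. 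The operation procedure on operands $x_1, \dots, x_k < N^d$ given in $d$-digit base-$N$ form is then the composition of three constant-time steps: (i) convert each $x_i$ to its $cd$-digit base-$N_1$ representation by Lemma~\ref{lemma: N to cRootN}(1); (ii) invoke the constant-time operation procedure of $M_{cd}$, yielding $\mathtt{op}(x_1,\dots,x_k)$ in base $N_1$; (iii) if required, convert the result back to base $N$ by Lemma~\ref{lemma: N to cRootN}(2).

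The main obstacle is the legality of embedding the RAM $M_{cd}$ (designed for reference $N_1$) inside $M_{c,d}$ (whose reference is $N$): one must check that every register content and address touched by $M_{cd}$ is $O(N_1) \leq O(N)$, and hence a legal value for our RAM on reference $N$. Once this bookkeeping is made explicit, the arithmetic correctness of $M_{c,d}$ reduces to that of $M_{cd}$ together with the two items of Lemma~\ref{lemma: N to cRootN}, yielding $(1) \Rightarrow (2)$ and therefore the claimed equality of classes.
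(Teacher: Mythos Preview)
Your proposal is correct and follows essentially the same route as the paper: trivialize $(2)\Rightarrow(1)$ via $c=1$, and for $(1)\Rightarrow(2)$ replace the reference integer $N$ by $N_1\coloneqq\lceil N^{1/c}\rceil$, run $M_{cd}$'s preprocessing on $N_1$, and sandwich its operation phase between the base conversions of Lemma~\ref{lemma: N to cRootN}. The paper spells out two technical points you leave implicit --- the loop test $m^c\ge N$ must be implemented via $\mathtt{div}$ and $\mathtt{mod}$ (since no comparison tables exist yet), and the output of Lemma~\ref{lemma: N to cRootN}(2) still needs the pre-computed arrays $\textsc{Mult}N_i$ to reassemble each base-$N$ digit $y'_i=\sum_j (N_1)^j z_{ic+j}$ --- whereas your ``main obstacle'' (that $M_{cd}$'s register contents are $O(N_1)\le O(N)$) is in fact routine.
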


\begin{corollary}\label{cor:N->N1/c}
For each integer $c\ge 1$, we have the equalities:
\begin{center}
$\cpp_N(+,-,\times,\mathtt{div},\mathtt{mod})=\cpp_N(+)=$\\
$\cpp_N(+,\mathtt{div},\mathtt{mod})=\cpp_{N^{1/c}}(+,\mathtt{div},\mathtt{mod}) = $
$\cpp_{N^{1/c}}(+,-,\times,\mathtt{div},\mathtt{mod})$.
\end{center}
\end{corollary}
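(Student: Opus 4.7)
The plan is to observe that the five classes in the statement form a chain and to verify each of the four equalities separately. Three ingredients suffice: Theorem~\ref{th:+=+...mod} (which closes $\cpp_N(+)$ under the addition of any $\cpp$ operations), Theorem~\ref{th: N->N^1/c} (which shifts the preprocessing budget from $N$ to $N^{1/c}$ for $\mathrm{RAM}(+,\mathtt{div},\mathtt{mod})$), and a routine re-reading of the Fundamental Lemma (Lemma~\ref{lemma:fund}).

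For the three leftmost equalities $\cpp_N(+,-,\times,\mathtt{div},\mathtt{mod})=\cpp_N(+)=\cpp_N(+,\mathtt{div},\mathtt{mod})$, I would simply apply Theorem~\ref{th:+=+...mod}. Section~\ref{sec:timesDiv} together with Corollary~\ref{cor:divmod} establishes that each of $-,\times,\mathtt{div},\mathtt{mod}$ lies in $\cpp$, so the theorem yields $\cpp_N(+)=\cpp_N(\OP\cup\{+\})$ for every $\OP\subseteq\{-,\times,\mathtt{div},\mathtt{mod}\}$, which covers both equalities. The next equality in the chain, $\cpp_N(+,\mathtt{div},\mathtt{mod})=\cpp_{N^{1/c}}(+,\mathtt{div},\mathtt{mod})$, is exactly the final sentence of Theorem~\ref{th: N->N^1/c}, so there is nothing more to do for it.

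The only equality not already available off the shelf is the rightmost one, $\cpp_{N^{1/c}}(+,\mathtt{div},\mathtt{mod})=\cpp_{N^{1/c}}(+,-,\times,\mathtt{div},\mathtt{mod})$. The forward inclusion is trivial. For the converse, I would first note that $-$ and $\times$ lie in $\cpp=\cpp_N(+)$ by Section~\ref{sec:timesDiv}, and hence, by the three equalities just established, also in $\cpp_{N^{1/c}}(+,\mathtt{div},\mathtt{mod})$. Then I would apply the argument of the Fundamental Lemma to any $\mathrm{RAM}(+,-,\times,\mathtt{div},\mathtt{mod})$ program: inline the constant-time procedures for $-$ and $\times$, and prepend their preprocessing tables to the existing ones. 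Since the concatenation of two $O(N^{1/c})$-time preprocessings remains $O(N^{1/c})$, this produces a $\mathrm{RAM}(+,\mathtt{div},\mathtt{mod})$ faithful simulation with preprocessing time $O(N^{1/c})$, as required.

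The main (mild) obstacle is precisely this last step: I cannot invoke Lemma~\ref{lemma:fund} as a literal black box, since it is stated with a linear-time preprocessing, and must instead verify that its proof carries over verbatim to the $O(N^{1/c})$ budget. This is immediate, because the proof of Lemma~\ref{lemma:fund} only composes two preprocessings and substitutes inlined constant-time procedures (with a bookkeeping mechanism whose overhead is proportional to the number of writes of each call); none of these operations interacts with the exact preprocessing-time function.
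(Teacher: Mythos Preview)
Your proposal is correct and is exactly the assembly of pieces the paper intends: the corollary is stated without proof, immediately after Theorem~\ref{th: N->N^1/c}, and your decomposition into Theorem~\ref{th:+=+...mod} for the three leftmost equalities, the final sentence of Theorem~\ref{th: N->N^1/c} for the middle equality, and a rerun of the Fundamental Lemma argument at the $O(N^{1/c})$ budget for the rightmost equality is the natural derivation. Your care in noting that Lemma~\ref{lemma:fund} must be re-read with the smaller preprocessing budget (and that nothing in its proof obstructs this) is appropriate and accurate.
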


\begin{proof}[Proof of Theorem~\ref{th: N->N^1/c}]
Assertion 1 is the special case $c=1$ of Assertion 2. 
Let us therefore assume Assertion 1. 
Without loss of generality, we assume $N\ge 2^{cd}$ (hypothesis of Lemma~\ref{lemma: N to cRootN}, Item~2).
The idea is to replace the reference integer $N$ by its root $N_1\coloneqq \lc N^{1/c} \rc$ and therefore to replace the degree~$d$ by~$cd$; this is justified by the fact that all operands $x_1,\ldots,x_k<N^d$ are less than $(N_1)^{cd}$. 
To satisfy Assertion~2, the nontrivial part of the following program of the RAM $M_{c,d}$ is to design a preprocessing whose time is $O(N^{1/c})$ so that the operation procedure, \emph{including} \emph{radix conversions}, remains constant-time, while using only addition 
and $\mathtt{div}$ and $\mathtt{mod}$ operations.

\begin{itemize}

\item \emph{Preprocessing:} From the reference integer $N$,\\ 
-- compute its $c$th root $N_1\coloneqq \lc N^{1/c} \rc$ (see below) and the $c-2$ variables $N_2,\dots,N_{c-1}$ defined by $N_i\coloneqq (N_1)^i$ (use the induction $N_i=N_{i-1}\times N_1$) and then, for each $i=2,\dots,c-1$, compute inductively the array 
$\textsc{Mult}N_i[0..N_1-1]$ defined by $\textsc{Mult}N_i[x]\coloneqq N_i\times x$, all that in time $O(N_1)$ and using only addition;\\ 
-- then, run the preprocessing of $M_{c\times d}$ on the ``new'' reference integer~$N_1$, which constructs some tables in time $O(N_1)=O(N^{1/c})$.

\item \emph{Operation:} By using all those tables and after reading $k$ operands 
$x_1,\ldots,x_k<N^d \le (N_1)^{cd}$ initially represented in base $N$:
\begin{enumerate}

\item convert each operand $x_i$ to base $N_1$ into $cd$ registers (according to Item~1 of Lemma~\ref{lemma: N to cRootN}, it takes a constant time);

\item then, run the operation procedure of $M_{c\times d}$ which computes the radix-$N_1$ representation of $y\coloneqq\mathtt{op}(x_1,\ldots,x_k)$ in constant time;

\item then, convert the result $y=(y_{cd-1},\dots,y_0)_{N_1}$ from base $N_1$ to base~$N$ in constant time:
according to Item~2 of Lemma~\ref{lemma: N to cRootN}, the radix-$N$ representation
of~$y$, $(y'_{d},\dots,y'_0)_{N}$, is obtained in the form of the list of integers 
$z_0,\dots,z_{cd+c-1}<N_1$ contained in $cd+c$ registers, so that
 $(z_{ic+c-1},\dots,z_{ic})_{N_1}$ is the $N_1$-radix representation of $y'_i$, for each $i=0,\dots,d$,  
in other words so that $y'_i = \sum_{j=0}^{c-1}(N_1)^j \times z_{ic+j}= 
z_{ic}+\sum_{j=1}^{c-1}N_j\times z_{ic+j}$;

\item therefore, set $y'_i \gets z_{ic }+ \textsc{Mult}N_1[z_{ic+1}]+\cdots + \textsc{Mult}N_{c-1}[z_{ic+c-1}]$, for each \linebreak 
$i=0,\dots,d$.

\end{enumerate}

\end{itemize}
Of course, the only point that remains for us to justify is that the
function $N\mapsto \lc N^{1/c} \rc$ can be computed by a
RAM$(+,\mathtt{div},\mathtt{mod})$ in time $O(N^{1/c})$.  For
simplicity, let us prove this for the case $c=3$, which is easy to
generalize to any $c\ge 2$.  Clearly, the following code based on the
inductive identity $(x+1)^3=x^3+3x^2+3x+1$, which only uses addition, $\mathtt{div}$ and $\mathtt{mod}$
\linebreak
($3x$ means $x+x+x$), computes the integer~$x$ such that $(x-1)^3<N
\le x^3$, i.e. $x=\lc N^{1/3} \rc$.

\begin{minipage}{0.1\textwidth}
  ~
\end{minipage}
\begin{minipage}{0.80\textwidth}
\begin{algorithm}[H]
  \caption{Computation of $x \coloneqq \lc N^{1/3} \rc$}
  \begin{algorithmic}[1]
    \State $\var{x} \gets 1$ ; $\var{x}_2 \gets 1$ ; $\var{x}_3 \gets 1$
    \While{$\var{x}_3 < N$}
    \Comment{see $\dagger$}
    
    \State \Comment{loop invariant: {\normalfont$\var{x}_2=\var{x}^2$; $\var{x}_3=\var{x}^3$ \emph{and}  $\var{x}_3 < N$}}
      \State $\var{x}_3 \gets \var{x}_3+ 3\var{x}_2 + 3\var{x} +1$ \Comment{$(x+1)^3$}
      \State $\var{x}_2 \gets \var{x}_2 + 2\var{x} +1$ \Comment{$(x+1)^2$}
      \State $\var{x}\gets \var{x}+1$
      \EndWhile
      \State
    \Comment{finally, $(x-1)^3< N \le x_3=x^3$}
  \end{algorithmic}
\end{algorithm}

$\dagger$ {The comparison operation can be obtained by pre-computing
  some tables~\cite{GrandjeanJachiet22} but this algorithm is meant to
  run before we have computed those tables. To avoid doing circular
  dependencies, we will use the division to translate this comparison
  $\var{x}_3 < N$ to the following expression 
  $(N \divop \var{x}_3 \neq 1  \text{ and } N \divop \var{x}_3 \neq 0)$ or
  $(N \divop \var{x}_3 = 1 \text{ and } N \modop \var{x}_3 \neq 0)$ which means
  $2 \var{x}_3 \le N$ or $\var{x}_3 < N < 2 \var{x}_3$.
  }
\end{minipage}

\bigskip
\noindent
Since $x$ is incremented exactly $\lc N^{1/3} \rc$ times, the algorithm runs in
$O(N^{1/3})$ time.
In the general case, use the binomial identity $(x+1)^c=\sum_{j=0}^c \binom{c}{j}x^j$ and introduce $c-1$ new variables $x_2,\dots, x_c$ with values $x_j=x^j$. 
Note that each program only uses $+, \mathtt{div}, \mathtt{mod}$ (no multiplication) 
because, since~$c$ and $j$ are fixed, each coefficient $\binom{c}{j}$ is also an explicit integer.

\medskip
This completes the proof of Theorem~\ref{th: N->N^1/c}.
\end{proof}

Theorem~\ref{th: N->N^1/c} implies that some complexity classes  $\textsc{Time}(T(N))$ with $T(N)=o(N)$ are robust despite the fact that a time $o(N)$ allows only a small part of the input to be consulted.
This is possible, as expressed by Corollary~\ref{cor:invTimeN1/c} which follows, if the input is given in the form of an array $I[1..N]$ and if the $R$-instruction $\mathtt{Input}\; i$ of Table~\ref{table:instRAM2} interpreted as Read $R[i]$ is replaced by the instruction $\mathtt{Input}\; i \; j$ interpreted as $R[i] \gets I[R[j]]$, for explicit integers $i,j$.

\begin{corollary}\label{cor:invTimeN1/c}
For each integer $c \ge 1$, the complexity class $\textsc{Time}(N^{1/c})$ is invariant according to the set 
$\OP$ of primitive operations provided $\{+,\mathtt{div},\mathtt{mod}\} \subseteq \OP \subset \cpp$.
\end{corollary}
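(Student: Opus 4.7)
The plan is to combine the reduced-preprocessing Theorem~\ref{th: N->N^1/c} with the faithful-simulation machinery of the Fundamental Lemma~\ref{lemma:fund}. I would first dispatch the easy inclusion $\textsc{Time}(N^{1/c})(\{+,\mathtt{div},\mathtt{mod}\}) \subseteq \textsc{Time}(N^{1/c})(\OP)$, which is immediate: since $\{+,\mathtt{div},\mathtt{mod}\} \subseteq \OP$, every RAM$(+,\mathtt{div},\mathtt{mod})$ is syntactically a RAM$(\OP)$ of the same running time.

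For the reverse inclusion, I would take an arbitrary RAM$(\OP)$ $M$ with running time $O(N^{1/c})$ and let $\op_1,\dots,\op_r$ denote the finitely many operations from $\OP\setminus\{+,\mathtt{div},\mathtt{mod}\}$ actually used in the program of $M$. Each $\op_i$ is in $\cpp$ by assumption, so Assertion~2 of Theorem~\ref{th: N->N^1/c}, applied to $\op_i$ with the given $c$ and a fixed small degree $d$ (one can take $d=2$, since register contents are bounded by $O(N)\subset N^2$), yields a RAM$(+,\mathtt{div},\mathtt{mod})$ that preprocesses auxiliary tables in time $O(N^{1/c})$ and then evaluates $\op_i$ in constant time, using addresses $O(N^{1/c})$ and values $O(N)$ as required by the extended RAM model.

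I would then construct the simulating machine $M'$ in two phases. In the preprocessing phase, $M'$ executes the $r$ preprocessing procedures sequentially: since $r$ is fixed, the total cost is $O(r\cdot N^{1/c})=O(N^{1/c})$. In the simulation phase, $M'$ mirrors $M$ instruction by instruction, copying every $+,\mathtt{div},\mathtt{mod}$ instruction and every indexed input instruction $\mathtt{Input}\; i\; j$ verbatim, and replacing every occurrence of an $\op_i$-instruction by the constant-time code supplied by Theorem~\ref{th: N->N^1/c}, wrapped with the log-and-restore $\textsc{CstP2}$ construction of Lemma~\ref{lemma:fund} so that successive invocations do not corrupt the preprocessed tables. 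Each original instruction of $M$ is thus simulated in constant time, so the simulation phase also takes $O(N^{1/c})$ time, and $M'$ runs in total time $O(N^{1/c})$ while producing exactly the same input/output sequence as $M$.

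The main obstacle is precisely the guarantee that the preprocessing of each $\op_i$ fits within the sublinear budget $O(N^{1/c})$ using only $\{+,\mathtt{div},\mathtt{mod}\}$ and respecting the address/content constraints of the extended RAM model; this is exactly the content of Theorem~\ref{th: N->N^1/c}. Once that is in hand, the faithful-simulation bookkeeping of the Fundamental Lemma composes transparently with the time counting and the argument closes.
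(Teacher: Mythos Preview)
Your proof is correct and is precisely the intended elaboration. The paper does not spell out a proof of this corollary; it simply presents it as an implication of Theorem~\ref{th: N->N^1/c}. Your argument supplies exactly the missing details: invoke Assertion~2 of Theorem~\ref{th: N->N^1/c} (equivalently, the equality $\cpp_N(+,\mathtt{div},\mathtt{mod})=\cpp_{N^{1/c}}(+,\mathtt{div},\mathtt{mod})$) to get $O(N^{1/c})$-time preprocessing for each extra primitive, then transplant the log-and-restore wrapper from Lemma~\ref{lemma:fund} so that repeated calls are safe, and finally observe that the resulting faithful simulation stays within the address bound $O(N^{1/c})$ and content bound $O(N)$ of Definition~\ref{def:PP T(N)}. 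This is the same route the paper implicitly takes.
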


\noindent
Can we get rid of the $\mathtt{div}$ and $\mathtt{mod}$ operations in Theorem~\ref{th: N->N^1/c} and Corollaries~\ref{cor:N->N1/c},~\ref{cor:invTimeN1/c}? More precisely:
\begin{openpb}
Have we $\cpp_N(+)=\cpp_{N^{1/c}}(+)$, for all $c>1$?\\
This would weaken the hypothesis of Corollary~\ref{cor:invTimeN1/c} which would become 
$+\in \OP\subset \cpp$.
\end{openpb}

\subsection{Increased preprocessing time}\label{subsec:polPP}

Theorem~\ref{th: N->N^1/c} states that the preprocessing time can be reduced from $O(N)$ to $N^{\varepsilon}$ for any positive $\varepsilon<1$.
Conversely, does increasing preprocessing time to any polynomial expand the class of operations that can be computed in constant time?
The following theorem answers this question negatively.

\begin{theorem}\label{th:N^c->N}
For each fixed integer $c\ge 1$ and any finite set of operations $\OP\subseteq\cpp$, 
we have $\cpp = \cpp_N(+) = \cpp_{N^c}(+) = \cpp_{N^c}(\OP\cup\{+\})$.
\end{theorem}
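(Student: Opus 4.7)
The plan is to establish the circular inclusions $\cpp = \cpp_N(+) \subseteq \cpp_{N^c}(+) \subseteq \cpp_{N^c}(\OP\cup\{+\}) \subseteq \cpp_{N^c}(+) \subseteq \cpp_N(+)$. The forward inclusions (allowing more preprocessing time and more primitive operations can only help) are immediate. The inclusion $\cpp_{N^c}(\OP\cup\{+\}) \subseteq \cpp_{N^c}(+)$ is a direct adaptation of the Fundamental Lemma (Lemma~\ref{lemma:fund}): since every $\op' \in \OP$ belongs to $\cpp = \cpp_N(+)$, we inline each call to $\op'$ by its constant-time $\cpp_N(+)$ procedure, employing the history/restore mechanism of Lemma~\ref{lemma:fund} to permit repeated calls; the linear-time initialization of each such $\op'$ is absorbed into the $O(N^c)$ preprocessing budget of the original program.

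The technical heart is the inclusion $\cpp_{N^c}(+) \subseteq \cpp = \cpp_N(+, \mathtt{div}, \mathtt{mod})$, the last identity being Theorem~\ref{th:+=+...mod}. This is dual to the trick used in Theorem~\ref{th: N->N^1/c}: rather than inflating the reference integer, we \emph{deflate} $N$ to $M \coloneqq \lceil N^{1/c}\rceil$ and treat $M^c \approx N$ as the ``virtual reference integer'' for the $\cpp_{N^c}(+)$ algorithm. Concretely, given $\op \in \cpp_{N^c}(+)$ and fixed $d$, let $A_{cd}$ denote the $\cpp_{N^c}(+)$ algorithm witnessing $\op$ at degree $cd$. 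We build a $\cpp_N(+,\mathtt{div},\mathtt{mod})$ program whose preprocessing first computes $M \coloneqq \lceil N^{1/c}\rceil$ in $O(N^{1/c})$ steps using the $c$th-root loop from the proof of Theorem~\ref{th: N->N^1/c}; next pre-computes the $\mathtt{div}$ and $\mathtt{mod}$ tables required for the base conversions between $N$ and $M$ (as in Lemma~\ref{lemma: N to cRootN}); and finally runs the preprocessing of $A_{cd}$ treating the stored copy of $M$ as its reference integer. By hypothesis this last stage takes $O(M^c) = O(N)$ time and manipulates only integers of size $O(M^c) = O(N)$, so the full preprocessing fits in the $\cpp_N$ budget.

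The operation phase, on operands $x_1,\dots,x_k < N^d = M^{cd}$, then (i) converts each $x_i$ from base $N$ to base $M$ in constant time via Item~1 of Lemma~\ref{lemma: N to cRootN}, yielding a $cd$-digit $M$-representation; (ii) invokes the constant-time procedure of $A_{cd}$ on these representations, producing an output in base $M$; (iii) converts the $M$-base result back to base $N$ in constant time via Item~2 of Lemma~\ref{lemma: N to cRootN}. The whole on-line phase remains constant time, placing $\op$ in $\cpp_N(+,\mathtt{div},\mathtt{mod}) = \cpp$.

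The main obstacle is bookkeeping the substituted reference integer when simulating $A_{cd}$: every instruction of $A_{cd}$ that would read its reference register must be rerouted to the stored copy of $M$, and one must verify that the content bound $O(M^c) = O(N)$ and the address range $O(M^c) = O(N)$ of the simulated machine both fit inside the $\cpp_N$ allowance. Once this faithful simulation is set up carefully, no further surprise arises: all conversions and table manipulations invoked are exactly those already validated in Section~\ref{sec:timesDiv} and Lemma~\ref{lemma: N to cRootN}.
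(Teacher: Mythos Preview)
Your proposal is correct and follows essentially the same route as the paper: deflate the reference integer to $M=\lceil N^{1/c}\rceil$, run the degree-$cd$ witness with reference $M$ so that its $O(M^c)$ preprocessing becomes $O(N)$, and sandwich the constant-time call between base-$N$/base-$M$ conversions. The only minor difference is that, since the simulating machine already has reference integer $N$, the paper performs the back-conversion from base $M$ to base $N$ by a direct $\divop N$ / $\modop N$ loop rather than via Item~2 of Lemma~\ref{lemma: N to cRootN} (which is tailored to reference integer $M$ and would output each $N$-digit as a tuple of $M$-digits needing reassembly).
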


\begin{proof}
The equality $\cpp_{N^c}(+) = \cpp_{N^c}(\OP\cup\{+\})$ is proven in the same way as its special case $c=1$ stated in Theorem~\ref{th:+=+...mod}: 
this is true ``a fortiori'' because by definition the operations in $\cpp$ require a preprocessing time $O(N)$, which is $O(N^c)$.
The inclusion $\cpp_N(+) \subseteq \cpp_{N^c}(+)$ is just as obvious. 
To prove reciprocal inclusion, consider an operation $\op\in \cpp_{N^c}(+)$, say of arity 1 to simplify the notation.
By hypothesis, for any integer $d'\ge 1$, there exists a RAM(+) $M_{d'}$ which computes the function
$\op_{N,d'}: [0,N^{d'}[\to [0,N^{d'}[\cup \{\mathtt{overflow}\}$ in constant time after a preprocessing of time $O(N^c)$.

\medskip \noindent
\emph{Idea of the simulation:} It is very simple. Choose $d'\coloneqq cd$ and apply $M_{cd}$ to 
$\lc N^{1/c} \rc$ as the new reference integer. 
Therefore, the preprocessing time of $M_{cd}$ is $O\left(\lc N^{1/c} \rc^c\right)=O(N)$.

\smallskip
For a fixed integer $d\ge 1$, consider the RAM($+,\times,\mathtt{div},\mathtt{mod})$ $M'_d$ with the following program:

\smallskip \noindent
\emph{Preprocessing:} 
From the reference integer $N$ (assumed $\ge 2^{cd}$), compute the integer 
$N_1\coloneqq \lc N^{1/c} \rc$.
Then, execute the \emph{preprocessing phase} of $M_{cd}$ on the reference integer $N_1$.

\medskip \noindent
\emph{Operation:} 
Read an operand $x\le N^d$, i.e. read the $d$ digits $x_0,\dots,x_{d-1}$ of the notation of~$x$ in base $N$.
Then convert $x$ to base $N_1$ by the
algorithm of Lemma~\ref{lemma: N to cRootN} (item~1): we obtain $x=(x'_{cd-1},\dots,x'_{0})_{N_1}$ on which we execute the \emph{operation phase} of $M_{cd}$, which generates as output the integer 
$y\coloneqq \op(x)$ also in base $N_1$, if $\op(x)< (N_1)^{cd}$,
therefore in the form $y=(y_{cd-1},\dots,y_{0})_{N_1}$, and $\mathtt{overflow}$ otherwise.
Finally, compute the radix-$N$ notation of $y$, which is less than $N^{d+1}$ 
(because $(N_1)^{cd}<(N^{1/c}+1)^{cd}\le (2N^{1/c})^{cd} \le 2^{cd}N^d\le N^{d+1}$), 
hence of the form $(y'_d,\dots,y'_0)_N$, 
by the following algorithm (using the auxiliary variables $z_0,\dots,z_d,z_{d+1}$): 

$\;z_0 \gets y$ ; for $i$ from $0$ to $d\;$ set \;$y'_i \gets z_i \modop N\;$ and 
$\; z_{i+1}\gets z_i \divop N$.

\medskip \noindent
\emph{Complexity of $M'_d$:} The preprocessing time is $O(N)$ because from $N$ we can compute 
$N_1 \coloneqq \lc N^{1/c} \rc$ in time $O(N^{1/c})$ and perform the preprocessing of $M_{cd}$ applied to $N_1$ in time $O((N_1)^c)=O(N)$.
Moreover, the time of the operation phase of $M'_d$ is constant since it is composed of three constant-time procedures.

\medskip \noindent
\emph{Conclusion of the proof:}
From any input $x<N^d$ read in base $N$, the RAM $M'_d$ returns $\op(x)$, also in base~$N$, if 
$\op(x)< (N_1)^{cd}<N^{d+1}$, and returns $\mathtt{overflow}$ otherwise.
Noting that $N^d\le (N_1)^{cd}$, we can slightly modify the return statement of $M'_d$ as follows: 
if $\op(x)<N^d$ then return $\op(x)$ else return $\mathtt{overflow}$.
In other words, the RAM $M'_d$ thus modified computes the function $\op_{N,d}$.
We have proven $\op\in\cpp_N(+,\times,\mathtt{div},\mathtt{mod})$, hence $\op\in\cpp_N(+)$.
This concludes the proof of Theorem~\ref{th:N^c->N}.
\end{proof}

It is natural to ask whether Theorem~\ref{th:N^c->N} extends to non-polynomial time preprocessing.
The answer is no. 

%\begin{openpb}
%Let $T(N)$ be a time function greater than any polynomial, i.e. such that for every~$c$, there exists an integer $N_0$ such that $T(N)>N^c$ for each $N\ge N_0$.
%Is the inclusion $\cpp_N \subset \cpp_{T(N)}$ always strict?
%\end{openpb}

\begin{theorem}\label{th:NonPol-CstPP}
Let $T(N)$ be a time function greater than any polynomial, i.e. such that for every~$c$, there exists an integer $N_0$ such that $T(N)>N^c$ for each $N\ge N_0$.
Then the inclusion $\cpp_N \subset \cpp_{T(N)}$ is strict.
\end{theorem}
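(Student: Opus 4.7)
I plan to prove strictness by constructing, via diagonalization, a unary operation $\op\colon\mathbb{N}\to\{0,1\}$ that lies in $\cpp_{T(N)}$ but not in $\cpp_N$. The class of potential witnesses for membership in $\cpp_N$ is countable, so I first fix an effective enumeration $(M_e,c_1^e,c_2^e,N_0^e)_{e\ge 1}$ of all quadruples in which $M_e$ is a RAM$(+)$ program and $c_1^e,c_2^e,N_0^e$ are positive integers. Any genuine witness of $\op\in\cpp_N$ at arity $1$ and $d=1$ appears as some such quadruple: a RAM$(+)$ whose preprocessing takes at most $c_1^e N$ steps, whose operation takes at most $c_2^e$ steps, and which returns $\op(x)$ correctly for every reference integer $N\ge N_0^e$ and every input $x<N$.

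The construction proceeds by recursion on $e$. Having fixed $N_1<\cdots<N_{e-1}$, I choose $N_e$ larger than $N_{e-1}^2$, $N_0^e$, and $2^{c_1^e+c_2^e+e}$; I then simulate $M_e$ on reference integer $N_e$ with input $x=N_e-1$, clipped to $c_1^e N_e+c_2^e$ steps, and set $y_e\in\{0,1\}$ to disagree with the simulated output (and $y_e:=0$ on timeout or ill-typed output). Finally, I define $\op(x):=y_e$ whenever $x=N_e-1$ for some (necessarily unique) $e$, and $\op(x):=0$ otherwise. To see $\op\notin\cpp_N$, any purported witness $(M,c_1,c_2,N_0)$ is equal to some $(M_e,c_1^e,c_2^e,N_0^e)$; on $N=N_e\ge N_0^e$ and input $N_e-1<N_e$, the RAM $M_e$ would have to output $\op(N_e-1)=y_e$, contradicting the construction of $y_e$.

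To see $\op\in\cpp_{T(N)}$, for each fixed $d$ I describe a RAM$(+)$ whose preprocessing, on reference integer $N$, allocates and zero-initializes a table $A$ of size $N^d$, then iterates $e=1,2,\ldots$: it recovers $(M_e,c_1^e,c_2^e,N_0^e)$, recomputes $N_e$, halts as soon as $N_e-1\ge N^d$, and otherwise simulates $M_e$ as above to recover $y_e$ and assigns $A[N_e-1]:=y_e$. The operation phase is then a single table lookup, hence constant time. The main obstacle is controlling the preprocessing cost: by the growth condition $N_e\ge N_{e-1}^2$, only $O(\log\log N)$ indices survive the loop, and each universal simulation of $M_e$ costs a polynomial in $c_1^e N_e+c_2^e$, hence a polynomial in $N_e\le N^d$. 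The cumulative cost is therefore polynomial in $N$, and because $T$ is superpolynomial this is $O(T(N))$; the register contents and addresses also remain within the bounds required by Definition~\ref{def:PP T(N)} for the same reason.
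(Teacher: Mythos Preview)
Your argument is correct and follows a genuinely different route from the paper. The paper appeals to the time hierarchy theorem for RAMs to obtain a function $f:\mathbb{N}\to\{0,1\}$ computable in time $O(x^2)$ but not in time $o(x^2)$; it then observes that $f\notin\cpp_N$ (a linear preprocessing plus constant-time evaluation would decide $f(N)$ in time $O(N)=o(N^2)$) and that $f\in\cpp_{T(N)}$ by tabulating $f(x)$ for all $x<N^d$ in total time $O(N^{3d})=O(T(N))$. You instead diagonalize \emph{directly} against the countable family of purported $\cpp_N$ witnesses for $d=1$, building a sparse $\{0,1\}$-valued function that disagrees with each witness at a carefully chosen input, and then recover that same diagonalization inside the preprocessing phase to tabulate the function. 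The paper's route is more modular: once the hierarchy theorem is in hand, both directions are one-liners, and the separating function is any function in the gap between $o(x^2)$ and $O(x^2)$. Your route is self-contained and avoids stating the full hierarchy theorem, at the price of more bookkeeping (the effective enumeration, the growth conditions on $N_e$ that keep the number of relevant indices and the simulation cost polynomial in $N$, and the universal-simulation overhead). Both arguments ultimately rest on the same phenomenon---a diagonalization whose total cost is polynomial in $N$, hence $O(T(N))$ by superpolynomial growth of $T$---so the two proofs are morally equivalent, with the paper's packaged more cleanly through the hierarchy theorem.
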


The proof which essentially uses the time hierarchy theorem for RAMs of Cook and Reckhow~\cite{CookR73}
is given in Appendix~\ref{app:NonPol-CstPP}.

%%%%%%%%%%%%%%%%%%%%%%
\section{Minimality of the $\cpp$ class}\label{sec:minimality}

In this section, we prove two minimality properties of the class $\cpp(+)$ of operations computable, using only addition, in constant time with linear-time preprocessing:
\begin{enumerate}
\item the class becomes \emph{strictly smaller}, and degenerates, if the preprocessing time is reduced to~$N^{o(1)}$ instead of $O(N)$;
\item as the only primitive RAM operation, addition \emph{cannot be replaced} by any set of \emph{unary operations}.
\end{enumerate}

%%%%%%%%%%%%%%%%%%%%%%%%%%%%%%%%%%%%%
\subsection{Minimality for preprocessing time}\label{subsec:N^o(1)}

Can preprocessing time be reduced to $N^{o(1)}$, while using only addition? 
The following result answers ``no'': 
multiplication can no longer be computed in constant time if the preprocessing time -- or even more weakly, the size of the pre-computed tables -- is reduced to~$N^{o(1)}$.

\begin{theorem}\label{th: no RAM N^o(1)} 
There is \emph{no} RAM such that, for any reference integer $N$:
\begin{enumerate}
\item \emph{Preprocessing:} The RAM computes some tables of size $S(N)=N^{o(1)}$; %
\item \emph{Product:} After reading two integers $x,y<N$, the RAM, using addition as its only primitive operation, computes $x \times y$ in \emph{constant time}.
\end{enumerate}
This implies $\times\not\in\cpp_{N^{o(1)}}(+)$ and the strict inclusion $\cpp_{N^{o(1)}}(+) \subset \cpp_N(+)$.
\end{theorem}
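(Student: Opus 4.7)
The plan is to derive a contradiction from the fact that $x \times y$ attains $\Theta(N^{2-o(1)})$ distinct values (by the divisor bound) while a constant-time operation phase using only addition and tables of total size $N^{o(1)}$ is far too restricted to realize a function with that many values. So suppose such a RAM exists. Its operation phase consists of $k = O(1)$ instructions and branches along at most $2^k = O(1)$ execution paths, so by pigeonhole some ``popular'' branch is taken by a set $\mathcal{B} \subseteq [0,N)^2$ of inputs with $|\mathcal{B}| \ge N^2/2^k = \Omega(N^2)$.

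Along the popular branch the computation is a straight-line program built from $x$, $y$, the constant $N$, explicit integer constants, additions and indirect loads $T[g(x,y)]$ whose address is itself computed by the same machinery. Since only $O(1)$ additions occur and each can at most double the coefficient of any term, every register value on this branch admits a decomposition
\[
E(x,y) \;=\; \alpha x + \beta y + \gamma + W(x,y),
\]
with $\alpha,\beta = O(1)$, $\gamma = O(N)$, and $W$ a bounded-coefficient $\mathbb{N}$-linear combination of the $O(1)$ values returned by the table lookups on this branch. Each such lookup takes at most $S(N)+1$ distinct values (the $S(N)$ pre-computed entries, plus possibly $0$ from an uninitialised cell), so $W$ takes at most $(S(N)+1)^{O(1)} = N^{o(1)}$ distinct values over $\mathcal{B}$. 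Writing the two output digits as $E_1,E_2 < cN$ with $xy = E_1 + N\,E_2$ on $\mathcal{B}$, the combined quantity $W := W_1 + N\,W_2$ still takes at most $N^{o(1)}$ distinct values.

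I then derive the key identity. Setting $A := \alpha_1 + N\alpha_2 = O(N)$, $B := \beta_1 + N\beta_2 = O(N)$, and $C := \gamma_1 + N\gamma_2 = O(N^2)$, the correctness condition $xy = E_1 + N\,E_2$ rearranges on $\mathcal{B}$ to
\[
(x-B)(y-A) \;=\; W + C + AB,
\]
whose right-hand side takes at most $N^{o(1)}$ distinct values (since $C+AB$ is a single constant). For the left-hand side, each nonzero value $z$ satisfies $|z| = O(N^2)$ and is attained by at most $2\,d(|z|) = |z|^{o(1)} = N^{o(1)}$ pairs by the divisor bound, while $z=0$ is attained only when $x = B$ or $y = A$, contributing at most $2N$ pairs. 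Hence the image of $(x,y) \mapsto (x-B)(y-A)$ restricted to $\mathcal{B}$ has size at least $(|\mathcal{B}| - 2N)/N^{o(1)} = N^{2-o(1)}$, contradicting the $N^{o(1)}$ bound above. This yields $\times \notin \cpp_{N^{o(1)}}(+)$, and the strict inclusion $\cpp_{N^{o(1)}}(+) \subset \cpp_N(+)$ follows since $\times \in \cpp_N(+)$ by Theorem~\ref{th:+=+...mod}.

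I expect the main obstacle to be verifying cleanly that, despite the availability of \emph{Load} and \emph{Store} with dynamically computed addresses during the operation phase, every final register value really does admit the decomposition $\alpha x + \beta y + \gamma + W$ above. This should follow by induction on the number of operations executed on a fixed branch: each addition merges two such decompositions with coefficient sums still $O(1)$; each indirect load contributes one new table-lookup summand to $W$ (whose value lies in a set of size $\le S(N)+1$); and stores performed during the operation phase merely reposition values already captured by this form, without enlarging the set of achievable expressions.
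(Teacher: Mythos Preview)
Your overall strategy---show the output of the constant-time phase has the shape $ax+by+c$ with only $N^{o(1)}$ possible ``constants'', then contradict $xy=ax+by+c$---matches the paper's. The difference, and the gap, is in how you establish that shape. You fix an execution branch and claim that on it every register value decomposes as $\alpha x+\beta y+\gamma+W(x,y)$ with $\alpha,\beta,\gamma$ \emph{fixed on the branch} and $W$ a bounded sum of table lookups, each taking $\le S(N)+1$ values. But this fails when a \texttt{Load} reads a cell that a prior \texttt{Store} wrote during the operation phase: the retrieved value is then a previously computed expression $\alpha' x+\beta' y+\gamma'+W'$, not a table entry, and whether this happens depends on $(x,y)$ through the dynamically computed address. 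So even on a single branch the coefficients $\alpha,\beta$ need not be constant in $(x,y)$. Your last sentence (``stores merely reposition values already captured by this form'') is true but does not restore the fixed-coefficient claim you need for the factorisation $(x-B)(y-A)=W+C+AB$.

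The paper avoids this by \emph{not} fixing a branch. Its Claim~\ref{claim:exp} asserts only that every register, at every step and for every $(x,y)$, holds some value of the form $ax+by+\sum_{j\le 2^i} z_{f(j)}$ with $a,b\le 2^i$; the triple $(a,b,f)$ is allowed to vary with $(x,y)$. Load and Store are then harmless, since they copy a value already of this form. Counting gives at most $(2^k{+}1)^2\ell^{2^k}=N^{o(1)}$ possible equations $xy=ax+by+c$, and pigeonhole yields one with many solutions. The paper also restricts to $x,y<\lfloor N^{1/2}\rfloor$ (so $xy$ occupies a single register, avoiding your two-digit bookkeeping) and finishes by an elementary step rather than the divisor bound: from $>N^{3/4}$ solutions of $xy=\alpha x+\beta y+\gamma$ one extracts two distinct $x_0,x_1$ each paired with two distinct $y$-values, and subtraction forces $\beta=x_0=x_1$. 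Your divisor-bound endgame is correct and arguably slicker once $A,B$ are genuinely constant---which you can secure either by adopting the paper's global counting or by further splitting your branch over the $O(1)$ possible Load/Store interaction patterns.
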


\noindent
\emph{Intuition of the proof:} 
The proof by contradiction is essentially a cardinality argument, which can apply to more general situations. 
Let $\OP$ be a finite set of operations. 
If a RAM($\OP$) can compute the function $(x,y)\mapsto x\times y$, for $x,y<N$, in constant time using some pre-computed tables of size $N^{o(1)}$, then we can deduce the following items. 

-- The constant-time computation that computes $x\times y$ can only use the integers $x,y,z_1,\dots,z_{\ell}$ where $z_1,\dots,z_{\ell}$ are the $\ell=N^{o(1)}$ distinct integers contained in the RAM registers at the end of the preprocessing or explicitly mentioned in the program.
 
-- Therefore, each $(x,y)\in[0,N[^2$ satisfies an equation of the form 
$x\times y=E(x,y,z_1,\dots,z_{\ell})$, where $E$ is an expression constructed from $x,y,z_1,\dots,z_{\ell}$ and the operations of $\OP$, and of size bounded by a constant integer $k$ (because the computation time is constant).

-- Such an expression $E$ can be regarded as a tree whose each internal node is labeled by an operation of $\OP$ and each leaf has a label among $x,y,z_1,\dots,z_{\ell}$.

-- Therefore, the number $\delta$ of the possible expressions $E(x,y,z_1,\dots,z_{\ell})$ is at most the product of the number $b(k)$ of such trees of size $\le k$, \emph{which is a constant} dependent on $k$ \emph{if the leaves are anonymous} (have no labels),
by the number of possible lists of (labels of) leaves of such a tree, which is at most the number of words $w=w_1,\dots,w_m$ of length $m\le k$ on the alphabet $\{x,y,z_1,\dots,z_{\ell}\}$.
This gives $\delta \le b(k) \times \sum_{m=1}^k (\ell+2)^m$ which is $N^{o(1)}$ because $k$ and $b(k)$ are constants and $\ell=N^{o(1)}$.

-- Thus, each of the $N^2$ pairs of operands $(x,y)\in[0,N[^2$ satisfies one of the $\delta=N^{o(1)}<N^{1/2}$ equations $x\times y=E(x,y,z_1,\dots,z_{\ell})$, shortly written $x\times y=E(x,y)$. 
This implies that at least one equation $x\times y=E(x,y)$ has more than $N^{3/2}$ solutions $(x,y)\in[0,N[^2$.

-- This \emph{multiplicity of solutions} for the \emph{same equation} will lead to a \emph{contradiction} in many cases, for example when the set $\OP$ of primitive operations is $\{+\}$ as we prove for 
Theorem~\ref{th: no RAM N^o(1)}.

\begin{proof}[Detailed proof of Theorem~\ref{th: no RAM N^o(1)} (by contradiction)]
Suppose that there exists a RAM $M$ satisfying conditions 1 and 2 of the theorem. 
Let $0=z_1<\ldots<z_{\ell}$ be the (ordered) list of the integers explicitly mentioned in the program of~$M$, including 0, or contained in its registers at the end of Step~1 (Preprocessing).
By definition, $\ell$ and the list $(z_1,\dots,z_{\ell})$ depend on $N$ but not on $x,y$, 
each $z_j$ is $O(N)$ and $\ell=O(S(n))=N^{o(1)}$. 

Let~$k$ be a constant upper bound of the time of Step~2 (Product). 
Let us only consider the integer operands $x,y < \lf N^{1/2} \rf$. 
Observe that we have $x\times y < N$ so this product is obtained in a single register. 
         
\begin{claim}\label{claim:exp}
At any time $i\le k$ of Step~2, each register of $M$ contains an integer of the form\footnote{The notation  $[v]$, for an integer $v$, denotes the interval of integers $\{1,\dots,v\}$.}
\begin{eqnarray}\label{eqn: for x,y}
ax+by+\sum_{j\in[2^i]} z_{f(j)}
\end{eqnarray}
where $0\le a,b\le 2^i$ and $f$ is a function from $[2^i]$ to $[\ell]$ ($a,b$ and $f$ depend on the register).
\end{claim}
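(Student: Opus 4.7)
The plan is to prove Claim~\ref{claim:exp} by induction on the time $i\in\{0,1,\dots,k\}$, using the fact that the form displayed in the claim is stable under every $R$-instruction of Table~\ref{table:instRAM2}: coefficients $a,b$ double at most under addition (which justifies the bound $2^i$) and the number of $z_j$-terms doubles as well (which justifies the sum over $[2^i]$). To absorb instructions that do not actually double these quantities, I would exploit $z_1=0$, so that any partial sum can be padded with extra copies of $z_1$ until it has exactly the required number of terms without changing its value. Before starting the induction I would enlarge, if necessary, the list $z_1<\dots<z_\ell$ so that it also contains $N$; this changes $\ell$ by at most one and keeps it in $N^{o(1)}$.

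For the base case $i=0$, the situation is that the first instruction of Step~2 has not yet been executed, so each register either still holds $0=z_1$ or contains a value produced by the preprocessing, which belongs to $\{z_1,\dots,z_\ell\}$ by definition. Since $x,y$ have not yet been read, each register is of the form $0\cdot x+0\cdot y+z_{f(1)}$ with $a=b=0\le 2^0=1$ and $f\colon[1]\to[\ell]$, as required.

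For the inductive step, assuming every register of $M$ at time $i$ satisfies the claim, I would go through the $R$-instructions of Table~\ref{table:instRAM2} one by one. The instructions $\mathtt{Jzero}$, $\mathtt{Output}$, $\mathtt{Halt}$ change no register. The instructions $\mathtt{Move}$, $\mathtt{Store}$, $\mathtt{Load}$ copy one register's content into another, so the form is simply inherited (padded to $2^{i+1}$ terms using $z_1=0$). For $\mathtt{Cst}\ i'\ j$, the constant $j$ is explicitly mentioned in the program, hence $j=z_h$ for some $h\in[\ell]$, and the new content is $0\cdot x+0\cdot y+z_h$ (padded). For $\mathtt{GetN}\ i'$, the new content is $N$, which by our convention also belongs to $\{z_1,\dots,z_\ell\}$. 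For $\mathtt{Input}\ i'$, reading $x$ or $y$ yields respectively $1\cdot x+0\cdot y+0$ or $0\cdot x+1\cdot y+0$, both of the required form.

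The only genuinely interesting case is the addition instruction $R[0]\gets R[0]+R[1]$. Writing the inductive hypothesis as $R[0]=a_1x+b_1y+\sum_{h\in[2^i]} z_{f_1(h)}$ and $R[1]=a_2x+b_2y+\sum_{h\in[2^i]} z_{f_2(h)}$, the sum becomes
$$R[0]=(a_1+a_2)\,x+(b_1+b_2)\,y+\sum_{h\in[2^{i+1}]} z_{f(h)},$$
where $f$ is obtained by concatenating $f_1$ and $f_2$, namely $f(h)=f_1(h)$ for $h\le 2^i$ and $f(h)=f_2(h-2^i)$ for $h>2^i$. Since $a_1,a_2\le 2^i$ we have $a_1+a_2\le 2^{i+1}$ and likewise for $b_1+b_2$, so the form is preserved with the new bounds. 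This single doubling step is precisely what the exponential bound $2^i$ in the claim anticipates. I expect the main obstacle to be not conceptual but clerical: the meticulous case analysis over instruction types and the careful bookkeeping of the index set $[2^i]$ (in particular, making sure that the padding trick via $z_1=0$ is used consistently so that the cardinality of the index set is exactly $2^{i+1}$ after each step, which is what the subsequent cardinality argument in the proof of Theorem~\ref{th: no RAM N^o(1)} will rely on).
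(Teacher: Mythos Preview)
Your proposal is correct and follows essentially the same approach as the paper: induction on the time index, with the addition instruction being the only interesting case (coefficients and number of $z$-terms double, handled by concatenating the two index functions), and padding via $z_1=0$ for the other instructions. The paper's proof is terser—it dispatches the base case by simply observing that the form $ax+by+z_{f(1)}$ with $a,b\le 1$ already covers $x$, $y$, and any $z_j$, and it only spells out the addition case—whereas you walk through every instruction type explicitly and add $N$ to the $z$-list; but the underlying argument is identical.
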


\begin{proof}[Proof of the claim] 
It is proved by induction on the time $i\in[0,k]$. 
The claim is trivially true for $i=0$ since then expression~(\ref{eqn: for x,y}) reduces to $ax+by+z_{f(1)}$, for $a,b\le 1$: 
this expression represents~$x$ with $(a,b)=(1,0)$ (resp. $y$ for $(a,b)=(0,1)$) and $f(1)=1$ (remember that $z_1=0$) and it represents $z_j$, for $j\in [\ell]$, with $(a,b)=(0,0)$ and $f(1)=j$.

Suppose the claim is true at a time $i<k$. 
We only need to prove it at time $i+1$ when the instruction performs the addition of two registers.
By hypothesis, these two registers contain at time $i$ two integers $u,v$ of the form
$u=ax+by+\sum_{j\in[2^i]} z_{f(j)}$ and $v=cx+dy+\sum_{j\in[2^i]} z_{g(j)}$,
where $a,b,c,d\le 2^i$ and $f,g$ are functions (dependent on the two registers) from $[2^i]$ to $[\ell]$.
We obtain at time $i+1$ the sum $u+v=(a+c)x+(b+d)y+\sum_{j\in[2^i]}(z_{f(j)}+z_{g(j)})$.
We get $a+c\le 2^{i+1}$, $b+d\le 2^{i+1}$ and 
$\sum_{j\in[2^i]} (z_{f(j)}+z_{g(j)}) = \sum_{j\in[2^{i+1}]} z_{h(j)}$ 
for the function $h:[2^{i+1}]\to [\ell]$ defined 
by $h(j)\coloneqq f(j)$ if $j \le 2^i$ and $h(j)\coloneqq g(j-2^i)$ if $j > 2^i$.
This proves Claim~\ref{claim:exp}.
\end{proof}

In particular, Claim~\ref{claim:exp} implies that the output $x\times y$ is of the form~(\ref{eqn: for x,y}) for $i=k$. 
This means that the variables $x,y$ satisfy an equation of the form
\begin{eqnarray}\label{eqn: for x,y,xy}
ax+by+\sum_{j\in[2^k]} z_{f(j)}= x\times y
\end{eqnarray}
where $0\le a,b\le 2^k$ and $f$ is a function 
from $[2^k]$ to $[\ell]$.
Let us define the set of pairs of operands $P\coloneqq \{(x,y)\in \mathbb{N}^2 : x,y < \lf N^{1/2} \rf \}$.
Basically, the contradiction will come from the fact that the cardinality\footnote{As usual, we denote by $\vert A \vert$ the cardinality of a set $A$.} $\vert P \vert$ of $P$
is much greater than the number of possible equations~(\ref{eqn: for x,y,xy}). 
More precisely, for any fixed integer $N$, the number of possible values $c=\sum_{j\in[2^k]} z_{f(j)}$
is less than or equal to the number of functions $f:[2^k]\to\ell$, which is $\ell^{2^k}$.
Therefore, for a fixed $N$ the number of distinct equations~(\ref{eqn: for x,y,xy}), which is equal to the number of distinct tuples of coefficients $(a,b,c)$ satisfying the above conditions, is less than or equal to 
\begin{center}
$L\coloneqq (2^k +1)^{2} \times \ell^{2^k}=N^{o(1)}$
\end{center} 
since we have $\ell=N^{o(1)}$ and $2^k$ is a constant.
Let $E$ be the set of possible tuples of coefficients $(a,b,c)$ in equation~(\ref{eqn: for x,y,xy}). 
We have $\vert E \vert \le L=N^{o(1)}$.  

\begin{claim}\label{claim:3/4}
For each sufficiently large fixed $N$, there exists $(\alpha,\beta,\gamma)\in E$ such that the equation $\alpha x+\beta y+\gamma = x\times y$ has more than $N^{3/4}$ solutions in $P$.

Formally, $\vert \{(x,y)\in P : \alpha x+\beta y+\gamma = x\times y \} \vert > N^{3/4}$.
\end{claim}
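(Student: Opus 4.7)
The plan is to prove Claim~\ref{claim:3/4} by a straightforward pigeonhole argument: every pair in $P$ must be accounted for by at least one equation indexed by $E$, and since $|P|$ is almost $N$ while $|E|$ is only $N^{o(1)}$, some equation must be satisfied by a super-polynomial-sized fraction of $P$, in particular more than $N^{3/4}$ pairs.

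First, I would lower-bound the cardinality of $P$. Since $P = \{(x,y)\in\mathbb{N}^2 : x,y<\lfloor N^{1/2}\rfloor\}$, we have
\[
|P| = \lfloor N^{1/2}\rfloor^2 \ge (N^{1/2}-1)^2 \ge N/2
\]
for all sufficiently large $N$. Next, I would recall what was just established before the claim: for every pair $(x,y)\in P$, the computation of Step~2 on inputs $(x,y)$ must eventually place $x\times y$ in a register, and by (the consequence of) Claim~\ref{claim:exp} this value is forced to equal $ax+by+c$ for some triple $(a,b,c)\in E$. In other words, the map $\Phi : P \to E$ sending $(x,y)$ to (any choice of) a triple that witnesses equation~(\ref{eqn: for x,y,xy}) is well defined.

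Then I would apply pigeonhole to $\Phi$: since $|E|\le L = N^{o(1)}$, some triple $(\alpha,\beta,\gamma)\in E$ has preimage of size at least
\[
\frac{|P|}{|E|} \;\ge\; \frac{N/2}{L} \;=\; \frac{N^{1-o(1)}}{2}.
\]
Unpacking $L = N^{o(1)}$: for every fixed $\varepsilon>0$, we have $L\le N^{\varepsilon}$ for all large enough $N$. Choosing $\varepsilon = 1/8$ (say) gives $|P|/|E| \ge N^{7/8}/2 > N^{3/4}$ for all $N$ large enough. This triple $(\alpha,\beta,\gamma)$ is therefore the one sought by the claim, and the preimage $\{(x,y)\in P : \alpha x+\beta y+\gamma = x\times y\}$ has cardinality strictly greater than $N^{3/4}$.

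No substantive obstacle is expected — the content is entirely a counting argument. The only care needed is the bookkeeping: verifying that the bound $|E|\le L=(2^k+1)^2\,\ell^{2^k}$ is genuinely $N^{o(1)}$ (which follows because $k$ is a fixed constant and $\ell=S(N)=N^{o(1)}$), and that $P$ has cardinality $\Theta(N)$ rather than $\Theta(N^{1/2})$ — this is why the operands were restricted to be below $\lfloor N^{1/2}\rfloor$ (so that $x\times y<N$ fits in a single register) rather than to $[0,N[$. The real work of the theorem happens afterwards, when this "many solutions of one linear-in-$xy$ equation" conclusion is turned into a contradiction.
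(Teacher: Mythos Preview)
Your proof is correct and is essentially the same pigeonhole argument as the paper's, just phrased directly rather than by contradiction: the paper assumes every triple has at most $N^{3/4}$ solutions, bounds the union by $|E|\cdot N^{3/4}=N^{3/4+o(1)}<|P|$, and derives a contradiction with Claim~\ref{claim:exp}. Your explicit lower bound $|P|\ge N/2$ and choice $\varepsilon=1/8$ make the quantifiers a bit more concrete, but the content is identical.
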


\begin{proof}[Proof of the claim by contradiction]
Assume that for each $(a,b,c)\in E$ we have
\begin{center} 
$\vert \{(x,y)\in P : ax+by+c=x\times y\}\vert \le N^{3/4}.$
\end{center} 
Then the number of solutions $(x,y)\in P$ of an equation $ax+by+c=x\times y$ for $(a,b,c)\in E$ is
\begin{center} 
$\vert \bigcup_{(a,b,c)\in E} \{(x,y)\in P : ax+by+c=x\times y \}\vert \le \vert E \vert \times N^{3/4}=
N^{3/4+o(1)}.$
\end{center} 
Since $\vert P \vert=\Theta(N)$, this implies that there is a pair $(x,y)\in P$ 
which is never a solution of an equation $ax+by+c=x\times y$, for $(a,b,c)\in E$.
This contradicts Claim~\ref{claim:exp}. 
\end{proof}

Let us consider the interval $I\coloneqq \{x\in\mathbb{N} : x < \lf N^{1/2} \rf \}$ of cardinality 
$\vert I \vert \le N^{1/2}$. 
Define the set of solutions 
$\mathtt{Sol}(\alpha,\beta,\gamma)\coloneqq \{(x,y)\in P : \alpha x+\beta y+\gamma =x\times y \}$ 
for the tuple $(\alpha,\beta,\gamma)$ of Claim~\ref{claim:3/4}.

\begin{claim}\label{claim:at least 2}
The number $\nu$ of $x\in I$ such that $\vert\{y\in I : (x,y) \in \mathtt{Sol}(\alpha,\beta,\gamma)\}\vert \ge 2$
is at least $2$.
\end{claim}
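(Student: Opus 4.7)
The plan is a direct pigeonhole argument on how the $> N^{3/4}$ solutions of $\mathtt{Sol}(\alpha,\beta,\gamma)$ guaranteed by Claim~\ref{claim:3/4} distribute across the values of the first coordinate $x \in I$. For each $x \in I$, I would define $n_x \coloneqq \vert \{y \in I : (x,y)\in \mathtt{Sol}(\alpha,\beta,\gamma)\} \vert$, so that $\sum_{x\in I} n_x = \vert \mathtt{Sol}(\alpha,\beta,\gamma)\vert > N^{3/4}$, while trivially $n_x \le \vert I \vert \le N^{1/2}$.

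The key observation is that if almost all $x \in I$ contributed only a single $y$, the total would be at most $\vert I\vert \le N^{1/2}$, which is much smaller than $N^{3/4}$. Quantitatively, I would argue by contradiction: if $\nu \le 1$, then splitting the sum into the contribution of the (at most) one $x$ with $n_x \ge 2$, bounded by $\vert I\vert \le N^{1/2}$, and the contributions of the remaining $x$'s, each bounded by $1$ and hence totaling at most $\vert I\vert \le N^{1/2}$, would give
\[
N^{3/4} \;<\; \sum_{x\in I} n_x \;\le\; N^{1/2} + N^{1/2} \;=\; 2N^{1/2},
\]
which is false for all sufficiently large $N$. Therefore $\nu \ge 2$, as required. (In fact the same estimate gives the stronger lower bound $\nu > N^{1/4} - 1$, but only $\nu \ge 2$ is needed in the sequel.)

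There is no real obstacle here: the argument is just the double-counting inequality $\sum n_x \le \nu \cdot \vert I \vert + (\vert I \vert - \nu)$ together with $\vert I \vert \le N^{1/2}$ and the lower bound from Claim~\ref{claim:3/4}. The only thing to be careful about is the ``sufficiently large $N$'' condition, which is harmless because Claim~\ref{claim:3/4} itself is stated for sufficiently large $N$, and Theorem~\ref{th: no RAM N^o(1)} is a statement about the asymptotic behavior as $N \to \infty$.
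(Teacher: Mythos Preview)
Your argument is correct and essentially identical to the paper's proof: both proceed by contradiction from $\nu\le 1$, bound the contribution of the (at most one) exceptional $x$ by $\vert I\vert\le N^{1/2}$ and the remaining $x$'s (each contributing at most one $y$) by another $\vert I\vert\le N^{1/2}$, and obtain $\vert\mathtt{Sol}(\alpha,\beta,\gamma)\vert\le 2N^{1/2}$, contradicting Claim~\ref{claim:3/4}. The paper merely phrases the split as $\mathtt{Sol}=S_0\cup S_1$ and makes the threshold explicit ($N\ge 16$), but the content is the same.
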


\begin{proof}[Proof of the claim]
Suppose $\nu\le 1$. Assume $\nu = 1$ (the case $\nu = 0$ is simpler). 
Let $x_0$ be the unique integer in $I$ such that 
$\vert\{y\in I : (x_0,y)\in \mathtt{Sol}(\alpha,\beta,\gamma)\}\vert \ge 2$.
Decompose $\mathtt{Sol}(\alpha,\beta,\gamma)=S_0\cup S_1$ so that $S_0$ (resp. $S_1$) is the set of pairs
$(x,y)\in \mathtt{Sol}(\alpha,\beta,\gamma)$ such that $x=x_0$ (resp. $x\neq x_0$).
Of course, $\vert S_0 \vert \le \vert I \vert$. 
We also have $\vert S_1 \vert \le \vert I \vert$ 
since $\vert\{y\in I : (x,y) \in \mathtt{Sol}(\alpha,\beta,\gamma)\}\vert < 2$
for each fixed $x\in I\setminus\{x_0\}$. 
This implies $\vert \mathtt{Sol}(\alpha,\beta,\gamma) \vert \le 2\vert I \vert \le 2 N^{1/2}\le N^{3/4}$ 
for $N\ge 16$, which contradicts Claim~\ref{claim:3/4}. 
\end{proof}

\noindent \emph{End of the proof of Theorem~\ref{th: no RAM N^o(1)}}.
According to Claim~\ref{claim:at least 2}, there exist two distinct integers $x_0,x_1\in I$ 
for which there exist two distinct integers $y_1,y_2\in I$ 
(associated with $x_0$) and two distinct integers $y_3,y_4\in I$ 
(associated with $x_1$) such that the four pairs $(x_0,y_1)$, $(x_0,y_2)$, 
$(x_1,y_3)$ and $(x_1,y_4)$ belong to $\mathtt{Sol}(\alpha,\beta,\gamma)$.
This means that we have 
$\alpha x_0+\beta y_1+ \gamma =x_0y_1\;$ and $\;\alpha x_0+\beta y_2+ \gamma =x_0y_2$, 
which implies (by subtraction) $\beta (y_1-y_2)=x_0(y_1-y_2)$ and then $\beta =x_0$ 
(since $y_1\neq y_2$).
Considering $x_1,y_3,y_4$ instead of $x_0,y_1,y_2$, we deduce in exactly the same way $\beta =x_1$, which contradicts $x_0\neq x_1$.
This concludes the proof of Theorem~\ref{th: no RAM N^o(1)}.
\end{proof}

\begin{remark}
The main arguments of the proof of Theorem~\ref{th: no RAM N^o(1)} with Claims~\ref{claim:exp} and~\ref{claim:3/4} apply not only to multiplication but also to most binary operations. 
E.g., subtraction does not belong\footnote{To prove this, Claims~\ref{claim:exp} and~\ref{claim:3/4} remain unchanged (with $x-y$ instead of $xy$ in Claim~\ref{claim:3/4})  but Claim~\ref{claim:at least 2} is replaced by the statement (easily deduced from Claim~\ref{claim:3/4}): there exist integers $x_0,y_1,y_2$ such that $x_0>y_1>y_2$, and $(x_0,y_1)$ and $(x_0,y_2)$ are two solutions of the equation 
$\alpha x+\beta y+\gamma = x-y$. This implies by subtraction $\beta(y_1-y_2)=y_2-y_1$, which leads to 
$\beta=-1$, contradicting the condition $\beta\ge 0$.}~to $\cpp_{N^{o(1)}}(+)$.
This shows that the complexity class of operations computed in constant time with preprocessing time 
$N^{o(1)}$ (or even pre-computed tables of size~$N^{o(1)}$) is degenerate. 
\end{remark}

To what extent does the validity of Theorem~\ref{th: no RAM N^o(1)} depend on the condition that the only allowed operation is addition?
In other words, is Theorem~\ref{th: no RAM N^o(1)} still true if other arithmetic operations are allowed, 
e.g. $\mathtt{div}$ and $\mathtt{mod}$ as in Theorem~\ref{th: N->N^1/c}?
More precisely:
\begin{openpb}
Is the inclusion
$\cpp_{N^{o(1)}}(+,\mathtt{div},\mathtt{mod}) \subset \cpp_{N}(+,\mathtt{div},\mathtt{mod})$ strict?
\end{openpb}

\begin{remark}
We strongly conjecture but have no proof that $\times \not\in \cpp_{N^{o(1)}}(+,\mathtt{div},\mathtt{mod})$.
Nevertheless, we can extend Theorem~\ref{th: no RAM N^o(1)} with subtraction, i.e. prove 
$\times \not\in \cpp_{N^{o(1)}}(+,-)$. 
To do this, replace the expression~(\ref{eqn: for x,y}) in Claim~\ref{claim:exp} by 
$ax+by+\sum_{j\in[2^i]} z_{f(j)} - \sum_{j\in[2^i]} z_{g(j)}$
where $a,b\in[-2^i, 2^i]$ and $f,g$ are functions from $[2^i]$ to $[\ell]$.
The rest of the proof remains unchanged.
\end{remark}

%%%%%%%%%%%%%%%%%%%%%%%%%%%%%%%
\subsection{Minimality for primitive operations}\label{subsec:unary}

As we have proven that the $\cpp$ class does not change if we add to addition as native operations subtraction, multiplication, division, logarithm, etc.,
it is natural to ask whether limiting the set of native operations to successor and predecessor operations (instead of addition) \emph{restricts} the $\cpp$ class.
The following theorem gives a strongly positive answer to this question. 

\begin{theorem}\label{th: AddNotCTonSuccRAM} 
Let $\mathtt{UnaryOp}$ be a set of \emph{unary} operations.
There is \emph{no} RAM using only $\mathtt{UnaryOp}$ operations (no addition) such that, for any input integer $N$:
\begin{enumerate}
\item \emph{Preprocessing:} the RAM computes some tables using any space and time;
\item \emph{Addition:} the RAM reads any two integers $x,y<N$ and computes their sum $x+y$ in \emph{constant time}. 
\end{enumerate}
Furthermore, the result is still valid if the RAM is allowed to use any integers (without limiting register contents and addresses) and any (in)equality test, with $=,\neq,<,\leq$, in its branch instructions.

In particular, this implies $+ \in \cpp(+) \setminus \cpp(\mathtt{UnaryOp})$. 
\end{theorem}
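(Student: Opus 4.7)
The plan is to argue by contradiction, exploiting the fact that unary operations and indirect addressing cannot genuinely combine two input variables. Suppose some RAM $M$ using only operations in $\mathtt{UnaryOp}$ computes $x+y$ in at most $k$ constant-time instructions after an arbitrary preprocessing; let $P$ denote the resulting initial memory content. For each pair $(x,y) \in [0,N)^2$ I attach an \emph{extended transcript} $T^*(x,y)$ recording (i) the sequence of instruction labels executed and the outcome of each branch during the constant-time phase, and (ii) for each indirect load $R[i] \gets R[R[j]]$ encountered, either the tag \emph{clean} (no earlier store in the constant-time phase has written to the accessed address) or the index $s$ of the most recent such store. Since there are at most $2^k$ branch-outcome sequences and at most $(k+1)^k$ provenance tuples, the number $C$ of distinct extended transcripts is a constant, and $[0,N)^2$ partitions into classes $I_{T^*} := \{(x,y) : T^*(x,y) = T^*\}$.

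The key lemma is that on each $I_{T^*}$, every register value at every step $t \le k$ is \emph{pure}, meaning its restriction to $I_{T^*}$ is a function of $x$ alone, of $y$ alone, or a constant. I prove this by induction on $t$. Initially registers hold preprocessed constants or, after the input instructions, the values $x$ and $y$ — all trivially pure. Unary operations and moves preserve purity since they apply a one-argument function to a single register's value. For a load $R[i] \gets R[R[j]]$ tagged \emph{clean}, the extended transcript guarantees $R[i]$ receives $P(R[j](x,y))$, a composition of a unary function with a pure one, hence pure; for a load whose provenance is a prior store $R[R[a]] \gets R[b]$, $R[i]$ receives the value of $R[b]$ at the time of that store, which is pure by the inductive hypothesis. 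Stores leave registers unchanged, and branches are resolved by the transcript.

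Granting the lemma, the output of $M$ on $I_{T^*}$ — which equals $x+y$ by assumption — must be pure on $I_{T^*}$. If it depends on $x$ only, then for each fixed $x_0$ at most one $y$ can satisfy $(x_0,y) \in I_{T^*}$ (else $x_0 + y$ would take two distinct values), so $|I_{T^*}| \le N$. The $y$-only case is symmetric, and a constant output confines $I_{T^*}$ to a single anti-diagonal $\{x+y=c\}$, again of size at most $N$. Summing over the $C$ classes gives
\[
N^2 \;=\; \sum_{T^*} |I_{T^*}| \;\le\; C\cdot N,
\]
which fails for $N > C$ and yields the desired contradiction.

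The main obstacle, and the reason the extended transcript is needed, is the Store/Load interplay: naively, writing a $y$-dependent value to an $x$-dependent address and reading it back seems to synthesize a genuinely bivariate value. The provenance tag neutralizes this by tracing every load either to a clean preprocessed cell (single-variable by construction of the address) or to a single prior register value (single-variable by the inductive hypothesis). Address coincidences between loads and stores merely impose extra constraints that shrink $I_{T^*}$, but they never manufacture a true two-variable dependence — which is exactly why the bivariate function $x+y$ cannot appear as an output.
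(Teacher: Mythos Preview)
Your plan mirrors the paper's: partition $[0,N)^2$ into constantly many classes by the constant-time trace, show the output on each class depends on at most one of $x,y$, and finish by pigeonhole.  The paper handles the Store/Load interplay differently: it first replaces the constant-time phase by an equivalent one in which internal memory is \emph{read-only}, simulating each modified cell with one of $O(1)$ buffer registers (this uses only equality tests, no operations).  You instead keep stores and enrich the transcript with a provenance tag for every indirect \texttt{Load}.

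There is a genuine gap in your purity lemma: the inductive step does not go through at Stores.  The sentence ``Stores leave registers unchanged'' is false---$\mathtt{Store}\;i\;j$ writes $R[j]$ to the cell at address $R[i]$, and that cell can be one of the fixed-index registers subsequently read by a \texttt{Move}, an \texttt{op}, a \texttt{Jzero}, or as the address field of another \texttt{Store}/\texttt{Load}.  Concretely, if a store writes the pure value $g(y)$ to the pure address $f(x)$ and later the program executes $\mathtt{Move}\;2\;5$, then on your single class $I_{T^*}$ (there are no Loads, hence no provenance tags) the value of $R[5]$ is $g(y)$ when $f(x)=5$ and its prior value otherwise---a bivariate function.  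So ``every register value is pure on $I_{T^*}$'' fails, and with it your bound $|I_{T^*}|\le N$.

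The fix is routine and preserves your structure: extend the provenance bookkeeping from Loads to \emph{every} read of a fixed-index register (Moves, op arguments, branch tests, address fields).  For each such read you record which prior Store, if any, last wrote to that fixed cell; with at most $k$ steps and $O(1)$ fixed registers this still yields only constantly many extended transcripts, and on each the invariant now survives Stores.  Alternatively, you can simply adopt the paper's buffer reduction to eliminate Stores from the constant-time phase before running your argument.
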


\begin{remark}
By not limiting the preprocessing (item~1) and the integers used, 
Theorem~\ref{th: AddNotCTonSuccRAM} states a stronger result than what is required. 
\end{remark}

\begin{proof}[Proof of Theorem~\ref{th: AddNotCTonSuccRAM} by contradiction] 
Suppose that there exists a RAM($\mathtt{UnaryOp}$), denoted $M$,
which satisfies conditions 1,2 of Theorem~\ref{th: AddNotCTonSuccRAM}. 
Without loss of generality, the following conditions can be assumed.
\begin{itemize}
\item Step 1 of $M$ (the \emph{preprocessing}) ends with the internal memory registers $R[0],R[1],\ldots$ containing the successive integers 
$p(0),p(1),\ldots$, for a function $p:\mathbb{N}\to \mathbb{N}$.
\item Step 2 of $M$ (the \emph{addition}) does not modify the internal memory registers $R[0],R[1],\ldots$ (which are read-only throughout Step 2) but uses a fixed number of read/write registers, denoted~$A$ (the accumulator) and $B_1,\ldots,B_k$ (the buffers), all initialized to 0, and two read-only registers $X$ and~$Y$, which contain the operands $x$ and $y$, respectively. 
The program of Step 2 is a sequence $I_0,\ldots, I_{r}$ of numbered instructions of the following forms:
\begin{itemize}
\item $A\gets 0$ ; $A\gets X$ ; $A\gets Y$;
\item $A\gets \mathtt{op}(A)$,  
for some $\mathtt{op}\in \mathtt{UnaryOp}$;
\item $A\gets R[A]$; 
\item $B_i\gets A$ 
; $A\gets B_i$, for some $i\in\{1,\ldots,k\}$;
\item $\mathtt{if}$ $A\prec B_1$ $\mathtt{then}$ $\mathtt{goto}$ $\ell_0$ 
$\mathtt{else}$ $\mathtt{goto}$ $\ell_1$, for $\prec$  $\in \{=,\neq,<,\leq,>,\geq\}$ and some 
$\ell_0,\ell_1\in \{0,\ldots,r\}$ (branch instruction);
\item $\mathtt{return}$ $A$ (this is the last instruction $I_{r}$; recall that the instruction executed after any other instruction $I_{j}$, $j<r$, is $I_{j+1}$, except after a branch instruction).
\end{itemize}
\end{itemize}

\noindent
\emph{Justification:} The constraint that internal memory must be read-only throughout Step 2 is made possible by the assumption that the time of Step 2 is less than some constant. 
Thus, Step 2 can consult/modify only a constant number of registers $R[i]$. 
Instead of modifying the contents of a register $R[i]$, we copy it -- at the first instant when it must be modified -- into one of the buffer registers $B_1,\ldots,B_k$ so that its contents can be modified like that of the original register $R[i]$ that it simulates. 

More precisely, if $m$ is an upper bound of the time of
the simulated (original) constant-time program so that this program
modifies (at most) $m$ distinct registers $R[i]$, then $k\coloneqq
2m+2$ buffer registers $B_1,\ldots,B_k$ are sufficient for the
simulation: $B_1$ is the original buffer; the contents of $B_2$ is the
number of original registers $R[i]$ which have been currently modified
(in the simulated/original constant-time program); the sequence of $m$
buffers $B_3,\ldots,B_{m+2}$ (resp.\ $B_{m+3},\ldots,B_{2m+2}$) contain
the current sequence of addresses $i$ (resp.\ sequence of contents) of
these registers\footnote{When the original RAM tries to write
a value $v$ in a register $R[i]$, then the simulating RAM $M$ consults
the list of contents of the $s$ ``address" buffers
$B_3,\ldots,B_{s+2}$, where $s$ is the contents of the counter $B_2$.
If the address~$i$ is found, that means, for some $j$ 
($3 \leq j \leq s+2$), $B_j$ contains $i$, then the value $v$ is written in $B_{j+m}$
(the buffer that simulates $R[i]$).  Otherwise, the (not found)
address $i$ is copied in the buffer $B_{s+3}$ (the first buffer not
yet used after $B_{s+2}$), the value $v$ is written in the buffer
$B_{s+3+m}$ -- this buffer now simulates $R[i]$ -- and the counter
buffer $B_2$ is incremented by 1: it now contains $s+1$} $R[i]$. 
The reader can verify that the time of Step 2 of $M$ is always less than a
constant $\mu=O(m^2)$.

\bigskip
For $0\leq i \leq \mu$, let $(A(i),B_1(i),\ldots,B_k(i), L(i))$ be the sequence of contents of the read/write registers $A,B_1,\ldots,B_k$ and the value $L(i)$ of the ordinal instruction counter at instant~$i$ of Step~2. 
At instant 0, we have
$(A(0),B_1(0),\ldots,B_k(0), L(0))=(0,0,\ldots,0,0)$. For convenience, we adopt the following convention
when $L(i)=r$, for some $i<\mu$ (the instruction $I_{r}$ is $\mathtt{return}$~$A$):
\begin{eqnarray}\label{eqn:return}
(A(i+1),B_1(i+1),\ldots,B_k(i+1),L(i+1))\coloneqq (A(i),B_1(i),\ldots,B_k(i), r)
\end{eqnarray}

The following claim can be proved by an easy induction, see~\cite{GrandjeanJachiet22} for details.

\begin{claim}\label{claim:unary}
At each instant $i$ \emph{($0\leq i \leq \mu$)} of Step 2, an equation of the following form is true:
\begin{eqnarray}\label{eq: unarySucc}
(A(i),B_1(i),\ldots,B_k(i), L(i)) =
\left\{
\begin{array}{l}
\mathtt{if}\; t_1\;\mathtt{then}\;(a_1,b_{1,1},\ldots,b_{k,1},\lambda_1)\\
\ldots\ldots\ldots\ldots\ldots\ldots\ldots\ldots\ldots\ldots \\
\mathtt{if}\; t_q\;\mathtt{then}\;(a_q,b_{1,q},\ldots,b_{k,q},\lambda_q)\\
\end{array}
\right .
\end{eqnarray}

\noindent
Here, $\lambda_1,\dots,\lambda_q$ are explicit labels in $\{0,\dots,r\}$,
each test $t_j$ is a conjunction of (in)equalities 
$\bigwedge_{h} c_h \prec_h d_h$ with $\prec_h$  $\in \{=,\neq,<,\leq,>,\geq\}$
and each term $a_h,b_{h,j},c_h,d_h$ is of the (unary) form 
$(f_1\circ f_2 \circ \cdots \circ f_s)(u)$ where $u$ is either~$X$, or~$Y$, or the constant~0, and $s\geq 0$, and each $f_j$ is either a unary operation $\op\in\mathtt{UnaryOp}$ or the ``internal memory function" $p:\mathbb{N}\to \mathbb{N}$
which maps each address $\alpha$ to the contents $p(\alpha)$ of the corresponding read-only register $R[\alpha]$ (note that the function $p$ depends on $N$).

Moreover, the family of tests $(t_j)_{1\leq j \leq q}$ satisfies the \emph{partition property}: 
for each pair of integers $(x,y)$, \emph{exactly one} test $t_j$ is true for $X\coloneqq x$ and $Y\coloneqq y$.
\end{claim}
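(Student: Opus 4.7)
\medskip \noindent
\textbf{Proof plan for Claim~\ref{claim:unary}.}
The plan is to argue by induction on the instant $i \in \{0,1,\ldots,\mu\}$. At $i=0$ we have $(A(0),B_1(0),\ldots,B_k(0),L(0))=(0,\ldots,0,0)$, so a single branch with the trivially true test $t_1=(0=0)$ and tuple of unary terms $a_1=b_{1,1}=\cdots=b_{k,1}=0$ and label $\lambda_1=0$ already fits the form~(\ref{eq: unarySucc}); the partition property holds trivially. So assume the claim holds at some instant $i<\mu$, with representation indexed by tests $t_1,\ldots,t_q$, and inspect the instruction $I_{L(i)}$ dictating the transition to instant $i+1$.

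For each of the non-branching instructions, the update is uniform across branches: it suffices to transform each branch $(a_j,b_{1,j},\ldots,b_{k,j},\lambda_j)$ into another tuple of the same shape and to replace the label $\lambda_j$ by its successor (or the jump target for unconditional jumps, though the model here has none besides branches). Concretely, $A \gets 0$, $A \gets X$, $A \gets Y$ replace $a_j$ by $0$, $X$, or $Y$ respectively; $A \gets \op(A)$ replaces $a_j$ by $\op(a_j)$, which is still a unary term; $A \gets R[A]$ replaces $a_j$ by $p(a_j)$, again a unary term because $p$ is among the allowed composable functions; $B_h \gets A$ replaces $b_{h,j}$ by $a_j$; $A \gets B_h$ replaces $a_j$ by $b_{h,j}$; and for the convention~(\ref{eqn:return}) at a $\mathtt{return}$ instruction the state is copied verbatim. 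In every case the tests $t_1,\ldots,t_q$ are unchanged, so the partition property is inherited.

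The one delicate case is a branch instruction $I_{L(i)} = $ ``$\mathtt{if}\; A \prec B_1\; \mathtt{then}\;\mathtt{goto}\;\ell_0\;\mathtt{else}\;\mathtt{goto}\;\ell_1$'', since then the new value of $L(i+1)$ depends on the values of $A(i)$ and $B_1(i)$, which vary from branch to branch. The idea is to \emph{refine} the partition: each branch indexed by $t_j$ is split into two sub-branches by conjoining it with the literal $a_j \prec b_{1,j}$ or its negation $a_j \not\prec b_{1,j}$, producing up to $2q$ new branches. On the ``$\prec$'' sub-branch the tuple is unchanged but the new label is $\ell_0$; on the ``$\not\prec$'' sub-branch it is $\ell_1$. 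Each new test remains a conjunction of (in)equalities between unary terms in $X,Y,0$, and for every $(x,y)$ exactly one of the two sub-tests of the unique true $t_j$ holds, so the partition property is preserved. Of course such branches can also be pruned when a sub-test is unsatisfiable or is absorbed by another, but this is not needed for the claim.

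The hard part, if any, is just bookkeeping: checking that every instruction type preserves the unary shape of the terms and that the partition property survives branching. Because each step either leaves $q$ unchanged or at most doubles it, after the $\mu$ steps of Step 2 we still have finitely many branches, and the claim holds at every instant $i\le \mu$, in particular at $i=\mu$, which is exactly what is needed to derive the subsequent contradiction.
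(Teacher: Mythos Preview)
Your induction strategy matches the paper's own approach (the paper merely says the claim ``can be proved by an easy induction'' and defers to \cite{GrandjeanJachiet22} for details), and your per-instruction case analysis is the right idea. There is, however, one genuine gap in how you structure the inductive step. You write ``inspect the instruction $I_{L(i)}$'' and then split into two global cases---non-branching versus branching---as if there were a single instruction in play at instant~$i$. But $L(i)$ is not a fixed label: by the inductive hypothesis it equals $\lambda_j$ on the $j$-th branch, and the $\lambda_j$'s need not coincide. So at instant $i$ some branches may sit at an assignment, others at a comparison, and others already at $\mathtt{return}$.

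The fix is simple but essential: process each branch $j$ \emph{separately} according to its own instruction $I_{\lambda_j}$---apply your non-branching update rule to branch $j$ when $I_{\lambda_j}$ is an assignment or load, your two-way split when $I_{\lambda_j}$ is a branch instruction, and the identity when $\lambda_j=r$ (the $\mathtt{return}$ convention). With that per-branch correction, your argument goes through, the terms remain of the required unary shape, and the partition property is preserved exactly as you describe.
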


The contradiction will be proven as an application of the ``pigeonhole" principle.
By assumption, Step 2 of $M$ (addition step) returns the sum $x+y$ at time 
$\mu$, this means $A(\mu)=x+y$. 
Therefore, by Claim~\ref{claim:unary}, we obtain, for all $(x,y)\in[0,N[^2$,
\begin{eqnarray}\label{eq: PlusSucc}
x+y=A(\mu) =
\left\{
\begin{array}{l}
\mathtt{if}\; t_1(x,y)\;\mathtt{then}\;a_1\\
\ldots\ldots\ldots\ldots \\
\mathtt{if}\; t_q(x,y)\;\mathtt{then}\;a_q\\
\end{array}
\right .
\end{eqnarray}
where each test $t_j$ is a conjunction of (in)equalities 
$\bigwedge_{h} c_h \prec_h d_h$ with $\prec_h$  $\in \{=,\neq,<,\leq,>,\geq\}$
and each term $a_j,c_h,d_h$ is of the (unary) form $f(0)$, $f(x)$ or $f(y)$, in which $f$ is a composition 
$f_1\circ f_2 \circ \cdots \circ f_s$ of functions $f_i\in \mathtt{UnaryOp} \cup \{p\}$.

Without loss of generality, assume $N>q$. Let us define the $q$ parts $D_j$ of $[0,N[^2$ defined by the tests~$t_j$:
$D_j\coloneqq \{(x,y)\in [0,N[^2 \;\mid\; t_j(x,y)\}$. 
The partition property implies
\[ 
\sum_{j=1}^q \mathtt{card}(D_j)=N^2
\] 
from which it comes -- by the pigeonhole principle -- that at least one of the $q$ sets $D_j$ has at least $N^2/q$ elements. For example, assume $\mathtt{card}(D_1)\geq N^2/q$. 

Also, by~(\ref{eq: PlusSucc}), the implication $t_1(x,y) \Rightarrow x+y=a_1$  is valid.
Therefore, we get
\[
\mathtt{card}\{(x,y)\in [0,N[^2 \;\mid\; x+y=a_1\}\geq \mathtt{card}(D_1)\geq N^2/q>N
\]
using the assumption $N>q$.
Now, suppose that $a_1$ is of the form $f(x)$ (the case $f(y)$ is symmetric and the case $f(0)$ is simpler).
By a new application of the pigeonhole principle, we deduce that there is at least one integer $x_0\in [0,N[$ such that 
\[
 \mathtt{card} \{y\in [0,N[\; \;\mid\; x_0+y=f(x_0)\}>1.
 \]
 Thus, there are at least two distinct integers $y_1,y_2\in[0,N[$ such that $x_0+y_1=f(x_0)=x_0+y_2$, a contradiction.
 This concludes the proof of Theorem~\ref{th: AddNotCTonSuccRAM}.
\end{proof}
 
\begin{remark}
Using the pigeonhole principle again, we also easily see that not only addition but also 
each \emph{binary} operation whose result  \emph{really} depends on the \emph{two} operands  \emph{cannot} be computed in constant time on a RAM which \emph{only uses unary operations}. 
Intuitively, a constant number of comparisons (branch instructions) generates a constant number of cases. But to compute a binary operation, it is not enough to use a constant number of cases defined by (in)equalities between “unary” terms, each of which depends on only one operand.
\end{remark}

\begin{notation}
To keep it short and simple, we often write that an operation is \emph{constant-time computable} to express that it belongs to the class $\cpp$.
\end{notation}

The arguments of Theorem~\ref{th: AddNotCTonSuccRAM} concerning $\cpp$ are not sufficient to be applied to the complexity classes $\textsc{LinTime}$ and $\cdlin$ (defined in Subsection~\ref{subsec:ComplexClasses}).
\begin{openpb}
Are the complexity classes $\textsc{LinTime}$ and $\cdlin$ strictly smaller if the set of allowed operations is 
$\{\mathtt{succ},\mathtt{pred}\}$ instead of $\{+\}$?
An enumeration problem that doesn't seem to be in $\cdlin$ when the set of allowed operations is 
$\{\mathtt{succ},\mathtt{pred}\}$ is \linebreak
$\Pi: X_N \mapsto \Pi(X_N)$, with input $X_N\coloneqq (0,1,\dots,N-1)$ and output  
$\Pi(X_N)\coloneqq \{a^2\mid a\in [0,N[\}$.
\end{openpb}

%%%%%%%%%%%%%%%%%%%%%%%%
\section{Constant time reduced to an addition expression}\label{sect:expression} 
As mentioned in the introduction, a constant-time procedure for computing an operation is reminiscent of direct access in a table. 
Of course, computing an operation by a single access in an array of size $O(N)$ is not possible if the operation is at least binary on operands $<N$ or has operands $\Theta(N^d)$ for $d>1$.
However, we will now show that this is possible if instead of a single direct access, we use an \emph{addition expression}.

\begin{definition}
An \emph{addition expression} (or for short, \emph{expression}) $E(y_1,\dots,y_q)$ over the variables $y_1,\dots,y_q$ is defined recursively as follows:
each variable or explicit integer is a \emph{simple} addition expression;
if $\alpha$ and $\beta$ are simple addition expressions, then $\alpha+\beta$ and $R[\alpha]$ are simple addition expressions; if $\alpha_1,\dots,\alpha_r$ are simple addition expressions, then 
$(\alpha_1,\dots,\alpha_r)$ is an \emph{$r$-tuple} addition expression.
\end{definition}

\begin{remark}
More liberally, we also allow arrays $A[\;]$ in addition expressions because any term 
$A[\alpha]$ is implemented as the addition expression $R[r\alpha+s]$, for fixed integers $r,s$.
\end{remark}

Addition expressions \emph{well compose}:
\begin{definition}[composition of addition expressions] If $E(\vec{x})$ is an $r$-tuple addition expression 
$(\alpha_1,\dots,\alpha_r)$ and $E'(y_1,\dots,y_r)$ is an addition expression on the $r$ variables $y_i$,
then $E' \circ E(\vec{x})$ denotes the addition expression obtained from $E'(y_1,\dots,y_r)$ by replacing each variable $y_i$ by $\alpha_i$.
\end{definition}

The following Theorem~\ref{th:csttime->exp} means that constant time can be reduced to a single addition expression.
Basically, the idea of its proof is to express each instruction of a constant-time procedure, and then the entire procedure, as an \emph{arithmetic expression} involving pre-computed tables and the binary operations $+$, $\times$ and $\subop$, where $x \subop y \coloneqq \max(0,x-y)$. 
The $\subop$ and $\times$ operations are finally eliminated in the resulting expression using only addition and other pre-computed tables.
Note that here again we need to convert the operands from base $N$ to base $N_1\coloneqq \lc N^{1/2}\rc$.
This allows us to replace any expression $x\subop y$ (or $x\times y$) with $x,y<N_1$ by the element $A_{\mathtt{diff}}[x][y]$ (or $A_{\times}[x][y]$) of a pre-computed array $A_{\mathtt{diff}}$ (or $A_{\times}$) of size $(N_1)^2=O(N)$.

\begin{theorem}\label{th:csttime->exp}
Let $\op$ be an operation in $\cpp$ of arity $k$.
Then, for each integer $d\ge 1$, there exist a $\mathrm{RAM}(+)$ and an addition expression 
$E(\vec{x}_1,\dots,\vec{x}_k)$ such that, 
for each integer $N\ge 2$,
\begin{enumerate}
\item the RAM performs a computation in time $O(N)$;
\item at the end of the computation, we have, for all integers $x_1,\dots,x_k<N^d$,
$\op(x_1,\dots,x_k)= E(\vec{x}_1,\dots,\vec{x}_k)$,
where $\vec{x}_i=(x_{i,d-1},\dots,x_{i,0})$ is the radix-$N$ notation of $x_i$, for $i=1,\dots,k$.
\end{enumerate}
\end{theorem}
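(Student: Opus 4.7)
}

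The overall strategy is to start from the constant-time procedure for $\op$ granted by the hypothesis $\op\in\cpp$, and transform it in two passes: first flatten all its control flow into a single arithmetic expression using only $+,-,\times,\mathrm{div},\mathrm{mod}$ and table accesses, then replace these remaining operations by single-digit table lookups. By Theorem~\ref{th:+=+...mod}, we may assume the original RAM uses $\{+,-,\times,\mathrm{div},\mathrm{mod}\}$ as primitive operations and executes at most a constant number $c$ of instructions on any input $(x_1,\dots,x_k)$ with $x_i<N^d$. The preprocessing of the target RAM$(+)$ runs the preprocessing of the original procedure and, in addition, builds the following tables of total size $O(N)$ using only $+$: the base-conversion tables $\mathrm{Mod}B,\mathrm{Div}B,\mathrm{Mult}B$ for $B\coloneqq N_1=\lceil N^{1/2}\rceil$ (from Section~\ref{sec:timesDiv}); the single-digit operation tables $A_+,A_{\mathrm{diff}},A_\times,A_{\mathrm{div}},A_{\mathrm{mod}}$ indexed by pairs of digits $<N_1$; a sign table $\mathrm{Sign}[0..cN]$ with $\mathrm{Sign}[v]=[v>0]$; and a selector table $\mathrm{Sel}[b][v]\coloneqq b\cdot v$ for $b\in\{0,1\}$, $v<cN$.

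Next I would flatten the control flow. Since the original procedure has depth $\le c$, its execution tree has at most $2^c$ leaves; I unfold it into an arithmetic expression $F$ by recursively applying the identity
\[
\mathrm{if}\;b\;\mathrm{then}\;\alpha\;\mathrm{else}\;\beta \;=\; \mathrm{Sel}[b][\alpha] + \mathrm{Sel}[1-b][\beta]
\qquad (b\in\{0,1\}),
\]
where each Boolean test in the procedure is converted to a $0/1$ value via $\mathrm{Sign}[\,\cdot\,]$ applied to an appropriate limited difference computed through $A_{\mathrm{diff}}$ (e.g., $[u<v] = \mathrm{Sign}[A_{\mathrm{diff}}[v,u]]$). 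The resulting $F$ is an arithmetic expression built from the inputs, $+,-,\times,\mathrm{div},\mathrm{mod}$ and table accesses, and returns the correct value on every branch. Divisions that appear only inside a discarded branch are defended by a uniform guard (e.g., replacing $\mathrm{div}(a,b)$ by $A_{\mathrm{div}}[a][b+[b=0]]$) so that $F$ is defined everywhere while matching the original procedure along the selected branch.

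Finally I would eliminate the remaining non-additive operations from $F$. I convert each input $x_i=(x_{i,d-1},\dots,x_{i,0})_N$ to its $(2d)$-digit base-$N_1$ notation by composing the pre-computed lookups $\mathrm{Mod}B$ and $\mathrm{Div}B$, and I regard every variable in $F$ as a tuple of digits $<N_1$, following the conventions of Section~\ref{sec:timesDiv}. The multi-digit operations $+,-,\times,\mathrm{div},\mathrm{mod}$ on such tuples are implemented, as detailed in Section~\ref{sec:timesDiv}, by a \emph{constant-size} unfolding of single-digit steps, and each single-digit operation $u\bullet v$ with $u,v<N_1$ becomes the single addition expression $A_\bullet[\mathrm{Mult}B[u]+v]$. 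Intermediate sums arising in multi-digit multiplication stay bounded by $d\cdot N_1^2=O(N)$ and therefore fit into one register, so they themselves are legitimate addition subexpressions. The output digits in base $N_1$ are converted back to the radix-$N$ tuple of $\op(x_1,\dots,x_k)$ by the method of Lemma~\ref{lemma: N to cRootN}~(Item~2), again using only additions and table lookups. Composing all these addition expressions gives the desired $r$-tuple addition expression $E(\vec{x}_1,\dots,\vec{x}_k)$.

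The principal obstacle is bookkeeping: I must verify that every subexpression that appears inside a lookup index indeed evaluates to an integer bounded by $cN$, and that guarding discarded branches (especially against division by zero) does not perturb the selected branch. Both issues are handled by increasing the constant $c$ and by the guard trick above, but they need to be checked carefully for each of the five operations $+,-,\times,\mathrm{div},\mathrm{mod}$ used in $F$. Once this is done, correctness of $E$ follows immediately from correctness of the original constant-time procedure for $\op$, and the preprocessing time stays $O(N)$ since only linearly many table entries are built, each in $O(1)$ steps using only $+$.
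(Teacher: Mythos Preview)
Your plan has the right shape but contains a gap that the paper's ordering of steps avoids.

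The paper first applies the $\cpp$ hypothesis with exponent $2d$ and reference integer $N_1\coloneqq\lceil N^{1/2}\rceil$, obtaining a $\mathrm{RAM}(+)$ procedure $P$ all of whose register contents are $O(N_1)$. It then unfolds $P$ into a case expression (Claim~\ref{claim:const+}, the analogue of Claim~\ref{claim:unary}): each output is $\mathtt{if}\;\theta_1\;\mathtt{then}\;b_{j,1}\;\cdots\;\mathtt{if}\;\theta_q\;\mathtt{then}\;b_{j,q}$, where every term and every atom of every test is an addition expression of value $O(N_1)$ \emph{on every branch, live or not}. The tests are arithmetised to $v(\theta_k)\in\{0,1\}$ using $\subop$ and~$\times$, giving $z_j=\sum_k v(\theta_k)\,b_{j,k}$, and each $x\subop y$, $x\times y$ (with $x,y=O(N_1)$) is replaced by a lookup in a table of size $O(N_1^{\,2})=O(N)$. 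Base conversion of the inputs and outputs wraps this core.

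You instead keep reference $N$ and, via Theorem~\ref{th:+=+...mod}, allow the procedure to use $\times,\mathrm{div},\mathrm{mod}$. After flattening $\mathtt{if}\;b\;\mathtt{then}\;\alpha\;\mathtt{else}\;\beta$ to $\mathrm{Sel}[b][\alpha]+\mathrm{Sel}[1{-}b][\beta]$, the discarded subexpression is still evaluated, and with $\times$ in the instruction set a dead branch can legally compute values as large as $N^{2^c}$ (e.g.\ a guarded chain of squarings that the live execution never enters). Your table $\mathrm{Sel}$ is indexed by $v<cN$, so such a lookup is out of range; ``increasing the constant $c$'' cannot absorb an $N^{2^c}$ bound, and your division guard does not touch this issue. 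A related symptom: $A_{\mathrm{diff}}$ is declared for pairs of digits $<N_1$ but is invoked in the flattening pass on comparands that are still full $O(N)$ register values.

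The repair is to start from a $\mathrm{RAM}(+)$ procedure (permitted, since $\cpp=\cpp(+)$), so that dead-branch values stay within a fixed multiple of the reference integer, and---as the paper does---to run it at reference $N_1$ so that the $\subop$/$\times$ tables fit in $O(N_1^{\,2})=O(N)$. With that ordering your multi-digit $\mathrm{div},\mathrm{mod}$ machinery becomes unnecessary, and the remaining bookkeeping is exactly what the paper carries out.
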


\begin{example}\label{ex:termMult}
There is an addition expression $E_{\times}(x,y)$ such that, for each $N\ge 2$, we have after a preprocessing of time $O(N)$, the equality $x\times y = E_{\times}(x,y)$, for all $x,y<N$. See Appendix~\ref{append:+expForMult}.
\end{example}

\begin{proof}[Proof of Theorem~\ref{th:csttime->exp}]
To simplify the notation, let $\op$ be a unary operation.
Let $M$ be a RAM(+) which computes $\op$ in constant time with linear preprocessing.
Remember our convention that an input $x<N^d$ is read by $M$ in the form of its $d$ digits 
$x_{d-1},\dots,x_0$ in base $N$, the reference integer, and that in the same way the output $y\coloneqq\op(x)$ is produced by $M$ in base $N$.
The main idea of the simulation of $M$ is the one already used in the proof of 
Theorem~\ref{th: N->N^1/c}: 
instead of $N$, take as the reference integer a root $\lc N^{1/c}\rc$ of $N$; 
here we choose $N_1\coloneqq\lc N^{1/2}\rc$ ($c=2$).
Let us simulate the operation procedure of $M$ by a sequence of five constant-time transformations:

\begin{enumerate}

\item transform the input $x=(x_{d-1},\dots,x_0)_N$ into the tuple 
$(u_{2d-1},\dots,u_1,u_0)$ where $(u_{2i+1},u_{2i})$ is the radix-$N_1$ notation of $x_i$:
set $u_{2i+1}\gets x_i \divop  N_1$ and $u_{2i}\gets x_i \modop N_1$, for $0\le i<d$;

\item transform the tuple $(u_{2d-1},\dots,u_1,u_0)$ into the notation $(x'_{2d-1},\dots,x'_0)$ of $x$ in base $N_1$ 
by a procedure whose reference integer is $N_1$, i.e. whose registers only contain integers~$O(N_1)$,
which computes the radix-$N_1$ notation of the expression 
$\sum_{i=0}^{d-1} (u_{2i+1} N_1 +u_{2i})\times (n_1N_1+n_0)^i$ of~$x$ 
(the values $n_1\coloneqq N \divop N_1$ and $n_0 \coloneqq N \modop N_1$ can be pre-computed);

\item execute on this radix-$N_1$ representation of $x<(N_1)^{2d}$ the operation procedure of $M$, with reference integer $N_1$, the registers of which therefore only contain integers $O(N_1)$; 
this gives the output $y \coloneqq \op(x)$ presented in base~$N_1$,
i.e. $y=(y_{2d'-1},\dots,y_0)_{N_1}$, for a fixed integer $d'=O(d)$;

\item convert the result $y=(y_{2d'-1},\dots,y_0)_{N_1}$ from base $N_1$ to base~$N$: 
according to Item~2 of Lemma~\ref{lemma: N to cRootN} (where we take $c=2$), the radix-$N$ representation $(y'_{d'},\dots,y'_0)_{N}$ of $y$ is produced in the form
of a list of $2d'+2$ integers $(z_{2d'+1},z_{2d'},\dots,z_1,z_0)$  
contained in $2d'+2$ registers, so that $(z_{2i+1},z_{2i})_{N_1}$ is the $N_1$-radix representation of $y'_i$, for $0\le i \le d'$; 

\item therefore, set $y'_i \gets z_{2i }+ \textsc{Mult}N_1[z_{2i+1}]$, for each $i=0,\dots,d'$.

\end{enumerate}

Here are the important points:
\begin{itemize}
\item The first and last transformations are immediate thanks to the pre-computed arrays \linebreak
$\textsc{Mod}N_1[0..N]$, $\textsc{Div}N_1[0..N]$ and $\textsc{Mult}N_1[0..N_1]$: \\
-- transformation~1 is expressed by the assignment
$(u_i)_{0\le i<2d}\gets E_0((x_i)_{0\le i<d})$
with the addition expression 
$E_0((x_i)_{0\le i<d})\coloneqq (\textsc{Mod}N_1[x_i],\; \textsc{Div}N_1[x_i])_{0\le i<d}$;\\
-- similarly, transformation~5 is expressed by 
$(y'_i)_{0\le i \le d'} \gets E_2((z_j)_{0\le j\le 2d'+1})$
with the addition expression 
$E_2((z_j)_{0\le j\le 2d'+1}) \coloneqq (z_{2i}+\textsc{Mult}N_1[z_{2i+1}])_{0\le i \le d'}$.
\item The reference integer for intermediate transformations~2, 3, 4 is $N_1=\lc N^{1/2} \rc$: considering the composition of these three transformations as a single procedure, called $P$, whose only primitive operation is addition (thanks to pre-computed arrays) and whose input is 
$(u_{i})_{0\le i < 2d}$ and output is $(z_{j})_{0\le j\le 2d'+1}$, 
we now plan to simulate $P$ by a single addition expression  $E_1((u_{i})_{0\le i < 2d})$, i.e. by the assignment $(z_{j})_{0\le j\le 2d'+1}\gets E_1((u_{i})_{0\le i < 2d})$.
When we have the addition expression $E_1$, we will 
simulate the operation procedure of $M$ by the composition of $E_0$, $E_1$, $E_2$, i.e. by the assignment $(y'_i)_{0\le i \le d'} \gets E_2\circ E_1\circ E_0((x_i)_{0\le i<d})$.

\end{itemize}

\noindent
Let us construct the expression $E_1$ from the $P$ procedure.
The only points we take into account are that $P$ only uses addition, is constant time and only uses integers $O(N_1)$, including its inputs $u_i$, $0\le i <2d$, and its outputs 
$z_j$, $0\le j \le 2d'+1$.
Without loss of generality, the following conditions, similar to those assumed in the proof of Theorem~\ref{th: AddNotCTonSuccRAM}, can be assumed.

\medskip
The $P$ procedure does not modify the internal memory registers $R[0],R[1],\ldots$ (which are read-only when $P$ executes and only contain integers $O(N_1)$) but uses a fixed number of read/write registers, denoted~$A$ (the accumulator) and $B_0,\ldots,B_{m}$ (the buffers, with $m\ge 2d'+1$), all initialized to 0, and $2d$ read-only registers $U_{i}$ which contain inputs $u_{i}$, for $0\le i <2d$.
$P$ is a sequence $I_0,\ldots, I_{r}$ of numbered instructions of the following eight forms:
\begin{itemize}
\item $A\gets U_{i}$, for a fixed integer $i\in\{0,\dots,2d-1\}$;
\item $A\gets 1$ ; 
$A\gets A+B_0$ ; $A\gets R[A]$; 
\item $B_i\gets A$ ; $A\gets B_i$, for a fixed integer $i\in\{0,\ldots,m\}$;
\item $\mathtt{if}$ $A = B_0$ $\mathtt{then}$ $\mathtt{goto}$ $\ell_0$ 
$\mathtt{else}$ $\mathtt{goto}$ $\ell_1$; 
\item $\mathtt{return}(B_{2d'+1},\dots,B_1,B_0)$ 
(this is the last instruction $I_{r}$; it returns $(z_j)_{0\le j \le 2d'+1}$).
\end{itemize}

Let $\tau$ be a constant upper bound of the time of $P$.
Let $(A(t),B_0(t),\ldots,B_m(t), L(t))$, for $0\leq t \leq \tau$, be the sequence of contents of the 
read/write registers $A,B_0,\ldots,B_m$ and the value $L(t)$ of the ordinal instruction counter at time~$t$ of the execution of $P$. 
We have $(A(0),B_0(0),\ldots,B_m(0), L(0))=(0,\ldots,0)$. 
When $L(t)=r$ for some $t<\tau$, we keep the final configuration unchanged until time $\tau$:
$(A(t+1),B_0(t+1),\ldots,B_m(t+1),L(t+1))\coloneqq (A(t),B_0(t),\ldots,B_m(t), r)$.
The following claim similar to Claim~\ref{claim:unary} can be similarly proven by routine induction, see~\cite{GrandjeanJachiet22} for details.

\begin{claim}\label{claim:const+}

At each instant $t\le \tau$ of the execution of $P$ on an input $(u_i)_{0\le i <2d}$, an equality of the following form is true:
\begin{eqnarray}\label{eq: affine}
(A(t),B_0(t),\ldots,B_m(t), L(t)) =
\left\{
\begin{array}{l}
\mathtt{if}\; \theta_1\;\mathtt{then}\;(a_1,b_{0,1},\ldots,b_{m,1},\lambda_1)\\
\ldots\ldots\ldots\ldots\ldots\ldots\ldots\ldots\ldots\ldots \\
\mathtt{if}\;  \theta_q\;\mathtt{then}\;(a_q,b_{0,q},\ldots,b_{m,q},\lambda_q)\\
\end{array}
\right .
\end{eqnarray}

\noindent
Here, $0\le \lambda_k \le r$ for $k=1,\dots,q$, each test $ \theta_k$ is a conjunction of (in)equalities $\bigwedge_{h} c_h \prec_h d_h$ \linebreak
with $\prec_h$  $\in \{=,\neq\}$
and each term $a_h,b_{h,j},c_h,d_h$ is an addition expression on variables $u_i$.
Moreover, the family of tests $(\theta_k)_{1\leq k \leq q}$ satisfies the \emph{partition property}: 
\emph{exactly one} test $ \theta_k$ is true. 

\end{claim}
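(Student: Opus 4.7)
The plan is to prove Claim~\ref{claim:const+} by induction on the instant $t \in [0,\tau]$, propagating the case-analysis invariant (\ref{eq: affine}) one step at a time.

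For the base case $t=0$, the initial configuration is $(A(0),B_0(0),\dots,B_m(0),L(0))=(0,\dots,0)$. This fits (\ref{eq: affine}) with $q=1$, empty (hence always true) test $\theta_1$, and all components equal to the addition expression $0$. The partition property is trivial.

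For the inductive step, assume (\ref{eq: affine}) holds at time $t$ with tests $\theta_1,\dots,\theta_q$ and associated tuples. I would process each case $k$ independently according to the instruction $I_{\lambda_k}$. For the assignment instructions, the update is purely syntactic: $A\gets U_i$ sets $a_k' := u_i$; $A\gets 1$ sets $a_k':=1$; $A\gets A+B_0$ sets $a_k' := a_k+b_{0,k}$; $A\gets R[A]$ sets $a_k':=R[a_k]$; $B_i\gets A$ and $A\gets B_i$ copy expressions between components. In each of these cases the label is incremented to $\lambda_k+1$ and the state components remain addition expressions because this class is closed under constants, the variables $u_i$, sums of expressions, and the indirection constructor $\alpha\mapsto R[\alpha]$. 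For a branch ``$\mathtt{if}\;A=B_0\;\mathtt{then}\;\mathtt{goto}\;\ell_0\;\mathtt{else}\;\mathtt{goto}\;\ell_1$'', case $k$ is split into two new cases with enriched tests $\theta_k \wedge (a_k=b_{0,k})$ (label $\ell_0$) and $\theta_k \wedge (a_k\neq b_{0,k})$ (label $\ell_1$), with unchanged state components. For the return instruction $I_r$, convention (\ref{eqn:return}) fixes the configuration, so case $k$ is simply copied.

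The partition property is preserved: non-branching transitions leave the tests unchanged and the number of cases constant, while a branch replaces one test by two mutually exclusive tests whose disjunction equals the original. Hence for any input exactly one case still fires.

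I do not expect a genuine obstacle here — the argument is a routine structural induction. The only subtlety worth emphasizing is that one must rely on the precise definition of addition expressions allowing indirect accesses $R[\alpha]$, which is exactly what is needed to keep the invariant closed under the $A\gets R[A]$ instruction; and that only equality/inequality comparisons appear in branch instructions of $P$, so the $\prec_h$ symbols in the conjunctive tests never exceed $\{=,\neq\}$.
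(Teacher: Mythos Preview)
Your proposal is correct and follows precisely the approach intended by the paper, which merely states that the claim ``can be similarly proven by routine induction'' and refers to~\cite{GrandjeanJachiet22} for details. Your case analysis over the eight instruction forms of $P$ is exactly the expected routine, and your remarks on the closure of addition expressions under $R[\alpha]$ and on the restriction of comparisons to $\{=,\neq\}$ are the right points to highlight.
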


From equality~(\ref{eq: affine}) of Claim~\ref{claim:const+}, the following expression of each output $z_j$, $0\le j \le 2d'+1$, of the $P$ procedure is immediately deduced:
\begin{eqnarray}\label{eq:output zj}
z_j =B_j(\tau)=
\left\{
\begin{array}{l}
\mathtt{if}\; \theta_1\;\mathtt{then}\;b_{j,1}\\
\ldots\ldots\ldots\ldots\dots \\
\mathtt{if}\;  \theta_q\;\mathtt{then}\;b_{j,q}\\
\end{array}
\right .
\end{eqnarray}

Our last task is to express the branch expression~(\ref{eq:output zj}) as an addition expression.
To do this, let's arithmetize each test with $+, \subop$ and $\times$. 
Define $v(x=y)\coloneqq 1\subop ((x \subop y)+(y \subop x))$.
Note that if $x=y$ then $v(x=y)=1$, and $v(x=y)=0$ otherwise.
Now, define $v(x\neq y)\coloneqq 1\subop v(x=y)$ and, 
for $\theta\coloneqq \bigwedge_h c_h=d_h \land \bigwedge_h e_h \neq f_h$, define 
$v(\theta) \coloneqq \prod_h v(c_h=d_h) \times \prod_h v(e_h \neq f_h)$.
Clearly, if $\theta$ is true then $v(\theta)=1$, otherwise $v(\theta)=0$.
This allows us to reformulate equality~(\ref{eq:output zj}) as
\begin{eqnarray}\label{eqn:+,-,times}
z_j=\sum_{k=1}^q v(\theta_k)\times b_{j,k} 
\end{eqnarray}
using the fact that \emph{exactly one} test $ \theta_k$ is true.
Finally, we transform this expression of $z_j$ into an addition expression by removing the subtraction and multiplication operations as follows: replace each subexpression $x\subop y$ by
$A_{\mathtt{diff}}[x\times N_1+y]$ where $A_{\mathtt{diff}}[0..(N_1)^2-1]$ is a pre-computed array of size 
$(N_1)^2=O(N)$ defined as $A_{\mathtt{diff}}[x\times N_1+y]\coloneqq x\subop y$ for $x,y<N_1$. 
Then, we similarly replace each subexpression $x\times y$ by
$A_{\times}[\textsc{Mult}N_1[x]+y]$ with the pre-computed arrays 
$\textsc{Mult}N_1[\;]$ and $A_{\times}[\;]$. 
This is possible because the values of all subexpressions in~(\ref{eqn:+,-,times}) are $O(N_1)$ as the reader 
can verify.

This concludes the proof of Theorem~\ref{th:csttime->exp}
\end{proof}

As we have shown that the $\cpp$ class is closed under composition (Theorem~\ref{th:closedCompose}), the following section establishes another closure property of this class.

%%%%%%%%%%%%%%%%%%
\section{A closure property of $\cpp$ for fast growing operations}\label{sec:logf}

In this section, we will show that a rapidly growing function $f$ belongs to CstPP if and only if its \emph{inverse} also belongs to this class.  
This has many applications: 
e.g, we are able to compute in this class, for any fixed $c$, the functions 
$x\mapsto \lf \log_c(x)\rf$ and $x\mapsto \lc \log_c(x)\rc$
(the inverses of $x\mapsto c^x$), and also the inverse of the Fibonacci sequence or the inverse of the factorial.

\begin{definition}[exponentially growing function] 
  We define an \emph{exponentially growing function} as a function
  $f:\mathbb{N} \rightarrow \mathbb{N}$ for which there exist a real number $c>1$ and an integer~$K$ 
  such that for all $x \geq K$, we have $f(x+1) > c \times f(x)$.
\end{definition}

\begin{example}
Next to the exponential functions $x\mapsto c^x$, the usual Fibonacci function $x\mapsto \mathtt{Fib}(x)$ and the factorial function are exponentially growing functions because $\mathtt{Fib}(x+1)>1.5 \times \mathtt{Fib}(x)$, 
for $x\ge 4$, and $(x+1)!>2\times x!$ for $x\ge 2$.
\end{example}

\begin{definition}[inverse function]\label{def:inv}
The \emph{inverse} of an unbounded function $f:\mathbb{N}\to\mathbb{N}$ is the non-decreasing function 
$f^{-1}:\mathbb{N}\to\mathbb{N}$ defined by $f^{-1}(x)\coloneqq\min\{y \mid f(y) \geq x\}$.
\end{definition}

Here are straightforward consequences of Definition~\ref{def:inv}.

\begin{lemma}\label{lem:f from f-1}
Let $f:\mathbb{N}\to\mathbb{N}$ be a non-decreasing function which is ultimately strictly increasing, 
i.e. such that there exists an integer~$K>0$ such that for all $x \ge K$, we have $f(x)> f(x-1)$.

1. For each $y\in\mathbb{N}$, we have $f(y)=\max\{x\mid f^{-1}(x)=y\}$.

2. For each $y\ge K$,  
we have $f(f^{-1}(y-1)+1)=y$. 

3. If $f^{-1}(x)\ge K$ then we have $f^{-1}(x+1)\le f^{-1}(x)+1$.

4. For each $y\ge K$, we have the equivalence 
$f(y)=x \iff f^{-1}(x)=y \land f^{-1}(x+1)=y+1$.

\end{lemma}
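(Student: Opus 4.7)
The plan is to derive all four statements by a careful unpacking of $f^{-1}(x)=\min\{y\mid f(y)\ge x\}$ together with the hypothesis $f(y)>f(y-1)$ for $y\ge K$. The central observation underlying everything is that, for such $y$, one has the equivalence
\[
f^{-1}(x)=y \;\Longleftrightarrow\; f(y-1)<x\le f(y),
\]
with the convention $f(-1)=-1$ to absorb the case $y=0$. This is because $f$ being non-decreasing turns ``first index reaching $x$'' into an interval characterization, and the strict jump at $y$ makes the interval $(f(y-1),f(y)]$ non-empty.

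I would begin with item~1. The set $\{x\mid f^{-1}(x)=y\}$ is exactly the integer interval $(f(y-1),f(y)]$ by the characterization above; its maximum is $f(y)$, and it is non-empty precisely when $f(y)>f(y-1)$, i.e.\ for $y\ge K$ (and trivially for $y=0$). Next, for item~3, suppose $f^{-1}(x)=y\ge K$. Then $f(y)\ge x$, and the strict-increase hypothesis at $y+1$ gives $f(y+1)>f(y)\ge x$; by integrality $f(y+1)\ge x+1$, so $y+1$ lies in $\{z\mid f(z)\ge x+1\}$, proving $f^{-1}(x+1)\le y+1=f^{-1}(x)+1$.

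For item~4 I would combine items~1 and~3. In the forward direction, $f(y)=x$ together with item~1 yields $f^{-1}(x)=y$; strict increase at $y+1$ gives $f(y+1)\ge x+1$, hence $f^{-1}(x+1)\le y+1$, and $f(y)=x<x+1$ forces $f^{-1}(x+1)>y$, whence equality. Conversely, $f^{-1}(x)=y$ gives $f(y)\ge x$, and $f^{-1}(x+1)=y+1$ gives $f(y)<x+1$; squeezing yields $f(y)=x$. For item~2, set $z:=f^{-1}(y-1)$. The strict-increase hypothesis yields $f(z+1)>f(z)\ge y-1$, so $f(z+1)\ge y$; the reverse inequality $f(z+1)\le y$ is to be extracted from the minimality in the definition of $f^{-1}(y)$, using item~3 applied at $x=y-1$ to bound $f^{-1}(y)\le z+1$, so that $f$ at $z+1$ cannot already exceed the target $y$.

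The main obstacle will be the reverse inequality in item~2: the lower bound $f(z+1)\ge y$ is immediate, but the bound $f(z+1)\le y$ has to be pried out of the minimality clause and requires a careful treatment of the indices near the threshold $K$, where strict monotonicity first engages; this is where I expect the bookkeeping (and any off-by-one arguments involving the boundary case $z=K-1$) to demand the most care.
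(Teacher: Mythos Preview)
Your treatments of items~1, 3, and~4 are essentially the paper's own arguments, so those are fine. The problem is item~2.

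Item~2 as literally stated, $f(f^{-1}(y-1)+1)=y$, is \emph{false}. Take $f(n)=2^n$ (so $K=1$) and $y=5$: then $f^{-1}(4)=\min\{n\mid 2^n\ge 4\}=2$, hence $f(f^{-1}(4)+1)=f(3)=8\neq 5$. The paper has a typo in the statement; what is actually proved (and what is used later in the dichotomous-search algorithm, where the comment invokes ``$f^{-1}(f(y-1)+1)=y$ by Item~2'') is the identity with $f$ and $f^{-1}$ swapped:
\[
f^{-1}\bigl(f(y-1)+1\bigr)=y \qquad\text{for }y\ge K.
\]
This correct version follows immediately from your interval characterization: $f(y-1)<f(y-1)+1\le f(y)$ (the right inequality uses strict increase at $y$), so $f^{-1}$ of $f(y-1)+1$ equals $y$.

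Your sketched ``reverse inequality'' $f(z+1)\le y$ for $z=f^{-1}(y-1)$ cannot be salvaged, and the specific step you propose does not do what you want: from item~3 you get $f^{-1}(y)\le z+1$, which says the first index at which $f$ reaches $y$ is at most $z+1$, i.e.\ $f(z+1)\ge y$---the same direction you already had, not the opposite one. Your instinct that this was ``the main obstacle'' was right; the obstacle is insurmountable because the claim is wrong.
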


\begin{proof}
1. For $y>0$ (resp. $y=0$), $f^{-1}(x)=y$ means $f(y-1)< x \le f(y)$ (resp. $x\le f(y)$).
The largest integer $x$ satisfying the required inequality is therefore~$f(y)$.

2. Define $z\coloneqq f^{-1}( f(y-1)+1)$ for $y\ge K$. Then we have $f(z-1)<f(y-1)+1\le f(z)$, which means 
$f(z-1)\le f(y-1)< f(z)$.
If we had $z-1>y-1$ then we would have $z-1\ge y > y-1$ which implies $f(z-1)\ge f(y) > f(y-1)$ because $y\ge K$: a contradiction. 
Therefore, we have $z\le y$.
Similarly, if we had $y-1\ge z$ then we would have $f(y-1)\ge f(z)$, which is a contradiction. 
So we get $y-1<z$, which means $y\le z$.
So we have proven that $z=y$.

3. $y=f^{-1}(x)\ge K$ implies $x+1\le f(y)+1\le f(y+1)$ from which it comes 
$f^{-1}(x+1)\le y+1$. 

4. We can easily convince ourselves of the following equivalences for $y \ge K$:  
$f(y)=x \iff$
 
$f(y-1)<x = f(y)<x+1= f(y)+1\le f(y+1) \iff$
$f^{-1}(x)=y$ $\;\land\; f^{-1}(x+1)=y+1$. 
\end{proof}

The image set $\{f(y) \mid f(y)<N^d\}$ of an \emph{exponentially growing} function $f$ is \emph{exponentially scattered}.
Therefore, for each element $i\in\{\lf\log_2(f(y))\rf \mid f(y)<N^d\}$, the number of $y$ such that
$\lf\log_2(f(y))\rf=i$ is \emph{bounded by a constant}.
For each $x$, this will make it possible to look for $y=f^{-1}(x)$ among a \emph{constant number of candidates}.
To implement this idea, we need to establish that a logarithm function belongs to $\cpp$.
Indeed, we proved in~\cite{GrandjeanJachiet22} the following general result which we admit here.

\begin{proposition}\label{prop:log} 
The exponential function $(x,y)\mapsto x^y$ and its inverse function $(x,y)\mapsto \lf \log_x(y)\rf$ 
belong to $\cpp$.
\end{proposition}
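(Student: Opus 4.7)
The plan leverages the following size observation: whenever $x\ge 2$ and $x^y<N^{d'}$, we must have $y\le d'\log_2(N)/\log_2(x)=O(\log N)$, much smaller than the polynomial bound $N^d$ on the operand. I split cases on the size of $x$ relative to the threshold $N_1\coloneqq\lc N^{1/2}\rc$: if $x\ge N_1$, the exponent $y$ is bounded by a constant ($\le 2d'$), so direct computation suffices; if $2\le x<N_1$, the total number of pairs $(x,k)$ with $x^k<N^{d'}$ is $\sum_{2\le x<N_1}\lf \log_x(N^{d'})\rf = O(N^{1/2})$ by the standard estimate $\sum_{x\le M}1/\log x=\Theta(M/\log M)$, so all relevant powers fit in a ragged linear-size table.

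\textbf{Exponentiation.} Trivial cases ($x\in\{0,1\}$ or $y\in\{0,1\}$) are dispatched by constant-time tests on the base-$N$ digits. For $x\ge N_1$, I verify $y\le 2d'$ (otherwise return $\mathtt{overflow}$) and compute $x^y$ by at most $2d'$ iterated multiplications, each in constant time since $\times\in\cpp$ by Corollary~\ref{cor:divmod}. For $2\le x<N_1$, the preprocessing builds $\mathrm{Pow}[x][k]\coloneqq x^k$ in $d'$ consecutive registers for every $k$ up to $k_x\coloneqq\lf\log_x(N^{d'})\rf$, via the recurrence $x^{k+1}=x\cdot x^k$; total cost $O(N^{1/2})$ in time and space. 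On a query $(x,y)$, a constant-time test on the digits of $y$ returns either $\mathrm{Pow}[x][y]$ or $\mathtt{overflow}$.

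\textbf{Logarithm.} The case $x\ge N_1$ is immediate: $\lf\log_x y\rf\le 2d$ is bounded, so enumerate candidates $z=0,\dots,2d$ and pick the maximal one satisfying $x^z\le y$ using constant-time exponentiation. The harder case $2\le x<N_1$ proceeds in two stages. First I compute $b\coloneqq\lf\log_2 y\rf$: precompute $B[y]\coloneqq\lf\log_2 y\rf$ for $y<N$ in time $O(N)$, the small table $A[i]\coloneqq\lf\log_2(N^i)\rf$ for $i\le d'$, and $P_2[b']\coloneqq 2^{b'}$ for $b'\le d'\lc\log_2 N\rc$ (each in $d'+1$ registers, built by repeated doubling using $\times\in\cpp$). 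For $y<N^{d'}$, I locate the highest nonzero base-$N$ digit $y_i$ of $y$; since $y_iN^i\le y<2y_iN^i$ (using $y_i\ge 1$), the value $b$ lies within an additive $O(1)$ of $B[y_i]+A[i]$, giving $O(1)$ candidates disambiguated by comparison with entries of $P_2$. Second, I precompute $Z[x][b]\coloneqq \lf\log_x(2^b)\rf$ for $2\le x<N_1$ and $b\le d'\lc\log_2 N\rc$, a table of size $O(N^{1/2}\log N)=O(N)$, built inductively in $b$ by consulting $\mathrm{Pow}[x][\cdot]$. Since $\log_x y\in[b/\log_2 x,(b+1)/\log_2 x)$ is an interval of length $\le 1$ for $x\ge 2$, we have $\lf\log_x y\rf\in\{Z[x][b],Z[x][b]+1\}$, and a single comparison of $y$ with $\mathrm{Pow}[x][Z[x][b]+1]$ selects the correct value.

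\textbf{Main obstacle.} The delicate part is managing the rounding errors propagated through the nested floor operations---first from the base-$N$ digits of $y$ to $\lf\log_2 y\rf$, then from $\lf\log_2 y\rf$ to $\lf\log_x y\rf$---while ensuring each stage produces only $O(1)$ candidates and every candidate is verifiable in constant time. The latter relies on constant-time comparison and multiplication of polynomial-size operands, supplied by Corollary~\ref{cor:divmod} together with Theorem~\ref{th:+=+...mod}. Once these approximation bounds are carefully established, the whole construction reduces to linear-time table building, constant-time table lookup, and a bounded number of arithmetic operations.
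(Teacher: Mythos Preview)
The paper does not actually prove Proposition~\ref{prop:log}; it explicitly admits the result from the companion paper~\cite{GrandjeanJachiet22} (``we proved in~\cite{GrandjeanJachiet22} the following general result which we admit here''). So there is no in-paper proof to compare your attempt against.

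Your argument is correct and self-contained. The key structural choice---splitting at the threshold $N_1=\lc N^{1/2}\rc$ so that large bases have constant exponent range while small bases fit in a ragged $\mathrm{Pow}$ table of total size $\sum_{2\le x<N_1}\lf\log_x(N^{d'})\rf=\Theta(N^{1/2})$---is the natural one, and the harmonic-type estimate is applied correctly. The two-stage reduction for the logarithm (first isolate $b=\lf\log_2 y\rf$ from the leading base-$N$ digit of $y$, then convert via the $O(N^{1/2}\log N)$-size table $Z[x][b]$) is also sound; your bounds $Z[x][b]\le\lf\log_x y\rf\le Z[x][b]+1$ follow exactly as you say, since $\log_x y\in[b/\log_2 x,(b+1)/\log_2 x)$ is an interval of length $\le 1$. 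A couple of minor bookkeeping points you gloss over but which cause no trouble: the ragged table $\mathrm{Pow}[x][\cdot]$ needs a precomputed offset array for constant-time indexing, and the boundary case $Z[x][b]+1>k_x$ (where $\mathrm{Pow}[x][Z[x][b]+1]$ is not stored) resolves immediately to the answer $Z[x][b]$ since then $x^{Z[x][b]+1}\ge N^{d'}>y$. Your citation of Corollary~\ref{cor:divmod} for $\times\in\cpp$ is slightly off---multiplication is established earlier in Section~\ref{sec:timesDiv}, before division---but the dependency is available either way via Theorem~\ref{th:+=+...mod}.
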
 

To prove the implication $\Rightarrow$ of the following theorem, we only need a special case of Proposition~\ref{prop:log}. 
We choose $x\mapsto \lf \log_2(x)\rf$.

\begin{theorem}\label{th:log_exp_fun}
Let $f$ be a non-decreasing and exponentially growing function.
Then we have the equivalence $f\in\cpp \iff f^{-1}\in\cpp$.
\end{theorem}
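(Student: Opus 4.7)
The plan is to prove both implications by exploiting exponential growth: the effective range of $f$ --- namely the set of $y$'s with $f(y)<N^{d'}$ --- has size $O(\log N)$, and symmetrically $f^{-1}$ maps an operand $<N^d$ to an integer $O(\log N)$, so neither direction really has to deal with many nontrivial inputs or outputs.

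For the forward direction $f\in\cpp\Rightarrow f^{-1}\in\cpp$, I would rely on the fact that $x\mapsto\lfloor\log_2 x\rfloor$ belongs to $\cpp$ (Proposition~\ref{prop:log}). The preprocessing first runs the $\cpp$ preprocessings for $f$ and for $\lfloor\log_2\cdot\rfloor$, then tabulates an array $Y[0..D]$ with $D=\lfloor d\log_2 N\rfloor+1$ defined by $Y[i]\coloneqq f^{-1}(2^i)$. This table is built by iterating $y=0,1,2,\ldots$, evaluating $f(y)$ in constant time, and assigning $Y[i]\coloneqq y$ for every $i$ such that $f(y-1)<2^i\le f(y)$; the total cost telescopes to $O(\log N)$, well within the $O(N)$ budget. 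A query on $x<N^d$ then computes $i\coloneqq\lfloor\log_2 x\rfloor$ in constant time, observes that $f^{-1}(x)\in[Y[i],Y[i+1]]$, and scans this window. The window is of constant size because $f(Y[i+1]-1)<2^{i+1}\le 2\cdot f(Y[i])$ combined with the exponential growth $f(y+1)>c\cdot f(y)$ forces $Y[i+1]-Y[i]\le 1+\log_c 2$, so only $O(1)$ candidates $y$ need to be tested by comparing $f(y)$ with $x$ and returning the smallest $y$ with $f(y)\ge x$.

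For the reverse direction $f^{-1}\in\cpp\Rightarrow f\in\cpp$, the same growth condition shows that $f(y)<N^{d'}$ forces $y<y_{\max}\coloneqq f^{-1}(N^{d'})=O(\log N)$; hence the query can output $\mathtt{overflow}$ as soon as $y\ge y_{\max}$. The preprocessing would run the $\cpp$ preprocessing for $f^{-1}$ on operands up to $N^{d'}$, compute $y_{\max}$ in constant time, and then populate an array $F[0..y_{\max}-1]$ with $F[y]=f(y)$. Each value $F[y]$ is recovered from $f^{-1}$ using Lemma~\ref{lem:f from f-1}(4): $f(y)$ equals one less than the smallest $x\in[1,N^{d'}]$ with $f^{-1}(x)>y$, and this $x$ is located by binary search performing $O(\log N)$ constant-time $f^{-1}$-queries. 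The total preprocessing cost is $O(N)+y_{\max}\cdot O(\log N)=O(N)$, and the query reduces to the comparison $y<y_{\max}$ followed by an array lookup $F[y]$.

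The main obstacle is the small-$y$ regime where Lemma~\ref{lem:f from f-1} only applies once $y\ge K$: the finitely many values $f(0),\ldots,f(K-1)$ must be tabulated separately (brute force works since $K$ is a fixed constant), and the window-size bound in the forward direction likewise needs a separate treatment when $Y[i]<K$; in both places the constant $K$ merely enlarges the $O(1)$ candidate set by a constant. A subordinate technical point is verifying that every index and every stored quantity stays polynomial in $N$ so that the RAM discipline is respected, which is automatic since $y_{\max}$ and $D$ are $O(\log N)$ and each stored value $f(y)<N^{d'}$ occupies a fixed number of base-$N$ registers.
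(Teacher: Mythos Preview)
Your proposal is correct and follows essentially the same approach as the paper. In the forward direction both arguments use $\lfloor\log_2 x\rfloor$ to land in a constant-size window of candidates (the paper stores these candidates as lists $L_i$ indexed by $\lfloor\log_2 f(y)\rfloor$, while you store the window endpoints $Y[i]=f^{-1}(2^i)$, which is the same information packaged dually); in the reverse direction both build the table $F[y]=f(y)$ for $y<y_{\max}=O(\log N)$ by $O(\log N)$ binary searches on $f^{-1}$, exactly as in the paper's Algorithm~\ref{algo:f}.
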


\begin{proof}
First, assume $f\in\cpp$ and let us deduce $f^{-1}\in\cpp$.
We want to compute in constant time, for each given integer $x<N^d$, the integer $y>0$ such that
$f(y-1)<x \le f(y)$, or $y=0$ if $x \le f(0)$.
Without loss of generality, we assume that $f$ is never zero\footnote{Otherwise, replace $f$ by the non-zero function $f'$ defined by $f'(y)\coloneqq f(y)+1$ and also belonging to $\cpp$. 
This is justified by the identity $f^{-1}(x)=(f')^{-1}(x+1)$.}.
 
 \medskip \noindent
 {\bf Preprocessing.}
  For $q\coloneqq\lf \log_2(N^d-1)\rf$, we create $q+1$  lists $L_0, \dots, L_q$ all initially empty. 
 We then iterate the following procedure for each $y=0,1,\dots$ until $f(y)\ge N^d$ 
 (recall that $f(y)$ can be computed in constant time):
 store $y$ into the list $L_{\lf \log_2(f(y))\rf}$.
  
Note that the number of integers~$y$ to be stored (and the total time to store them) is $O(\log N)$. 
This is due to the inequalities $N^d> f(y) > f(K) c^{y-K} \ge c^{y-K}$, for $y\ge K$, from which it comes 
$y-K <\log_c(N^d)$ and therefore $y<K+d \log_c N$.
  
\medskip
Since the function $f$ is non-decreasing, each non-empty list $L_i$ consists of an interval of integers and the family $(L_i)_{i \le q}$ is ``increasing'' in the following sense:
if $i < j \le q$, then for all $y\in L_i$ and $y'\in L_j$, we have $y\le y'$ 
(because $\lf \log_2(f(y))\rf < \lf \log_2(f(y'))\rf$ implies $y<y'$).

\medskip
The second part of the preprocessing consists in completing each empty list $L_i$ by distinguishing two cases:  
\begin{itemize}
\item if there exists $j>i$ such that $L_j$ is not empty, then compute the smallest element $y$ of the non-empty list $L_j$ of smallest index $j>i$, and set $L_i\gets (y)$;
\item otherwise, there is an integer $y$ satisfying the following inequalities \\
$\lf \log_2(f(y-1)) \rf < i \le \lf \log_2(N^d-1)\rf < \lf \log_2(f(y)) \rf$ and we set $L_i\gets (y)$. 
\end{itemize}

\noindent
In both cases, we have $\lf \log_2(f(y-1)) \rf < i < \lf \log_2(f(y)) \rf$
and we say that $L_i$ has been \emph{completed}. 
Observe that after the completion of all empty lists, the family $(L_i)_{i \le q}$ is still ``increasing''.

\medskip
Here is a crucial point:

\begin{claim}
The size of each list $L_i$ is bounded by a constant $K'$.
\end{claim}

\begin{proof}[Proof of the claim]
Two elements $y_1,y_2$  belong to the same list $L_i$ if and only if 
$\lf \log_2(f(y_1))\rf =\lf \log_2(f(y_2))\rf=i$. 
This means $2^{i} \le f(y_1) < 2^{i+1}$ and $2^{i} \le f(y_2) < 2^{i+1}$.
But we also have $f(y_2)>c^{y_2-y_1}f(y_1)$ for $y_2>y_1\ge K$.
These inequalities imply $2^{i+1}> f(y_2)>c^{y_2-y_1}f(y_1) \ge c^{y_2-y_1}2^{i}$
from which it comes $c^{y_2-y_1}<2$ and finally $y_2-y_1 < 1/\log_2 c$.
Thus, the size of each list $L_i$ is bounded by the constant $K'\coloneqq \max(K,1/\log_2 c)$.
\end{proof}

\noindent
\emph{Preprocessing time:} Recall that the first part of the preprocessing takes time $O(\log N)$.
The reader can easily verify that its second part -- the completion of the empty lists -- can be carried out in time 
$O(q)=O(\log N)$.

\medskip \noindent
 {\bf Constant time procedure.}
First, note that if $x\le f(0)$ then by definition $f^{-1}(x) = 0$.
and that case is easy to handle.
Given an $x>f(0)$ for which we want to find $f^{-1}(x) = \min\{y \mid f(y) \geq x \}$, 
we look for $y$ in the list $L_{\lf \log_2 x\rf}$. 
We distinguish two cases depending on whether the list $L_{\lf\log_2 x\rf}$ has been completed (= was initially empty) or not:
 \begin{enumerate}
\item $L_{\lf\log_2 x\rf}$ has been completed and by construction, the
  list contains a single element $y$, and $x,y$ satisfy the
  inequalities $\lf \log_2(f(y-1))\rf<\lf\log_2 x\rf< \lf
  \log_2(f(y))\rf$.
\item $L_{\lf\log_2 x\rf}$ has not been completed and it is of the
  form $(y,\dots,y+k-1)$ with $1\le k\le K'$, and $x,y$ satisfy the
  (in)equalities \\ $\lf \log_2(f(y-1))\rf<\lf\log_2 x\rf = \lf
  \log_2(f(y))\rf = \dots = \lf \log_2(f(y+k-1))\rf < \lf
  \log_2(f(y+k))\rf$.
\end{enumerate}
 In case~1, we obtain $f(y-1)<x<f(y)$, which implies $f^{-1}(x)=y$.
 In case~2, we obtain $f(y-1)<x<f(y+k)$, which means $f(y+\delta-1)<x \le f(y+\delta)$ for 
 $\delta \in \{0,\dots,k\}$, or equivalently $f^{-1}(x)=y+\delta$.
 In all cases, $f^{-1}(x)$ is computed in constant time.
 Thus, we have proven the implication $f\in\cpp \Rightarrow f^{-1}\in\cpp$.
 
 \medskip
Let us prove the converse implication\footnote{Roughly, the guiding idea of the proof of the implication $f^{-1}\in\cpp \Rightarrow f\in\cpp$ is the following.
Since the number of integers $y$ such that $f(y)<N^d$ is very small, we can store all the values $f(y)<N^d$ in a very small array $\textsc{F}[0..\ell]$, $\ell=O(\log N)$, such that $\textsc{F}[y] \coloneqq f(y)$ for all $y\le\ell$.
This array can be easily computed from $f^{-1}$ by $\ell+1$ dichotomous searches.}. 
Suppose $f^{-1}\in\cpp$. 
We are therefore allowed to prove $f\in\cpp$ using $f^{-1}$ as a primitive operation.
Let us define $\ell \coloneqq \max \{y\mid f(y) < N^d\}$. 
Since $f(\ell)<N^d\le f(\ell+1)$, we obtain $\ell+1=f^{-1}(N^d)$ and therefore $\ell=f^{-1}(N^d)-1$.
The interval $[0,\ell]$ of elements $y$ for which $f(y)<N^d$ is very small, namely $\ell=O(\log N)$.
Therefore, to prove $f\in\cpp$ it will suffice to compute in $O(N)$ time  an array $\textsc{F}[0..\ell]$ 
such that $\textsc{F}[y]=f(y)$ for all $y\le\ell$.
Without loss of generality, we may assume that the $K$ values $f(0),\dots,f(K-1)$ are known and initially stored as $\textsc{F}[0],\dots,\textsc{F}[K-1]$.
The following algorithm computes (and stores) the values  $f(K),\dots,f(\ell)$ by $\ell-K+1$ dichotomous searches using Lemma~\ref{lem:f from f-1} and mainly its Item~4, the equivalence
$f(y)=x \iff f^{-1}(x)=y \land f^{-1}(x+1)=y+1$ for $y\ge K$.

\begin{minipage}{\almosttextwidth}
 \begin{algorithm}[H]\label{algo:f}
 \caption{Pre-computation of $\ell$ and the array $\textsc{F}$ using $f^{-1}$ as a primitive operation}
 \begin{algorithmic}[1]
  \State $\ell \gets f^{-1}(N^d)-1$ \Comment{as justified above}
  \State $\textsc{F}\gets $ an array of indices $0..\ell$ 
  \State $\textsc{F}[0]\gets f(0)$ ; $\dots$ ; $\textsc{F}[K-1]\gets f(K-1)$
  \Comment{$f(0),\dots,f(K-1)$ are explicit constants}
  \For{$y \From K \To \ell$} \Comment{Compute $\textsc{F}[y]\coloneqq f(y)$}
     \State $b_0 \gets \textsc{F}[y-1]+1$ ; $b_1 \gets N^d$ 
     \State \Comment{$f^{-1}(b_0) = f^{-1}(f(y-1)+1) = y < \ell+1=f^{-1}(N^d)= f^{-1}(b_1)$ by Item~2 of Lemma~\ref{lem:f from f-1}}
     \While{$b_0+1<b_1$} \Comment{loop invariant: $y = f^{-1}(b_0) < f^{-1}(b_1)$}
       \State $m\gets (b_0+b_1)/2$ 
       \If{$y= f^{-1}(m)$} $b_0\gets m$ \Comment{$y=f^{-1}(b_0) = f^{-1}(m) < f^{-1}(b_1)$}
       \Else $\;b_1\gets m$ \Comment{$y=f^{-1}(b_0) < f^{-1}(m)=f^{-1}(b_1)$}   
       \EndIf
     \EndWhile
     \State $\textsc{F}[y]\gets b_0$  \Comment{justified here below}
  \EndFor
 \end{algorithmic}
 \end{algorithm}
\end{minipage}

\medskip \noindent
\emph{Justification of line~11}:  
Here, $b_0+1 = b_1$ and 
$f^{-1}(b_0) = y < f^{-1}(b_0+1) \le f^{-1}(b_0)+1$ 
(by the loop invariant and Item~3 of Lemma~\ref{lem:f from f-1}).
Thus we obtain
$y=f^{-1}(b_0)$ and the circuit of inequalities $f^{-1}(b_0)+1=y+1\le f^{-1}(b_0+1)\le f^{-1}(b_0)+1$, which gives $y+1 = f^{-1}(b_0+1)$ and therefore $f(y)=b_0$ by Item~4 of Lemma~\ref{lem:f from f-1}.

\medskip
All in all, the array $\textsc{F}[0..\ell]$ is pre-computed by our algorithm in time 
$O(\ell\times\log(N^d))=O((\log N)^2)=O(N)$ and, for all integers $N\ge 2$ and $d\ge 1$, the image $f_{N,d}(x)$ of any $x<N^d$ is computed in constant time: 
it is $\textsc{F}[x]$ if $x\le \ell$ and $\mathtt{overflow}$ otherwise.

\smallskip
This completes the proof of Theorem~\ref{th:log_exp_fun}.
\end{proof}

\begin{corollary}\label{cor:FibFact}
The inverse $\mathtt{Fib}^{-1}$ of the Fibonacci function $\mathtt{Fib}$ and the inverse 
$\mathtt{Fact}^{-1}$ of the factorial function, i.e. the functions defined by 
$\mathtt{Fib}^{-1}(x)\coloneqq \min \{y \mid \mathtt{Fib}(y)\ge x\}$ and \linebreak
$\mathtt{Fact}^{-1}(x)\coloneqq \min \{y \mid y!\ge x\}$ belong to $\cpp$.
\end{corollary}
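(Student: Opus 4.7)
The plan is to apply Theorem~\ref{th:log_exp_fun}, whose hypothesis requires the function to be non-decreasing and exponentially growing. As noted in the Example following the definition of an exponentially growing function, $\mathtt{Fib}$ satisfies $\mathtt{Fib}(x{+}1) > 1.5 \cdot \mathtt{Fib}(x)$ for $x \ge 4$ and $\mathtt{Fact}$ satisfies $(x{+}1)! > 2 \cdot x!$ for $x \ge 2$; both are trivially non-decreasing. It therefore suffices to show $\mathtt{Fib} \in \cpp$ and $\mathtt{Fact} \in \cpp$, as the conclusion will follow from the equivalence $f \in \cpp \iff f^{-1} \in \cpp$ provided by Theorem~\ref{th:log_exp_fun}.

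Fix $d \ge 1$ and a reference integer $N \ge 2$. Because both functions grow at least geometrically with a ratio bounded away from $1$, the largest index $\ell$ such that $\mathtt{Fib}(\ell) < N^d$ (respectively $\mathtt{Fact}(\ell) < N^d$) satisfies $\ell = O(\log N)$. The preprocessing first performs the linear-time preprocessing needed to compute $+$, $\times$ and the comparison ``$<N^d$'' in constant time on operands less than $N^d$, which is available by Theorem~\ref{th:+=+...mod} and Section~\ref{sec:timesDiv}. It then iteratively fills an array $\textsc{F}[0..\ell]$ by applying the recurrence $\textsc{F}[y] \coloneqq \textsc{F}[y{-}1] + \textsc{F}[y{-}2]$ for Fibonacci, or $\textsc{F}[y] \coloneqq y \times \textsc{F}[y{-}1]$ for the factorial, stopping as soon as the freshly computed value would exceed $N^d$ (this terminal test determines $\ell$). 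Each recurrence step is a single constant-time arithmetic operation on integers less than $N^d$, so the $O(\log N)$ steps contribute $O(\log N) = O(N)$ additional time, and the whole preprocessing remains linear.

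For the constant-time phase, given an input $x < N^d$ represented by its $d$ digits in base $N$, one checks in constant time whether $x \le \ell$: since $\ell = O(\log N) < N$, this reduces to verifying that all digits of $x$ except the least-significant are zero and then comparing the lowest digit to $\ell$, all in constant time using the tables pre-computed above. If $x \le \ell$, the procedure returns $\textsc{F}[x]$ by direct lookup; otherwise it returns $\mathtt{overflow}$. This establishes $\mathtt{Fib}, \mathtt{Fact} \in \cpp$, and Theorem~\ref{th:log_exp_fun} then yields $\mathtt{Fib}^{-1}, \mathtt{Fact}^{-1} \in \cpp$.

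The only mildly delicate point is confirming that each recurrence step really fits within the constant-time budget granted by Theorem~\ref{th:+=+...mod}: for the factorial, we multiply the small integer $y = O(\log N)$ by $\textsc{F}[y{-}1] < N^d$, and by construction the product remains less than $N^d$ (otherwise we would halt), so the computation stays within the polynomial regime for which constant-time multiplication has been established. Beyond this routine verification, I anticipate no further obstacle.
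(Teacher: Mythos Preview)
Your proposal is correct and follows exactly the paper's route: apply Theorem~\ref{th:log_exp_fun} after checking that $\mathtt{Fib}$ and $\mathtt{Fact}$ are non-decreasing, exponentially growing, and in $\cpp$, the latter via a linear-time preprocessing that tabulates the $O(\log N)$ values below $N^d$ using the recurrence and a constant-time lookup with an overflow test. This is precisely what the paper does (with the tabulation spelled out in Appendix~\ref{appendix:Fib fact}); your additional remarks on how to test $x\le\ell$ and on the size of the intermediate product are sound refinements of details the paper leaves implicit.
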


\begin{proof}
This is an immediate consequence of Theorem~\ref{th:log_exp_fun} because the exponentially growing functions~$\mathtt{Fib}$ and factorial belong to $\cpp$ as proven in Appendix~\ref{appendix:Fib fact}.
\end{proof}

Theorem~\ref{th:log_exp_fun} applies only to unary operations.
\begin{openpb}
Can Theorem~\ref{th:log_exp_fun} be extended to operations of any arity? 
\end{openpb}
For example, this would give an alternative proof that the function $(x,y)\mapsto \log_x y$ belongs to 
$\cpp$.

\smallskip
Another natural question arises: can the first implication of Theorem~\ref{th:log_exp_fun} be extended to functions $f$ whose growth is faster than that of any polynomial, such as the function \linebreak
$f:x\mapsto x^{\lf\log_2 x\rf}$ (which belongs to $\cpp$ as a composition of functions in $\cpp$)?
\begin{openpb}
Let $f:\mathbb{N}\to \mathbb{N}$ be a function in $\cpp$ such that
$\lim_{x\to\infty}f(x)/x^k=\infty$, \linebreak
for each $k$.
Does the inverse function $f^{-1}$ also belong to $\cpp$?
\end{openpb}

\begin{remark}
By a variant of the proof of the first implication of Theorem~\ref{th:log_exp_fun}, we can prove the following variant of this implication\footnote{This implication can be applied to reprove that the inverses of the Fibonacci and factorial functions belong to $\cpp$ because they satisfy its hypothesis.}: 
If $f:\mathbb{N}\to \mathbb{N}$ is a non-decreasing function in $\cpp$ such that for each integer~$p$, there exist integers $K$ and $C$ such that for every integer $x\ge K$, we have $f^{-1}((x+1)^p)\le f^{-1}(x^p)+C$,
then $f^{-1}\in\cpp$.

Here is a sketch of proof:
the hypothesis implies  $f^{-1}(N^d)\le f^{-1}(K^d)+C(N-K)=O(N)$; 
each integer $y$ such that $f(y)<N^d$ is put in the list $L_{i}$ for $i\coloneqq \lf(f(y))^{1/d}\rf<N$;
this implies $i^d\le f(y) < (i+1)^d$; thus, applying $f^{-1}$ and using the hypothesis we obtain 
$f^{-1}(i^d)\le y \le f^{-1}((i+1)^d)\le f^{-1}(i^d)+C$, for $i\ge K$; 
therefore, if $i\ge K$ then the list $L_i$ contains at most $C+1$ elements which are 
$f^{-1}(i^d)+j$, for $0\le j \le C$.
\end{remark}

Of course, any operation defined as a polynomial belongs to $\cpp$.
But what about its inverse?
\begin{openpb}
Let $f:\mathbb{N}\to \mathbb{N}$ be a function defined as $f(x)\coloneqq P(x)$ for a polynomial 
$P\in\mathbb{N}[X]$ such that $\lim_{x\to\infty}P(x)=\infty$. 
Does the inverse function $f^{-1}$ belong to $\cpp$?
\end{openpb}
The answer is positive for polynomials of degree at most 4 because the roots of such a polynomial are computed by radicals which can be computed in constant time using Newton's approximation method as demonstrated in~\cite{GrandjeanJachiet22}.
We hope that the same positive answer can be obtained for the general case by an extension of this proof.

%%%%%%%%%%%%%%%%%%
\section{Conclusion and other open problems} 
In this paper, we have proven that the $\cpp$ class is very robust:
\begin{itemize}
\item it is not sensitive to the set of primitive operations of the RAM, provided it includes addition  
(Theorems~\ref{th:transitivity2}, \ref{th:+=+...mod} and~\ref{th:robustT}, also proved in~\cite{GrandjeanJachiet22});
\item it is not sensitive to the preprocessing time, which can be \emph{reduced} from $O(N)$ to~$N^{\varepsilon}$, for all $\varepsilon>0$, if $+$, $\mathtt{div}$, $\mathtt{mod}$ are among the primitive operations (Theorem~\ref{th: N->N^1/c} and Corollary~\ref{cor:N->N1/c}) or \emph{increased} to $N^c$, for all $c>1$ (Theorem~\ref{th:N^c->N});
\item any constant-time procedure can be reduced to an expression (a return instruction) involving only additions and direct memory accesses (Theorem~\ref{th:csttime->exp}).
\end{itemize}

Note the transitive effect of Theorem~\ref{th:transitivity_base}: 
e.g., the division operation is in $\cpp$ because $+,-,\times\in\cpp$, which makes it possible to show that the function $x\mapsto \lf \log_2(x)\rf$ (or more generally the function $(x,y)\mapsto \lf \log_x y \rf$, see~\cite{GrandjeanJachiet22}) also belongs to $\cpp$, which we in turn used crucially to prove Theorem~\ref{th:log_exp_fun} whose Corollary~\ref{cor:FibFact} states that the inverses of the Fibonacci function and the factorial function belong to $\cpp$. 

\medskip
As a side effect, recall that Lemma~\ref{lemma:fund} also establishes the invariance of the ``minimal'' complexity classes $\textsc{LinTime}$ and $\cdlin$ -- and a fortiori, all classes of higher complexity -- as a function of the set of primitive RAM operations, provided it includes addition (Theorems~\ref{th:transitivity2}, \ref{th:+=+...mod} and~\ref{th:robustT}).

\medskip
We believe the main outstanding question concerns the boundaries of the $\cpp$ class:
we think that if an operation is piecewise monotonic or is local (or is composed of such operations), then we can reasonably conjecture and try to prove that it is in $\cpp$, as the following examples suggest. 
\begin{itemize}
\item All operations inspired by arithmetic and proven to be in $\cpp$ in~\cite{GrandjeanJachiet22} and/or in this paper are monotonic: $+,-,\times,\mathtt{div},\log$. 
Note that the $\mathtt{mod}$ operation is defined by an expression on $\mathtt{div}, \times$ and~$-$.
The function $x\mapsto \lf x^{p/q}\rf$, for a fixed rational exponent $p/q$, is also in $\cpp$ because it is composed of the $\cpp$ functions $x\mapsto x^p$ and $y\mapsto \lf y^{1/q}\rf$. 
\item The division algorithm of Pope and Stein~\cite{PopeS60} presented in Knuth's book~\cite{Knuth69} and recalled here is local. 
Similarly, the proof given in~\cite{GrandjeanJachiet22} that the operations computable in linear time on Turing machines or cellular automata belong to $\cpp$ is essentially based on the locality of these computation models.
\end{itemize}
It would be interesting to generalize and classify the methods used to prove that problems belong to $\cpp$, mainly Newton's approximation method (used for square root or more generally $c$th~root, for fixed $c$, see~\cite{GrandjeanJachiet22}) and local methods, used e.g. for the class of operations computable in linear time on cellular automata~\cite{GrandjeanJachiet22}.
We believe this could significantly expand – and provide a better understanding of – the class of operations known to be in $\cpp$.

\medskip
As we hope to have convinced the reader in this paper, the $\cpp$ class is flexible and broad.
Therefore, we believe that, as with other robust and broad complexity classes such as 
$\textsc{PTime}$ and $\textsc{LinTime}$, there are \emph{countless} algorithmic strategies for computing operations in $\cpp$. 
In other words, it is impossible to group them into a single model or a fixed number of models.
For the same reason, we believe that with our current knowledge it is impossible to \emph{prove} that a given ``natural'' operation is \emph{not} in $\cpp$, 
as it is currently still impossible for $\textsc{PTime}$ and even $\textsc{LinTime}$.

\medskip 
Let us list some operations whose memberships in the $\cpp$ class are open problems. 

\medskip 
{\bf Generalized root.} 
We strongly conjecture but cannot prove that the function $(x,y)\mapsto \lf x^{1/y} \rf$ belongs to $\cpp$. 
Indeed, we proved in~\cite{GrandjeanJachiet22} that its restriction for $y$ in the union of intervals $[1,(\log_2 N)/(12 \log_2 \log_2 N)]\cup[\log_2 N ,N^d[$, which seems very close to the desired interval $[1,N^d[$, is in $\cpp$.

\medskip 
{\bf Exponential, logarithm and beyond.}
In~\cite{GrandjeanJachiet22}, we proved that the function $(x,y)\mapsto x^y$ and its inverse 
$(x,y)\mapsto \lf \log_x y \rf$ are in $\cpp$.
We strongly conjecture that the iterated exponential $(x,y)\mapsto x^{(y)}$, defined by
$x^{(0)}\coloneqq 1$ and $x^{(y+1)}\coloneqq x^{x^{(y)}}$, or equivalently its inverse, the iterated logarithm
$(x,y)\mapsto \log^*_x(y)$ defined by
$\log^*_x(y)\coloneqq 0$ if $y\le 1$ and $\log^*_x(y)\coloneqq 1+ \log^*_x(\log_x y)$ otherwise, 
belongs to $\cpp$.
Similarly, we believe that the $\cpp$ class contains the various versions of the Ackermann function.
On the other hand, we think that the belonging to $\cpp$ of the integer part of the natural exponential 
$x\mapsto \lf e^x \rf$ is a completely open problem.

\medskip
{\bf Modular exponential and GCD.}
Even though it is well-known that the ternary operation $(x,y,z)\mapsto x^y \modop z$  and the binary operation $(x,y)\mapsto \mathtt{gcd}(x,y)$ are computable in logarithmic time (close to constant time), we think that it is very unlikely that they belong to $\cpp$\footnote{However, $\mathtt{gcd}(x,y)$ can be computed in constant time for $x<N^d$ and $y<N^{1/2-\varepsilon}$ for any $\varepsilon>0$ because we have 
$\mathtt{gcd}(x,y)=\mathtt{gcd}(y, x \modop y)$ and 
$x \modop y<y<N^{1/2-\varepsilon}$ and $\mathtt{gcd}(y, x \modop y)$ is obtained by a direct access to a table $GCD[a][b]$ for $a,b<N^{1/2-\varepsilon}$, which can be pre-computed in time $O(N^{1-2\varepsilon}\times \log N)=O(N)$ by Euclid's algorithm.}.

\medskip 
{\bf Operations computable by local models.}
As we proved in~\cite{GrandjeanJachiet22} the general result that each operation computable by (one-dimensional) cellular automata in linear time (such as $+,-,\times$, bitwise operations, etc.) is in $\cpp$, can this result be generalized again to more general ``local'' computation models such as e.g. \emph{multi-dimensional} cellular automata?
More importantly, can such a local computation model be general enough to incorporate the local division algorithm?

\medskip 
{\bf Operations computable by circuits.}
What about the complexity classes of operations defined by circuits such as $\textsc{AC}^0$?
This is an important question because several authors~\cite{AnderssonMRT96, AnderssonMT99, Thorup00} consider the $\textsc{AC}^0$ operations as the standard primitive operations of RAMs: 
an operation is $\textsc{AC}^0$ if it can be computed by a constant depth Boolean circuit of size polynomial in the length of operands.
Even though it is well-known that the $+$ and $-$ operations are $\textsc{AC}^0$ (see e.g.~\cite{Vollmer99}) so that $\cpp(+)\subseteq \cpp(\textsc{AC}^0)$, we think it is unlikely that all $\textsc{AC}^0$ operations are in $\cpp$. 
Any way, $\cpp$ contains all the operations mentioned by Thorup~\cite{Thorup00}, ``shifts, comparisons, bit-wise Boolean operations..., addition, subtraction, multiplication, and division... which, except for multiplication and division, are in $\textsc{AC}^0$...'' and called by Thorup~\cite{Thorup00} ``\emph{standard $\textsc{AC}^0$ operations} to emphasize the fact that they are available in most imperative programming languages.''
The paper~\cite{AnderssonMT99} argues: 
``...$\textsc{AC}^0$ is the class of functions which can be computed in constant time in a certain, reasonable, model of hardware. 
In this model, it is therefore the class of functions for which we can reasonably assume unit cost evaluation.''

\paragraph{Beyond $\cpp$: computations interspersed with inputs/outputs.}
If the notion of \emph{computation with preprocessing} and, more generally, the notion of \emph{computation interspersed with inputs/outputs} are very common in computer science and specially in database management systems, the study of the associated complexity is still in its infancy, with the exception of that of enumeration problems~\cite{AmarilliBJM17, BaganDG07, BerkholzGS20, Durand20, DurandG07, Segoufin14, Strozecki19, Uno15}, complexity of the $j$th solution problems, also called ``direct access'' problems~\cite{BaganDGO08, BringmannCM22, BringmannCM22arxiv, CarmeliTGKR23}, and complexity of dynamic problems~\cite{AmarilliJP21, BerkholzKS17, BerkholzKS18}.

\smallskip
As we did for the $\cpp$ class, we can justify the robustness and the large extent of all these classes of problems, as does for example Theorem~\ref{th:transitivity2} for the classes $\textsc{LinTime}$ and $\cdlin$. 
This is thanks to the ``fundamental lemma'' (Lemma~\ref{lemma:fund}) which demonstrates that all these classes are invariant if we authorize all $\cpp$ operations as primitive operations.
The statement of this fundamental lemma crucially requires the notion of \emph{faithful (or lock-step) simulation}, a very precise concept designed by Yuri Gurevich~\cite{DexterDG97, Gurevich93}.
In the end, let us recall that the lemma combines this notion with that of linear-time preprocessing to make the complexity classes concerned robust.

%%%%%%%%%%%%%%%%%%%%%%%%%%%%%%%

\medskip \noindent
\emph{Acknowledgments:} We thank the reviewers for their remarks and comments, and in particular for their suggestion to give ``some more intuition on how a proof works beforehand'' and ``the idea on the various constructions, maybe informally'', which helped us to improve the writing of the paper.
The first author would like to thank Yuri Gurevich for his numerous discussions and especially for his personal encouragement and support, especially in the 1980s and 1990s.
He also thanks Yuri for sharing with him his concern to identify "robust" complexity classes (and logic classes) with "good" computational models. In this paper, we also use Yuri's notion of ``lock-step'' simulation between programs.

%%%%%%%%%%%%%%%%%%%%
\section{Appendices} 

\subsection{Computing $B\coloneqq \lc N^{1/2}\rc$ in linear time using only addition}\label{app:B}

The following algorithm, which only uses $+$ and the variables $\var{x}, \var{x}_2, \var{x}_{-1}, \var{y}$, runs in $O(N)$ time: 

\begin{minipage}{0.1\textwidth}
  ~
\end{minipage}
\begin{minipage}{0.80\textwidth}
\begin{algorithm}[H]
  \caption{Computation of $B \coloneqq \lc N^{1/2} \rc$}
  \begin{algorithmic}[1]
    \State $\var{x} \gets 1$ ; $\var{x}_2 \gets 1$ ; $\var{y}\gets 1$ 
    \While{$\var{y}\neq N+1$} \Comment{loop invariant: $(x-1)^2<y\le x^2$ and $x_2=x^2$}
     \If{$\var{y}=\var{x}_2$}
      \State $\var{x}_2 \gets \var{x}_2 +\var{x}+\var{x}+1$ \Comment{$x_2=(x+1)^2$}
      \State $\var{x}_{-1} \gets \var{x}$
      \State $\var{x}\gets \var{x}+1$ 
      \State $\var{y}\gets \var{y}+1$
      \EndIf
      \EndWhile
     \State $B\gets \var{x}_{-1}$ 
  \end{algorithmic}
\end{algorithm}
\end{minipage}

\begin{proof}[Correctness of the algorithm]
At line 8, we have $\var{y}=N+1$ and  
$(\var{x}-1)^2 <N+1\le \var{x}^2$, which yields $(\var{x}-1)^2 \le N < \var{x}^2$.
Since we also have $B=x-1$, we obtain $B^2\le N <(B+1)^2$.
\end{proof}

%%%%%%%%%%%%%%%%%%%%%%
\subsection{Proof of the Approximation Lemma for the leading digit of a quotient}\label{subsec:ApproxLemma}

Let us prove Lemma~\ref{lemma:tildeq} of Section~\ref{sec:timesDiv} which we rephrase in the following self-contained version.

\medskip \noindent 
{\bf Lemma~\ref{lemma:tildeq}.}
Let $u,y$ be integers such that $y\le u < B\times y$ with $y=(y_{m-1},\dots,y_0)_B$, $\;y_{m-1}\ge 1$, and $u=(u_m,u_{m-1},\dots,u_0)_B$.
Define $u_{(2)} \coloneqq (u_m,u_{m-1})_B$ (the two leading digits of $u$). 
Assume that the leading digit $v \coloneqq y_{m-1}$ of $y$ satisfies $v\ge \lf B/2 \rf$.
Then the quotient $q \coloneqq \lf u/y \rf$ satisfies the inequalities 
$\tilde{q}-2 \le q \le \tilde{q}$
where $\tilde{q} \coloneqq \min \left(\lf u_{(2)} / v \rf, \;B-1 \right)$.

\begin{proof}
This proof copies step by step the proof of Theorems A and B of~\cite{Knuth69} with our notations.

First, let us prove $q \le \tilde{q}$.
It is true if $\tilde{q} = B-1$ because $q<B$. 
Otherwise we have $\tilde{q}= \lf u_{(2)} / v \rf$, hence $\tilde{q}v > (u_{(2)}/v -1) \times v$ and therefore
$\tilde{q}v \ge u_{(2)}-v+1$.
It follows that 
\begin{eqnarray*}
\lefteqn{u-\tilde{q}y \le u-\tilde{q}vB^{m-1}\le u - (u_{(2)}+1-v)B^{m-1} = }\\
& & u - u_{(2)} B^{m-1} - B^{m-1} + vB^{m-1} = \\
& & u_{m-2}B^{m-2}+\dots +u_1B+u_0 - B^{m-1} + vB^{m-1}<vB^{m-1} \le y
\end{eqnarray*}
Therefore, we have $u-\tilde{q}y <y$, hence $u<(\tilde{q}+1)y$. 
Since we also have $qy\le u$, we obtain $qy<(\tilde{q}+1)y$ from which it comes $q\le \tilde{q}$.

\medskip
Now, let us prove that $\tilde{q}\le q+2$ by contradiction. 
Assume that $\tilde{q}\ge q+3$.
By definition of $\tilde{q}$, we have 
$\tilde{q}\le u_{(2)}/v=u_{(2)}B^{m-1}/(vB^{m-1})\le u/(vB^{m-1})< u/(y-B^{m-1})$
because $y<(v+1)B^{m-1}$.
From the hypothesis $\tilde{q}\ge q+3$ and the relations $\tilde{q}<u/(y-B^{m-1})$ and $q>(u/y)-1$, we deduce
\[
3\le \tilde{q}-q < \frac{u}{y-B^{m-1}}-\frac{u}{y}+1 = \frac{u}{y} \left(\frac{B^{m-1}}{y-B^{m-1}} \right)+1.
\]
Therefore, we obtain
$u/y > 2(y-B^{m-1})/B^{m-1}\ge 2(v-1)$ because $y\ge vB^{m-1}$.
Finally, since $B-4\ge \tilde{q}-3 \ge q =\lf u/y\rf \ge 2v-2$, we have $(B/2)-1\ge v$, which contradicts the hypothesis $v\ge \lf B/2\rf$.
\end{proof}

%%%%%%%%%%%%%%%%%%%%%%
\subsection{An addition expression for multiplication}\label{append:+expForMult}

Let us exhibit and justify an addition expression $E_{\times}(x,y)$ such that, for each $N\ge 2$, we have after a preprocessing of time $O(N)$, the equality $x\times y = E_{\times}(x,y)$, for all $x,y<N$. 
This is Example~\ref{ex:termMult} of Theorem~\ref{th:csttime->exp}.

\begin{proof}[An addition expression for multiplication and its justification]
We use the pre-computed arrays $\mathtt{Div}B$, $\mathtt{Mod}B$, $\mathtt{Mult}B$ and $A_{\times}$, see Section~\ref{sec:timesDiv}.
(Recall: for $B\coloneqq\lc N^{1/2}\rc$, we have
$\textsc{Mod}B[x]\coloneqq x\modop B$ and
$\textsc{Div}B[x]\coloneqq x\divop B$, for each $x< 3N$; 
we also have
$\textsc{Mult}B[x]\coloneqq x\times B$ and $A_{\times}[\textsc{Mult}B[x]+y]\coloneqq x\times y$, for all $x,y<B$.)

Define the expressions $x_1\coloneqq \textsc{Div}B[x]$, $x_0\coloneqq \mathtt{Mod}B[x]$, 
$y_1\coloneqq \textsc{Div}B[y]$ and $y_0\coloneqq \mathtt{Mod}B[y]$,
for the $B$-digits of $x,y$. 
The notation $(z_3,z_2,z_1,z_0)$ of the product $x\times y$ in base $B$ is given by 
\begin{center}
$z_0=(x_0 \times y_0)\modop B$, and $z_1= u \modop B$,\\ 
where $u$ is the sum $(x_0 \times y_0)\divop B + x_1 \times y_0 + x_0 \times y_1< 3N$;\\
$z_2= u' \modop B$, and $z_3= u' \divop B$, where $u'$ is the sum $u \divop B +x_1 \times y_1$.
\end{center}
Formally, we define the addition expressions
\begin{center} 
$z_0 \coloneqq \textsc{Mod}B[A_{\times}[\textsc{Mult}B[x_0]+y_0]]$,
and $z_1 \coloneqq \textsc{Mod}B[u]$, \\ 
where $u \coloneqq \textsc{Div}B[A_{\times}[\textsc{Mult}B[x_0]+y_0]]
+ A_{\times}[\textsc{Mult}B[x_1]+y_0] + A_{\times}[\textsc{Mult}B[x_0]+y_1]$;\\
$z_2  \coloneqq \textsc{Mod}B[u']$ and $z_3  \coloneqq \textsc{Div}B[u']$,
where $u'  \coloneqq \textsc{Div}B[u] + A_{\times}[\textsc{Mult}B[x_1]+y_1]$.
\end{center}

From $(z_3,z_2,z_1,z_0)$, we now construct a $2$-tuple addition expression $(t_1,t_0)$ representing the product $x\times y$ in base $N$. 
We use for that the tables
$\textsc{Div}N[0..2N-1]$, $\textsc{Mod}N[0..2N-1]$, $T_0[0..2N-1]$ and $T_1[0..2N-1]$, defined by 
$\textsc{Div}N[v]\coloneqq v \divop N$, 
$\textsc{Mod}N[v]\coloneqq v \modop N$,
$T_0[v]\coloneqq (v\times B) \modop N$ and $T_1[v]\coloneqq (v\times B) \divop N$.
Note that these tables can be pre-computed in time $O(N)$.
By definition, we have, for each $v<2N$, the identity
\begin{eqnarray}\label{eqn:vB}
v\times B = T_1[v] \times N + T_0[v].
\end{eqnarray}
Observe the equality $x \times y = ((z_3 \times B + z_2 )\times B + z_1) \times B + z_0$ 
with\footnote{From $z_3,z_2<B$  and $N>(B-1)^2=B^2 - 2B +1$, we deduce 
$z_3 \times B + z_2<B^2 \le N +2B-2 \le 2N$.}
$z_3 \times B + z_3<2N$, and define the expression 
$v_1 \coloneqq \textsc{Mult}B[z_3]+z_2$ ($v_1$ equals $z_3\times B + z_2$). 
Using identity~(\ref{eqn:vB}), we obtain 
$x \times y = (v_1 \times B + z_1) \times B + z_0 = (T_1[v_1]\times N + T_0[v_1]+z_1)\times B + z_0$.
We define the expression $v_2 \coloneqq T_0[v_1]+z_1< 2N$ so that we have
$x \times y = (T_1[v_1]\times B) \times N + v_2 \times B + z_0$,
which can be rewritten using again identity~(\ref{eqn:vB}),
$x \times y = (T_1[v_1]\times B) \times N + T_1[v_2]\times N + T_0[v_2] + z_0$.
This justifies that we must have 
\begin{center}
$t_0 = (T_0[v_2] + z_0) \modop N$
and $t_1 = T_1[v_1]\times B + T_1[v_2] + (T_0[v_2] + z_0) \divop N$.
\end{center}
Formally,
$E_{\times}(x,y) \coloneqq (t_1,t_0)$, where
$t_1 \coloneqq \textsc{Mult}B[T_1[v_1]] + T_1[v_2] + \textsc{Div}N[T_0[v_2] + z_0]$ and 
$t_0 \coloneqq \textsc{Mod}N[T_0[v_2] + z_0]$.
\end{proof}
%%%%%%%%%%%%%%%%%%%%%%%%%%%
\subsection{The Fibonacci and factorial functions belong to $\cpp$}\label{appendix:Fib fact}

Let $\mathtt{Fib}:\mathbb{N}\to\mathbb{N}$ denote the usual Fibonacci function defined by
$\mathtt{Fib}(0)\coloneqq 0$, $\mathtt{Fib}(1)\coloneqq 1$ and 
$\mathtt{Fib}(x) \coloneqq \mathtt{Fib}(x-1)+\mathtt{Fib}(x-2)$ for $x\ge 2$.

\begin{proposition} The Fibonacci function belongs to $\cpp$.
\end{proposition}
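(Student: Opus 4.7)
The proof plan is to exploit the fact that, on inputs restricted to $[0,N^d[$, the Fibonacci function takes only $O(\log N)$ distinct values, so the entire relevant part of its graph can be tabulated in the preprocessing phase and recovered in constant time by a direct lookup.

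First I would observe that, since $\mathtt{Fib}(y+1) > 1.5\,\mathtt{Fib}(y)$ for $y \ge 4$, the largest integer $\ell$ with $\mathtt{Fib}(\ell) < N^d$ satisfies $\ell = O(\log N)$. Hence every candidate output of $\mathtt{Fib}_{N,d}$ can be indexed by a very small integer.

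Next I would describe the preprocessing (of time $O(N)$) in three sub-steps. \emph{(a)} Run the linear-time setup of Section~\ref{sec:timesDiv} that builds the mini-tables $\textsc{Div}B$, $\textsc{Mod}B$, $\textsc{Mult}B$, $A_\times$ and, from them, the constant-time procedure for adding two integers less than $N^d$ given by their $d$ digits in base $N$. By Theorem~\ref{th:transitivity_base}/the Fundamental Lemma, we may from now on treat multi-digit addition on operands $< N^d$ as a primitive. \emph{(b)} Using this primitive, iteratively compute an array $\textsc{F}[0..\ell]$ such that $\textsc{F}[y]$ stores the $d$ digits in base $N$ of $\mathtt{Fib}(y)$: set $\textsc{F}[0] \gets 0$, $\textsc{F}[1] \gets 1$, and for $y = 2,3,\dots$ set $\textsc{F}[y] \gets \textsc{F}[y-1] + \textsc{F}[y-2]$, stopping as soon as the high digits would overflow~$N^d$; this costs $O(\ell) = O(\log N)$. \emph{(c)} Fill in a single-register indicator $\textsc{Small}[0..N-1]$ with $\textsc{Small}[z] = 1$ iff $z \le \ell$; this costs $O(N)$.

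Then I would specify the constant-time procedure. On input $x = (x_{d-1},\dots,x_0)_N$, check in constant time that $x_{d-1} = \cdots = x_1 = 0$ and that $\textsc{Small}[x_0] = 1$; if both hold, return $\textsc{F}[x_0]$ (the $d$ registers holding the digits of $\mathtt{Fib}(x)$ in base $N$); otherwise return $\mathtt{overflow}$. Correctness follows because $x < N^d$ with $\mathtt{Fib}(x) < N^d$ exactly means $x \in [0,\ell]$, i.e.\ $x$ fits in the last digit and is $\le \ell$.

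No step is a real obstacle: the only delicate ingredient is the constant-time multi-digit addition used inside the preprocessing, but that is already granted by Section~\ref{sec:timesDiv} together with Lemma~\ref{lemma:fund}. The proof for the factorial function is identical, using $(x+1)! > 2\cdot x!$ for $x \ge 2$, constant-time multi-digit multiplication (Corollary~\ref{cor:divmod} and Theorem~\ref{th:transitivity_base}), and iterating $\textsc{Fact}[y] \gets \textsc{Fact}[y-1] \times y$ while the result stays below $N^d$; here the bound is $\ell = O(\log N / \log\log N)$, which is still $O(\log N)$.
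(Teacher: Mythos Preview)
Your proposal is correct and follows essentially the same approach as the paper: precompute the array of Fibonacci values $\textsc{F}[0..\ell]$ with $\ell=O(\log N)$ using the recurrence, then answer by table lookup, returning $\mathtt{overflow}$ when $x>\ell$. The paper's version is slightly terser---it tests $x\le\arr{BOUND}$ directly via the constant-time comparison already available after the Section~\ref{sec:timesDiv} preprocessing, whereas you implement this test by the digit check $x_{d-1}=\cdots=x_1=0$ together with an indicator array $\textsc{Small}[0..N-1]$; both are valid and equivalent in spirit.
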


\begin{proof}
We want to compute its ``polynomial'' restriction $\mathtt{Fib}_{N,d}$. 
We first pre-compute the value $\arr{BOUND}$ such that
$\arr{BOUND}\coloneqq \max\{x \mid \mathtt{Fib}(x)<N^d\}$ and the array $\arr{FIB}[0..N-1]$ such that
$\arr{FIB}[x] \coloneqq \mathtt{Fib}(x)$ for all integers $x\leq \arr{BOUND}$.
Note that $\mathtt{BOUND}$ is $O(\log N^d)$, which justifies the $\mathtt{FIB}$ array being of size $N$. 
The following code pre-computes $\arr{BOUND}$ and $\arr{FIB}[0..N-1]$ in time 
$O(\arr{BOUND})=O(\log N^d)=O(N)$, which makes it possible to compute $\mathtt{Fib}_{N,d}$ in constant time.
\end{proof}

\begin{minipage}{0.05\textwidth}
        ~
      \end{minipage}
      \begin{minipage}{0.85\textwidth}%(1
        \begin{algorithm}[H]\label{algo:pred}%(2
          \caption{Pre-computation of $\arr{FIB}$ and computation of $\mathtt{Fib}_{N,d}$}
  %%%%%%%%%%%%%%%%%%%%       
      \begin{minipage}[t]{0.50\textwidth}%(3
      %\vspace{0.7em}
      \begin{algorithmic}[1]%(4
     %%%%%%%%%%%%%%%%%%%%
    \State $\arr{FIB}[0] \gets 0$ 
    \State $\var{x}\gets 1$ 
    \State $\var{z}\gets 1$
   \While{$\var{z}<N^d$}\label{line:while}
      \State $\arr{FIB}[\var{x}] \gets \var{z}$\label{line:fib}
      \State $\var{z} \gets \arr{FIB}[\var{x}-1] + \arr{FIB}[\var{x}]$
      \State $\var{x}\gets \var{x}+1$
    \EndWhile
    \State $\arr{BOUND} \gets \var{x}-1$\label{line:bound}
      %%%%%%%%%%%%%%%%%%%%
      \end{algorithmic}%)4
      \end{minipage}%)3
 %%%%%%%%%%%%%%%%%
      \begin{minipage}[t]{0.55\textwidth}%(5
      \vspace{0.9em}
      \begin{algorithmic}[1]%(6
      %%%%%%%%%%%%
\Procedure{fibonacci}{$\var{x}$}
  \If{$\var{x} \leq \arr{BOUND}$}
    \State \Return $\arr{FIB}[\var{x}]$ 
  \Else 
    \State \Return $\mathtt{overflow}$
  \EndIf
\EndProcedure
%%%%%%%%%%%%
      \end{algorithmic}%)6
      \end{minipage}%)5
    \end{algorithm}%)2
      \end{minipage}%)1
%%%%%%%%%%%%%%%%%

\medskip
\begin{proposition} The factorial function belongs to $\cpp$.
\end{proposition}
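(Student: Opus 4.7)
The plan is to follow essentially the same strategy as for the Fibonacci function, replacing the recurrence $F(x) = F(x-1)+F(x-2)$ by the recurrence $x! = x \times (x-1)!$. To compute the polynomial restriction $\mathtt{Fact}_{N,d}$, I would pre-compute a constant $\arr{BOUND} \coloneqq \max\{x \mid x! < N^d\}$ together with an array $\arr{FACT}[0..\arr{BOUND}]$ such that $\arr{FACT}[x] = x!$; the on-line procedure then returns $\arr{FACT}[x]$ if $x \le \arr{BOUND}$ and $\mathtt{overflow}$ otherwise.

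The crucial quantitative observation is that factorial grows super-exponentially: from $x! \ge 2^{x-1}$ we deduce $\arr{BOUND} = O(\log N^d) = O(\log N)$, so the preprocessing loop executes only $O(\log N)$ iterations. Each iteration must perform the assignment $\arr{FACT}[x] \gets x \times \arr{FACT}[x-1]$, whose two operands are bounded by $O(\log N)$ and $N^d$ respectively, hence both polynomial in $N$. In particular, multiplication is applied only to polynomial operands, which is exactly the regime in which a $\cpp$ algorithm for $\times$ operates.

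This is where the one real difference with the Fibonacci proof appears: the recurrence uses multiplication instead of addition. However, multiplication belongs to $\cpp$ (Corollary~\ref{cor:divmod} together with the construction preceding it), and by Theorem~\ref{th:transitivity_base} we have $\cpp(+,\times) = \cpp$. Hence we may freely use $\times$ as a primitive operation of constant cost in our RAM, after paying its own linear-time preprocessing. The total preprocessing time for factorial is therefore $O(N) + O(\log N) = O(N)$, and the on-line phase is a single comparison followed by a direct memory access, i.e.\ constant time. There is no significant obstacle beyond invoking Theorem~\ref{th:transitivity_base} at the right place to lift $\times$ to a unit-cost operation; everything else is a verbatim copy of the Fibonacci algorithm above.
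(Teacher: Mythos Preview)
Your proposal is correct and essentially identical to the paper's proof: both pre-compute $\arr{BOUND}=\max\{x\mid x!<N^d\}=O(\log N)$ together with the table $\arr{FACT}[x]=x!$ for $x\le\arr{BOUND}$ via the recurrence $\arr{FACT}[x]\gets x\times\arr{FACT}[x-1]$, and return $\arr{FACT}[x]$ or $\mathtt{overflow}$ in constant time. The paper uses multiplication without comment (its membership in $\cpp$ being established earlier), whereas you make the appeal to Theorem~\ref{th:transitivity_base} explicit, which is a fine clarification.
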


\begin{proof}
Here again, we want to compute the polynomial restriction $\mathtt{Fact}_{N,d}$.  
The following code pre-computes the value 
$\arr{BOUND}\coloneqq\mathtt{max}\{x \mid x!<N^d\}$ and the array $\arr{FACT}[0..N-1]$ such that
$\arr{FACT}[x] \coloneqq x!$ for all $x\leq \arr{BOUND}$ in time $O(\arr{BOUND})=O(\log N^d)=O(N)$
and computes $\arr{Fact}_{N,d}$ in constant time.
\end{proof}

%%%%%%%%%%%%%%%%%%%%%%
\begin{minipage}{0.05\textwidth}
        ~
      \end{minipage}
      \begin{minipage}{0.85\textwidth}%(1
        \begin{algorithm}[H]\label{algo:pred}%(2
          \caption{Pre-computation of $\arr{FACT}$ and computation of $\mathtt{Fact}_{N,d}$}
  %%%%%%%%%%%%%%%%%%%%       
      \begin{minipage}[t]{0.50\textwidth}%(3
      %\vspace{0.7em}
      \begin{algorithmic}[1]%(4
     %%%%%%%%%%%%%%%
    \State $\var{x}\gets 0$ 
    \State $\var{z}\gets 1$
   \While{$\var{z}<N^d$}
      \State $\arr{FACT}[\var{x}] \gets \var{z}$ 
      \State $\var{z} \gets (\var{x}+1) \times \arr{FACT}[\var{x}]$
      \State $\var{x}\gets \var{x}+1$
    \EndWhile
    \State $\arr{BOUND} \gets \var{x}-1$  
    %%%%%%%%%%%%%%%
      \end{algorithmic}%)4
      \end{minipage}%)3
 %%%%%%%%%%%%%%%%%
      \begin{minipage}[t]{0.55\textwidth}%(5
      \vspace{0.5em}
      \begin{algorithmic}[1]%(6
    %%%%%%%%%%%%%%%
\Procedure{factorial}{$\var{x}$}
  \If{$\var{x} \leq \arr{BOUND}$}
    \State \Return $\arr{FACT}[\var{x}]$ 
  \Else 
    \State \Return $\mathtt{overflow}$
  \EndIf
\EndProcedure
%%%%%%%%%%%%%%%
      \end{algorithmic}%)6
      \end{minipage}%)5
    \end{algorithm}%)2
      \end{minipage}%)1
%%%%%%%%%%%%%%%%%

\subsection{Using registers of length $O(\log N)$ does not change complexity classes}\label{app:O(log N)}
Let us now state and justify how, as mentioned in Subsection~\ref{subsec:RAMmodel}, the complexity classes we have studied do not change if the contents of the RAM registers are less than $N^d$, for a fixed integer $d\ge 2$, (equivalently, register contents have length $O(\log N)$) instead of $O(N)$.

\begin{lemma}\label{lem:addressesNdToLin}
Let $c,d\ge 2$ be fixed integers and let $M$ be a \emph{RAM($+$)} whose register contents belong to the interval $[0,N^d[$ and whose inputs, outputs and used addresses are in $[0,cN[$.
Then $M$ is faithfully simulated by a  \emph{RAM($+,\times,\mathtt{div},\mathtt{mod}$)} $M'$ whose register contents are $O(N)$.
\end{lemma}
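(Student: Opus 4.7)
The idea is to represent each register $R[i]$ of $M$ (which holds an integer $v<N^d$) by the block of $d$ consecutive registers of $M'$, namely $R'[di],R'[di+1],\ldots,R'[di+d-1]$, storing the $d$ base-$N$ digits of $v$ from most to least significant. Since $M$'s addresses lie in $[0,cN[$, the addresses of $M'$ that we shall use lie in $[0,cdN[\,=\,O(N)$, and since each digit is in $[0,N[$, every register of $M'$ will contain an integer $O(N)$. I will then exhibit, for each $R$-instruction of $M$ (Table~\ref{table:instRAM2}), a fixed-length block of $R$-instructions of $M'$ that simulates it on this representation, while preserving input/output instructions one-to-one.

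The routine cases use only $+$ and memory accesses. The instructions $\mathtt{Cst}\;i\;j$, $\mathtt{Move}\;i\;j$, $\mathtt{GetN}\;i$, and $\mathtt{Halt}$ act digit by digit on the fixed $d$-digit block (the base-$N$ digits of the constant $j$ and of $N=(1,0)_N$ are compile-time known). $\mathtt{Jzero}\;i\;\ell_0\;\ell_1$ becomes a chain of $d$ zero-tests, one per digit of the block. For $\mathtt{Input}\;i$ and $\mathtt{Output}\;i$, the value exchanged is $<cN$ by hypothesis, hence fits in one $M'$-register, so one $M'$-input/output instruction suffices: on input, the read value is split by one $\mathtt{div}\;N$ and one $\mathtt{mod}\;N$ into its (at most two) low-order digits, the remaining digits of the block being zeroed; on output, the two low digits are recombined by one multiplication and one addition into a single register, which is then output. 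This preserves the exact correspondence of input/output instructions of $M$ and of $M'$.

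The nontrivial cases are indirect addressing and $+$. For $\mathtt{Load}\;i\;j$ and $\mathtt{Store}\;i\;j$, the hypothesis guarantees that whenever $R[j]$ (resp.\ $R[i]$) is dereferenced it is less than $cN<N^2$ (for $N>c$), so at most its two least significant digits are nonzero; I reconstruct the address $a\coloneqq N\cdot R'[dj+d-2]+R'[dj+d-1]$ in a single register using one multiplication and one addition, compute $da$ by $d-1$ additions, and then access the block $R'[da],\ldots,R'[da+d-1]$ via a constant number of $M'$'s own $\mathtt{Load}/\mathtt{Store}$ instructions. For the addition instruction $R[0]\gets R[0]+R[1]$, I perform base-$N$ schoolbook addition from least to most significant digit: at each position $k\in\{d-1,\ldots,0\}$ I set $s\gets R'[k]+R'[d+k]+\gamma$, where $\gamma\in\{0,1\}$ is the incoming carry, so that $s<3N=O(N)$; then $R'[k]\gets s\bmod N$ and $\gamma\gets s\mathop{\mathrm{div}} N$ are fed to the next iteration. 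A nonzero carry surviving past the most significant digit corresponds to overflow of $M$'s addition and may be discarded (in any case it is ruled out by the hypothesis that $M$'s register contents remain $<N^d$).

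Each $R$-instruction of $M$ is thus simulated by a constant-size block of $R$-instructions of $M'$ (the constant depends only on $c$ and $d$), while input/output instructions correspond one-to-one in the same relative order, yielding precisely the lock-step faithful simulation with a constant lag factor that we seek. The main pitfall, and the reason why $\times$, $\mathtt{div}$, $\mathtt{mod}$ are indispensable primitives of $M'$, is to check that every intermediate quantity manipulated by $M'$ — the carries $\gamma$, the digit sums $s$, the reconstructed addresses $a$, their rescalings $da$, and the input/output split-and-recombine values — stays bounded by $O(N)$; granting this, all register contents of $M'$ are $O(N)$ as required.
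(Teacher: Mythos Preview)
Your proposal is correct and takes essentially the same approach as the paper: represent each register of $M$ by a block of $d$ consecutive registers of $M'$ holding the base-$N$ digits, and simulate each instruction by a fixed-length block using $+$, $\times$, $\mathtt{div}$, $\mathtt{mod}$ to split/recombine digits and handle carries. The only cosmetic difference is your most-to-least significant digit ordering versus the paper's least-to-most, and you cover all instruction types explicitly whereas the paper only details three representative ones.
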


\begin{proof}
The idea is to simulate each register $R[a]$ of $M$ by $d$ registers $R[a \times d],\dots,R[a\times d +d-1]$ of $M'$ so that if the contents of $R[a]$ is $x<N^d$ then $R[a\times d +i]$ contains the digit $x_i$ of the representation $(x_{d-1},\dots,x_0)_N$ of $x$ in base $N$. 
(Note that $a\times d +i$ is $O(N)$ like $a$.)
From this representation, it is quite easy to simulate each instruction of $M$ by a fixed finite sequence of instructions of $M'$.
As an example, let us describe in detail the simulation of three representative instructions (see Table~\ref{table:instRAM2}):

\smallskip \noindent
• The ``$\mathtt{Input}\;j$'' instruction $\mathrm{Read}\;R[j]$ is simulated by the subroutine 
\begin{center}
$\mathrm{Read}\;R[j\times d]$ ; $R[j\times d+1] \gets R[j\times d]\divop N$ ;
$R[j\times d] \gets R[j\times d]\modop N$ \\ 
$\mathtt{for}$ $i$ $\mathtt{from}$ $2$ $\mathtt{to}$ $d-1$ $\mathtt{do}$ $R[j\times d+i] \gets 0$
\end{center}
\emph{Explanation:} Each input integer $x$ belongs to $[0,cN[$ and we can assume $N \ge c$, so that 
$cN\le N^2$. 
Therefore, the registers $R[j\times d+1]$ and $R[j\times d]$  (resp. $R[j\times d+i]$) for $2\le i<d$) of $M'$ receive the respective values $x \divop N$ and $x \modop N$ (resp. $0$) because the representation of $x$ with $d$ digits in base $N$ is 
$(x_{d-1},\dots,x_2,x_1,x_0)_N=(0,\dots,0,x_1,x_0)_N$ with $(x_1,x_0)=(x \divop N, x \modop N)$.

\smallskip \noindent
• The ``$\mathtt{Store}\;j\;k$'' instruction $R[R[j]]\gets R[k]$ is simulated by the ``for'' instruction
\begin{center}
$\mathtt{for}$ $i$ $\mathtt{from}$ $0$ $\mathtt{to}$ $d-1$ $\mathtt{do}$ 
$R[(R[j\times d+1]\times N + R[j\times d])\times d+i]\gets R[k\times d +i]$
\end{center}
\emph{Explanation:} By hypothesis, the ``address'' register $R[j]$ of $M$ contains an integer $a<cN$, so that its notation in base $N$ is of the form $(a_1,a_0)_N$; $R[j]$ is therefore represented by the registers \linebreak
$R[j \times d],R[j \times d +1],\dots,R[j\times d + d-1]$ of $M'$, whose contents are $0$, 
except the registers $R[j \times d +1]$ and $R[j \times d]$, 
whose contents are $a_1$ and $a_0$ respectively, so that
$a=R[j\times d+1]\times N + R[j\times d]$.

\smallskip \noindent
• The addition instruction $R[0]\gets R[0]+R[1]$ is simulated as follows by the classical addition algorithm using the variables $a_0,\dots,a_{d-1}$ and $b_0,\dots,b_{d-1}$ (representing the $d$ digits in base $N$ of the initial contents of $R[0]$ and $R[1]$, respectively), $c_0,\dots,c_{d}$ (for the $d+1$ carries of the addition in base $N$) and 
$s_0,\dots,s_{d}$ (for the $d+1$ digits of the sum):
\begin{center}
$\mathtt{for}$ $i$ $\mathtt{from}$ $0$ $\mathtt{to}$ $d-1$ $\mathtt{do}$ $(a_i,b_i)\gets (R[i],R[d+i])$ ;
$c_0\gets 0$\\
$\mathtt{for}$ $i$ $\mathtt{from}$ $0$ $\mathtt{to}$ $d-1$ $\mathtt{do}$ 
$(c_{i+1},s_i) \gets ((a_i+b_i+c_i)\divop N,(a_i+b_i+c_i)\modop N)$ \\
$s_d \gets a_d+b_d+c_d$ ;
$\mathtt{for}$ $i$ $\mathtt{from}$ $0$ $\mathtt{to}$ $d$ $\mathtt{do}$ $R[i]\gets s_i$
\end{center}
Each of the other instructions in Table~\ref{table:instRAM2} is simulated in the same way. 
Lemma~\ref{lem:addressesNdToLin} is proven.
\end{proof}

The following is a direct consequence of Lemma~\ref{lem:addressesNdToLin}.
\begin{proposition}\label{prop:addressesNdToLin}
Let $c,d\ge 2$ be fixed integers and let $M$ be a \emph{RAM($+$)} whose register contents belong to the interval $[0,N^d[$ and whose inputs, outputs and addresses used are in $[0,cN[$.

Then $M$ can be faithfully simulated by a  \emph{RAM($+$)} $M''$ using linear-time initialization so that the computation of $M''$, including initialization, only uses integers $O(N)$.
\end{proposition}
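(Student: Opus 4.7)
The plan is to combine Lemma~\ref{lem:addressesNdToLin} with the Fundamental Lemma~\ref{lemma:fund} in order to eliminate the auxiliary operations $\times,\mathtt{div},\mathtt{mod}$ introduced by the first simulation while preserving faithfulness and linear preprocessing.

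First, I would apply Lemma~\ref{lem:addressesNdToLin} to the given RAM($+$) $M$, obtaining a RAM($+,\times,\mathtt{div},\mathtt{mod}$) $M'$ which simulates $M$ in lock-step with some constant lag factor $\lambda$ and whose register contents are $O(N)$. By the construction of that lemma, input/output instructions of $M$ correspond one-to-one to input/output instructions of $M'$ (reading/writing one $N$-digit at a time with the other $d-1$ digits either zero or discarded), so the simulation is faithful in the sense required.

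Second, I would invoke Corollary~\ref{cor:divmod} and Theorem~\ref{th:+=+...mod}, which give $\times,\mathtt{div},\mathtt{mod}\in\cpp=\cpp(+)$. Applying the Fundamental Lemma~\ref{lemma:fund} (iterated, or equivalently its generalization used in the proof of Theorem~\ref{th:closedCompose}) with $\OP=\{+\}$ and the three operations $\times,\mathtt{div},\mathtt{mod}$ added one by one, I obtain a RAM($+$) $M''$ which, after a linear-time initialization phase, simulates $M'$ faithfully with some constant lag factor $\lambda'$. The initialization only builds the tables of Section~\ref{sec:timesDiv} (base-$B$ conversion tables, multiplication table on a $\lc N^{1/2}\rc$-mini-grid, division tables, etc.), all of size $O(N)$ and consulting only integers $O(N)$; the constant-time replacements of $\times,\mathtt{div},\mathtt{mod}$ also manipulate only $O(N)$ integers, so the entire computation of $M''$, including the preprocessing, stays within $O(N)$-bounded register contents.

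Finally, I would observe that the composition of two lock-step faithful simulations is itself lock-step faithful: if $M'$ simulates $M$ with lag function $f$ and lag factor $\lambda$, and $M''$ simulates $M'$ with lag function $g$ and lag factor $\lambda'$, then $g\circ f$ is non-decreasing with $1\le(g\circ f)(i)-(g\circ f)(i-1)\le \lambda\lambda'$, and the bijection between input/output instructions is preserved by construction; the preprocessing of $M''$ is carried out before the first instruction of $M'$, so it does not interfere with the input/output alignment. Thus $M''$ faithfully simulates $M$ with linear-time initialization using only addition and only $O(N)$-bounded integers, as required. The main obstacle (which is in fact already handled by Lemma~\ref{lemma:fund}) is ensuring that repeated constant-time calls to the simulated $\times,\mathtt{div},\mathtt{mod}$ procedures do not corrupt each other's workspace; this is precisely the point addressed by the logged/reversible $\cstpopprime$ construction in the proof of the Fundamental Lemma, and nothing further is required here.
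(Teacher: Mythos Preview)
Your proposal is correct and follows essentially the same approach as the paper: apply Lemma~\ref{lem:addressesNdToLin} to get the RAM($+,\times,\mathtt{div},\mathtt{mod}$) $M'$, then eliminate $\times,\mathtt{div},\mathtt{mod}$ via the Fundamental Lemma (equivalently, via the linear preprocessing of Section~\ref{sec:timesDiv}) to obtain a RAM($+$) $M''$. The paper's proof is a single terse sentence, whereas you spell out the composition of faithful simulations and the role of the $\cstpopprime$ construction explicitly, but the argument is the same.
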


\begin{proof}
The program of $M''$ is that of the RAM($+,\times,\mathtt{div},\mathtt{mod}$) $M'$ of Lemma~\ref{lem:addressesNdToLin} (which faithfully simulates $M$), preceded by the linear preprocessing necessary to compute $\times$, $\mathtt{div}$ and $\mathtt{mod}$ from~$+$ in constant time, so that each expression involving $\times$, $\mathtt{div}$ or $\mathtt{mod}$ is replaced by an equivalent expression involving $+$ as the only operation, both in the preprocessing and in the computation itself.
\end{proof}

Here is our invariance result of complexity classes depending on the contents of RAM registers.

\begin{corollary}
The complexity classes $\cpp$, $\textsc{LinTime}$ and $\cdlin$ are invariant, whether the RAMs use only integers less than $cN$ \emph{or} less than $N^d$ as register contents, for fixed integers \linebreak 
$c,d\ge 2$, provided that the inputs, outputs\footnote{Of course, for the comparison to make sense, the inputs and outputs must be the same in both cases.} and addresses used are less than $cN$.
\end{corollary}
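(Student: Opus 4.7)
The plan is to show the two inclusions between the classes defined with register contents bounded by $cN$ and those defined with register contents bounded by $N^d$, with inputs/outputs/addresses always bounded by $cN$. One inclusion is trivial: any RAM with register contents in $[0,cN[$ is a fortiori a RAM with register contents in $[0,N^d[$ for $d\ge 2$ (once $N\ge c$). So the real content of the corollary is the other inclusion.

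For that inclusion, I would apply Proposition~\ref{prop:addressesNdToLin} directly: any RAM($+$) $M$ whose register contents lie in $[0,N^d[$ and whose inputs/outputs/addresses lie in $[0,cN[$ admits a faithful simulation by a RAM($+$) $M''$ whose computation, \emph{including} a linear-time initialization, manipulates only integers $O(N)$. The key property to exploit is faithfulness: the simulation has a constant lag factor $\lambda$, preserves the set and order of input/output instructions, and in particular transports the time elapsed between any two consecutive input/output events up to the multiplicative constant $\lambda$.

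From this, the three invariance statements follow uniformly. For $\textsc{LinTime}$, a computation of $M$ running in time $O(N)$ is simulated by $M''$ in time $O(N)$ as well, once we absorb the linear-time preliminary initialization of $M''$ into the overall linear budget. For $\cdlin$, the preprocessing phase of $M$ (a prefix of the computation preceding the first output) is mapped by the faithful simulation to a preprocessing phase of $M''$ whose time is still $O(N)$ (again absorbing the linear-time initialization of the simulation), and each constant-time gap between two consecutive output instructions in $M$ is mapped to a constant-time (at most $\lambda$ times constant) gap in $M''$; this is precisely where the lock-step property, rather than a mere polynomial-time simulation, is needed. For $\cpp$, the same argument shows that a linear-time preprocessing phase of $M$ becomes a linear-time preprocessing phase of $M''$, and the constant-time operation phase of $M$, located between two consecutive input/output events, becomes a constant-time phase of $M''$.

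I do not foresee a true obstacle: the heavy lifting has already been carried out in Lemma~\ref{lem:addressesNdToLin} and Proposition~\ref{prop:addressesNdToLin}, which provide the digit-by-digit encoding of an $N^d$-register by $d$ consecutive $O(N)$-registers and the simulation of $+$, $\mathtt{div}$, $\mathtt{mod}$ on such encoded integers via pre-computed tables. The only mildly delicate point is the book-keeping for $\cdlin$, where one must check that no non-input/output instruction of $M$ is mapped by the faithful simulation to something that performs a spurious input/output; this is guaranteed by item~2 of Lemma~\ref{fact:array->R} combined with the definition of lock-step simulation, so the constant-delay property transfers without loss.
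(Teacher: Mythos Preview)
Your proposal is correct and follows essentially the same approach as the paper: the trivial direction via the inequality $cN\le N^d$ for $N\ge c$, and the substantive direction via Proposition~\ref{prop:addressesNdToLin}. The paper's own proof is a terse two-sentence invocation of that proposition and the inequality, whereas you spell out in detail how the lock-step property of the faithful simulation carries linear preprocessing to linear preprocessing and constant delay to constant delay for each of the three classes; this is exactly the content left implicit in the paper's citation. One minor remark: your appeal to item~2 of Lemma~\ref{fact:array->R} is slightly misplaced, since that lemma concerns the array-to-$R$-instruction translation; the absence of spurious input/output instructions in the simulating computation is already guaranteed by the definition of faithful (lock-step) simulation itself.
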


\begin{proof}
The transformation of a RAM($+$) with register contents less than $N^d$ into a RAM($+$) with register contents less than $cN$ is justified by Proposition~\ref{prop:addressesNdToLin}. 
The inverse implication is justified by the inequality $cN\le N^d$, for $N\ge c$ and $d\ge 2$. 
\end{proof}

%%%%%%%%%%%%%%%%%

\subsection{How to define time complexity classes beyond linear time on RAMs?}\label{app:nonlinear}

This cannot be achieved by maintaining the constraint that allowed integers are $O(N)$ because this implies that RAM uses linear space. 
The upper limit of allowed integers and allowed addresses should be increased according to the chosen time limit.

\begin{definition}[General RAM time complexity classes]\label{def:timeBeyondLinear}
For a function $T:\mathbb{N}\to\mathbb{N}$, $T(N)\ge N$, 
we denote by $\textsc{Time}(T(N))$ the class of problems $\Pi: X \mapsto \Pi(X)$, with input 
$X=(x_1,\dots,x_N)$ of size $N$, and $\;0\le x_i<cN$ (where $c$ is a constant integer), which are computed in time $O(T(N))$ by RAMs whose registers can contain integers $O(T(N))$, instead of $O(N)$, and which can also use addresses $O(T(N))$.
\end{definition}

\begin{remark}
In addition to $N$, a RAM of time complexity $O(T(N))$ has a second reference integer, $N_1\coloneqq T(N)$. Note that for usual time functions $T$, for example $T(N)=N^2$ or $T(N)=2^N$, \linebreak
$N_1$ can be easily computed from $N$.
\end{remark}

The general complexity class $\textsc{Time}(T(N))$ is robust like its linear version 
$\textsc{LinTime}\coloneqq \textsc{Time}(N)$ as expressed for example by the following theorem which generalizes the part of Theorem~\ref{th:+=+...mod} concerning $\textsc{LinTime}$.
\begin{theorem}\label{th:robustT}
The complexity class $\textsc{Time}(T(N))$ is invariant depending on whether the set of primitive operations is $\{+\}$ or $\{+,-,\times,\mathtt{div},\mathtt{mod}\}$ or is a set of $\cpp$ operations including $+$.
\end{theorem}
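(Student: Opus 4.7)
The plan is to adapt the proof of Theorem~\ref{th:+=+...mod} for the linear case to super-linear time, by invoking the $\cpp$ preprocessing with a rescaled reference integer $N_1 \coloneqq T(N)$ rather than $N$ itself. The inclusions $\textsc{Time}(T(N))(+) \subseteq \textsc{Time}(T(N))(+,-,\times,\mathtt{div},\mathtt{mod}) \subseteq \textsc{Time}(T(N))(\OP \cup \{+\})$ are immediate, since enriching the set of primitive operations cannot increase the computation time. The substantive direction is the reverse inclusion, which follows from a super-linear analogue of the Fundamental Lemma (Lemma~\ref{lemma:fund}).

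Concretely, let $M$ be a RAM($\OP \cup \{+\}$) running in time $O(T(N))$ with each $\op \in \OP$ in $\cpp$. All operands to any $\op$-instruction of $M$ are at most $cT(N)$ for some constant~$c$, hence less than $N_1^2$ with $N_1 \coloneqq T(N)$, so the $\cpp$ definition applied with reference integer $N_1$ and degree $d=2$ supplies, for each $\op$, a preprocessing procedure $\linppop$ running in time $O(N_1)=O(T(N))$ and a constant-time procedure $\cstpop$ valid on the operands that arise. The simulating RAM($+$) $M'$ proceeds in three phases: it first computes $N_1 = T(N)$; it then executes $\linppop$ for every $\op$ in the finite set $\OP$; finally it runs the code of $M$ with each $\op$-instruction replaced by a call to $\cstpop$. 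The log-based memory-restoration mechanism from the proof of Lemma~\ref{lemma:fund} (the arrays $\arr{Old}$, $\arr{Index}$, $\arr{Dest}$) carries over verbatim, ensuring that successive calls to $\cstpop$ do not interfere and that the simulation is faithful in lock-step with constant lag factor. Hence $M'$ runs in $O(T(N))$ time, with register contents and addresses $O(T(N))$, as required by Definition~\ref{def:timeBeyondLinear}.

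The main obstacle is the auxiliary requirement that $N_1 = T(N)$, together with the associated preprocessing tables of size $O(T(N))$, be producible from $N$ in time $O(T(N))$ using only addition. This amounts to a mild time-constructibility assumption on~$T$, satisfied by all usual time functions (polynomials computed via Horner's method, exponentials via iterated doubling, and so on), and is already implicit in Definition~\ref{def:timeBeyondLinear} since the RAM must be able to allocate and manipulate integers of magnitude $T(N)$. Once this point is handled, the above three-phase construction yields all the required equalities, and in fact the same argument shows that $\textsc{Time}(T(N))(\OP_1) = \textsc{Time}(T(N))(\OP_2)$ for any two finite sets of $\cpp$ operations $\OP_1, \OP_2$ both containing~$+$.
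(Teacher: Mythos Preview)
Your proposal is correct and follows essentially the same approach as the paper's sketch: both replace the reference integer $N$ by $N_1 \coloneqq T(N)$ and then invoke the Fundamental Lemma (Lemma~\ref{lemma:fund}) together with the $\cpp$ membership of the operations in $\OP$, so that the linear-time initialization becomes $O(N_1)=O(T(N))$. Your write-up is more explicit than the paper's two-line sketch—spelling out the three-phase simulation, the degree-$2$ bound on operands, and the time-constructibility caveat (which the paper handles in the Remark preceding the theorem)—but the underlying argument is the same.
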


\begin{proof}[Sketch of proof]
The result follows from Lemma~\ref{lemma:fund} (Fundamental Lemma) and the first item of Theorem~\ref{th:+=+...mod} by replacing the first reference integer $N$ by the second reference integer $N_1\coloneqq T(N)$.
Thus, the initialization time $O(N)$ mentioned in Lemma~\ref{lemma:fund} is replaced by $O(N_1)=O(T(N))$.
\end{proof}

%%%%%%%%%%%%%%%%%

\subsection{Why non-polynomial preprocessing time expands $\cpp$?}\label{app:NonPol-CstPP}

Let us now recall and prove Theorem~\ref{th:NonPol-CstPP} 
of Subsection~\ref{subsec:polPP}, which contrasts with Theorem~\ref{th:N^c->N} 
(invariance of the $\cpp$ class for polynomial-time preprocessing).

\medskip \noindent 
{\bf Theorem~\ref{th:NonPol-CstPP}.}
Let $T(N)$ be a time function greater than any polynomial, i.e. such that for every~$c$, there exists an integer $N_0$ such that $T(N)>N^c$ for each $N\ge N_0$.

Then the inclusion $\cpp_N \subset \cpp_{T(N)}$ is strict.

\medskip\noindent
The theorem is essentially deduced from a variant of the time hierarchy theorem~\cite{CookR73} for RAMs.
\begin{theorem}[Time Hierarchy Theorem]\label{hierarchyTh}
Let $T(N)$ be a time function such that $T(N)>N$ and $T(N)$ is time
constructible, i.e. there is a RAM which, for each reference integer
$N\ge 2$, computes $T(N)$ in time $O(T(N))$. Then there exists a
function $f:\mathbb{N}\to\{0,1\}$ computable by a RAM in time
$O(T(N))$ but computable by no RAM in time $o(T(N))$.
\end{theorem}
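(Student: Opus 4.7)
The plan is to follow the classical diagonalization scheme of Cook and Reckhow, adapted to our RAM model. The target function $f:\mathbb{N}\to\{0,1\}$ will be defined so that on input $N$ it differs from what the ``$N$-th RAM'' computes on input $N$, when that computation is timed out at roughly $T(N)$ steps. The core technical ingredient is an efficient universal RAM with constant-factor simulation overhead, together with the time-constructibility hypothesis on $T$.

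First, I would fix a computable, \emph{paddable} encoding $\langle\cdot\rangle:\mathbb{N}\to\{\text{RAM programs}\}$ of RAM($+$)-programs, surjective onto the set of programs, and such that each program $P$ admits infinitely many encoding integers $n$ with $\langle n\rangle = P$ (e.g.\ pad the encoding with arbitrarily many dummy instructions). This paddability is standard and ensures that for any candidate machine $M$ we have infinitely many ``copies'' of $M$ among the inputs $N\in\mathbb{N}$, which is what makes the diagonal argument strike for \emph{all sufficiently large} $N$ rather than just one.

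Next, I would build a universal RAM($+$) $U$ that, given as input a program encoding $e$ and a simulation budget $t$, simulates the program $\langle e\rangle$ on its own input $e$ for up to $t$ steps using only $O(t)$ RAM operations. The idea is classical: store the (fixed but interpreted) program $\langle e\rangle$ as a table in memory together with a simulated memory of the interpreted machine; each step of $\langle e\rangle$ is then realized by a constant number of instructions of $U$, where the precise constant depends only on $U$ and not on $\langle e\rangle$. To achieve constant-factor overhead we rely on the random-access nature of our RAM model: reading the next instruction, performing the operation, and updating the program counter all cost $O(1)$ operations of $U$. The simulation budget $t$ is decremented on each simulated step, and $U$ aborts once it is exhausted.

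Now I would define the diagonal function $f$. On input $N$, the computing RAM $D$ does the following: (i) compute $T(N)$ in time $O(T(N))$ using the time-constructibility hypothesis; (ii) run $U$ on $(N, T(N))$, simulating the machine $\langle N\rangle$ on input $N$ for $T(N)$ steps; (iii) output $1-b$ if the simulation halts within budget with output $b\in\{0,1\}$, and output $0$ otherwise. By construction, $D$ runs in time $O(T(N))$, so $f\in\textsc{Time}(O(T(N)))$. For the lower bound, suppose for contradiction that some RAM $M$ computes $f$ in time $o(T(N))$. Fix a program encoding $P$ of $M$; by paddability there are infinitely many $N$ with $\langle N\rangle = P$. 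For all sufficiently large such $N$, the running time of $M$ on input $N$ is at most $T(N)/c$, where $c$ is the constant-factor overhead of $U$, so $U$'s simulation of $M$ on input $N$ terminates within the budget $T(N)$ and returns $M(N) = f(N)$. Then by definition $f(N) = 1 - M(N) = 1 - f(N)$, a contradiction.

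The main obstacle in this plan is the second step: engineering the universal RAM $U$ so that the per-step overhead is a true $O(1)$ independent of the simulated program. The delicate points are the random-access representation of the simulated memory (which must support address lookups in $O(1)$ steps of $U$ without requiring preprocessing whose time depends on the simulated run), and the handling of the reference register $N$ and of input/output instructions under the RAM conventions fixed in Section~\ref{subsec:RAMmodel}. Granted this, everything else is bookkeeping: the time-constructibility of $T$ provides an explicit clock, and paddability closes the diagonal argument.
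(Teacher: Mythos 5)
Your proposal is correct and follows essentially the same route as the paper's proof: a paddable encoding of RAM programs (the paper's $\mathtt{decode}$-$\mathtt{padding}$ construction, which likewise guarantees codes of each program above any threshold $N_0$), a clocked universal simulation with $O(1)$ overhead per simulated step, the time-constructibility hypothesis to compute the budget $T(N)$, and the standard diagonal flip $1-V$ yielding the contradiction against any $o(T(N))$-time machine. Your explicit flagging of the constant-overhead universal RAM as the delicate engineering point is fair, but the paper treats it at the same level of detail (it simply asserts $O(1)$ time per simulated step), so this is not a gap relative to the paper's own sketch.
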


\begin{proof}[Sketch of proof of Theorem~\ref{hierarchyTh}]
Our proof follows the “classical” proof for Time Hierarchy theorems
with a diagonalization argument.  For that purpose, we suppose the
existence of a function that associates to each RAM program $P$ a
binary code, $\mathtt{code}(P)$, identified to the integer~$N$ it
represents.  We also suppose the existence of its reverse
$\mathtt{decode}(\mathtt{code}(P)) = P$ for any RAM~$P$ and we turn
that function into $\mathtt{decode}$-$\mathtt{padding}(w)$ that takes
any binary word $w$ and returns $\mathtt{decode}(v)$ when~$w$ is of
the form $1^k0v$ for any $k\ge 1$ and $\mathtt{decode}(w)$ otherwise.
This $\mathtt{decode}$-$\mathtt{padding}$ function ensures that for
any RAM $P$ and for any integer $N_0$ we can find a code $N>N_0$ such
that $P = \mathtt{decode}$-$\mathtt{padding}(N)$.  The existence of
functions $\mathtt{code}$ and $\mathtt{decode}$ is pretty
straightforward, therefore we don’t include it in this proof.  We will
simply assume that the functions $\mathtt{code}$ and $\mathtt{decode}$
are computed in $O(N)$ time.

\medskip
Let us now define $f$ as the function from $\mathbb{N}$ to $\{0, 1\}$
computed by the following RAM program called~$M$:

\begin{itemize}
\item From the reference integer $N$, compute $t = T(N)$ in $O(T(N))$
  time;
\item Compute $P = \mathtt{decode}$-$\mathtt{padding}(N)$ in time
  $O(N) = O(T(N))$;
\item Execute $t$ steps of program $P$ acting on the reference integer
  $N$ (or less steps if $P$ stops), this takes $O(1)$ time per step,
  hence total time $O(T(N))$;
\item Set $V$ to the value returned by $P$ if it stopped or $0$ if it
  did not stop;
\item Return $1 - V$.
\end{itemize}

By construction, $M$ computes $f$ in $O(T(N))$ time.  Let us now
suppose there is a RAM program $P$ computing $f$ in $o(T(N))$ time.
By definition of $o(T(N))$, we know that there exists an integer $N_0$
such that $P$ runs in less than $T(N)$ steps for each input integer $N
> N_0$.  By construction of $\mathtt{decode}$-$\mathtt{padding}$, we
know that there exists an $N > N_0$ such that
$\mathtt{decode}$-$\mathtt{padding}(N) = P$, but then we can look at
the value $V = f(N)$. The value $V$ should be the value returned by
$P$ on $N$ as $P$ terminates in less than $T(N)$ steps on $N$, but by
construction $f(N)$ is $1-V$, which means we have a contradiction.
Therefore, no RAM can compute $f$ in time~$o(T(N))$.

\end{proof}

The intuition of the proof of Theorem~\ref{th:NonPol-CstPP} is that the $\cpp_{t(N)}$ classes are framed by classical time complexity classes to which the Time Hierarchy Theorem~\ref{hierarchyTh} applies. We are now ready to give the proof of \ref{th:NonPol-CstPP}.

\begin{proof}[Proof of Theorem~\ref{th:NonPol-CstPP}]
By Theorem~\ref{hierarchyTh}, there exist functions
$\mathbb{N}\to\{0,1\}$ computable in time $O(x^2)$ but not in time
$o(x^2)$ because the function $x\mapsto x^2$ is time constructible,
let $f$ be such a function.

It is clear that $f\not\in \cpp_N$, as otherwise we would have a RAM
computing $f(N)$ in $O(N)=o(N^2)$ (using a linear preprocessing and a
constant-time evaluation).

To finish the proof, we need to show that $f\in \cpp_{T(N)}$, which we
do by examining the following RAM program:
\begin{itemize}
  \item \emph{Preprocessing:} 
From the reference integer $N\ge 2$, construct and store an array $A[0..N^d-1]$ by computing $f(x)$ for each index $x<N^d$ and setting $A[x] \gets f(x)$ ; 
\item \emph{Computation:} Read an integer $x<N^d$ and return the value $A[x]$.
\end{itemize}

This preprocessing computes $N^d$ values for $f$, each in time $O(
(N^d)^2)$, therefore the total preprocessing time is $O(N^{3d})$ but
$T(N)$ grows faster than any polynomial function and therefore we have
that the preprocessing is indeed in time $O(T(N))$ which proves that
$f\in\cpp_{T(N)}$.
\end{proof}

%%%%%%%%%%%%%%%%%%%%%%%%%%%%%%% 
%%%%%%%%%%%%%%%%%%%%%%%%%%%%%%%

\bibliographystyle{plain}
\bibliography{biblioConstTime}

\end{document}